\documentclass[11pt]{article}
\usepackage[margin=1in]{geometry}
\usepackage[T1]{fontenc}
\usepackage{lmodern,amsmath,amssymb,amsthm,mathtools}
\usepackage{microtype,needspace}
\usepackage[colorlinks=true,linkcolor=blue,citecolor=blue,urlcolor=blue]{hyperref}
\newtheorem{theorem}{Theorem}[section]
\newtheorem{lemma}[theorem]{Lemma}
\newtheorem{corollary}[theorem]{Corollary}
\newcommand{\F}{\mathbb F}
\newcommand{\PP}{\mathbb P}

\DeclareMathOperator{\RS}{RS}
\DeclareMathOperator{\rank}{rank}
\DeclareMathOperator{\PG}{PG}
\title{Deep holes of full-length Reed--Solomon codes}
\author{Bocong Chen$^1$ and Jing Huang$^2$\thanks{E-mail addresses:
\href{mailto:mabcchen@scut.edu.cn}{\texttt{mabcchen@scut.edu.cn}} (B.~Chen) and
\href{jhuangmath@gzhu.edu.cn}{\texttt{jhuangmath@gzhu.edu.cn}} (J. ~Huang).}\\[0.8em]
\begin{minipage}{\dimexpr\textwidth-2\tabcolsep\relax}
\centering\footnotesize
$^1$School of Mathematics, South China University of Technology, Guangzhou 510641, China\\
$^2$School of Mathematics and Information Science, Guangzhou University, Guangzhou 510006, China
\end{minipage}}
\date{}

\begin{document}
\maketitle
\begin{abstract}
For every prime power $q>19683$, we settle the Cheng--Murray deep-hole
problem for
full-length Reed--Solomon codes in every dimension $2\le k\le q-2$,
with the characteristic-two cases stated explicitly.
For odd $q$, the deep holes are precisely the words of degree $k$.
For even $q$ and $k\ge3$, the same conclusion holds except for
an additional family when $k=q-3$. In even characteristic and dimension two, every deep
hole has a unique expression $a h(X)+bX+c$, where $a\ne0$ and
$h$ is a normalized $o$-polynomial. The proof combines quadratic quotient algebras,
graph-theoretic weighted independent-set inequalities, and planar arc completion.
We also obtain normal rational curve rigidity in the corresponding
nonexceptional dimensions, the projective Reed--Solomon covering
radii, and distinct-coordinate solvability and remainder criteria.
\end{abstract}
\begin{quote}\small
\noindent\textbf{Keywords:} Deep hole; Reed--Solomon code; covering radius;
normal rational curve; arc; $o$-polynomial.\par
\noindent\textbf{2020 Mathematics Subject Classification:}
Primary 94B05; Secondary 51E20, 51E21.
\end{quote}

\section{Introduction}
Let $\F_q$ be the finite field with order $q$ and let $C$ be a linear $[n,k]$ code
over $\F_q$, i.e.,    a $k$-dimensional subspace of $\F_q^n$. The Hamming
distance $d(u,v)$ counts the coordinates at which $u$ and $v$ differ.
For a code $C$, let
\[
 d(u,C)=\min_{c\in C}d(u,c),\qquad
 \rho(C)=\max_{u\in\F_q^n}d(u,C).
\]
The number $\rho(C)$ is the \emph{covering radius}, and a word $u$
with $d(u,C)=\rho(C)$ is a \emph{deep hole}. An $[n,k]$ code is
\emph{maximum distance separable} (MDS) if its minimum distance is
$n-k+1$, attaining the Singleton bound.
We study the full-length, or standard, Reed--Solomon code
\[
 C=\RS(\F_q,k)=\{(f(a))_{a\in\F_q}:f\in\F_q[X],\ \deg f<k\},
 \qquad 2\le k\le q-2.
\]
It is an $[q,k]$ MDS code. Every word  has a unique interpolation
polynomial of degree at most $q-1$, and we use $u$ for both the word
and this polynomial. Interpolation at $k$ positions gives
$d(u,C)\le q-k$. If $\deg u=k$, the root bound gives equality,
giving $\rho(C)=q-k$. We call these degree-$k$ words \emph{canonical},
and write them as
\begin{equation}\label{eq:canonical}
 u(X)=\lambda X^k+f(X),\qquad \lambda\in\F_q^*,\quad\deg f<k.
\end{equation}
We also consider the projective Reed--Solomon code
\begin{equation}\label{eq:prs-definition}
 \operatorname{PRS}(\F_q,k)
 =\{((f(a))_{a\in\F_q},[X^{k-1}]f):\deg f<k\},
\end{equation}
where $[X^j]f$ denotes the coefficient of $X^j$ in $f$.
This is an $[q+1,k]$ MDS code; its last coordinate is the evaluation
at infinity.

The geometric terminology makes the consequences of the classification
transparent. An \emph{$n$-arc} in $\PG(k-1,q)$ is a set of $n\ge k$
points such that representatives of any $k$ of them are linearly
independent. It is \emph{complete} if no further point can be added
while preserving this property. For $m\ge2$, a \emph{normal rational curve} (NRC) in
$\PG(m-1,q)$ is a point set projectively equivalent to
\begin{equation}\label{eq:nrc-definition}
 \{(1:a:\cdots:a^{m-1}):a\in\F_q\}
 \ \cup\ \{(0:\cdots:0:1)\}.
\end{equation}
The columns of a generator matrix of an $[n,k]$ MDS code give an
$n$-arc; the code in \eqref{eq:prs-definition} gives an NRC.
A planar $(q+1)$-arc is called an \emph{oval}; a planar $(q+2)$-arc,
which can occur only for even $q$, is a \emph{hyperoval}.
For even $q\ge4$, we write $\mathcal O_q$ for the normalized
\emph{$o$-polynomials}. Explicitly, $h\in\mathcal O_q$ means that
$h\in\F_q[X]$ has degree at most $q-1$, induces a permutation of
$\F_q$, satisfies $h(0)=0$ and $h(1)=1$, and, for every $s\in\F_q$,
the map
\begin{equation}\label{eq:o-polynomial}
 h_s(z)=\begin{cases}
 (h(s+z)+h(s))/z,&z\ne0,\\
 0,&z=0
 \end{cases}
\end{equation}
is a permutation of $\F_q$. This is the usual polynomial description
of hyperovals: the corresponding point set is
\[
 \{(1:x:h(x)):x\in\F_q\}\cup\{(0:1:0),(0:0:1)\}.
\]
See \cite[Definition~1.1]{CSHyperovals} and
\cite[Theorem~2.1]{KKHyperovals}. In particular, $X^2\in\mathcal O_q$.

Deep holes describe the extremal distance behaviour of a code: they
are precisely the received words for which the nearest codeword is
as far away as possible. Their classification is therefore a natural
structural question in decoding theory. The difficulty of general
Reed--Solomon decoding, established by Guruswami and Vardy
\cite{GV}, makes the additional structure supplied by the full
evaluation set especially relevant. In this setting the root bound
already explains one family of extremal words, and the central
question is whether any others exist. The MDS correspondence gives
a second motivation: the same question asks which extra points can
be attached to $q$ points of an NRC.

Cheng and Murray~\cite{CM} approached deep-hole exclusion by seeking
rational points with distinct coordinates on suitable hypersurfaces.
The full-field conjecture attributed to them asserts, in odd
characteristic, that the canonical words are the only deep holes;
see \cite[Conjecture~1]{ZCL} and
\cite[Conjecture~1]{Kaipa}. Here the evaluation set is
$\F_q$; the related problem with evaluation set $\F_q^*$ has
different exceptional families. Early progress largely concerned
words of prescribed degree. Li and Wan~\cite{LW} used character
sums to obtain error-distance estimates, Liao~\cite{Liao} refined
such estimates, and Cafure, Matera and Privitelli~\cite{CMP}
studied the singularities of the associated symmetric hypersurfaces.
Zhu and Wan~\cite{ZW} combined character sums with a sieve for
distinct coordinates, while Li and Zhu~\cite{LZ} determined error
distances for words of degrees $k+1$ and $k+2$ under their stated
hypotheses. These results excluded substantial families of potential
deep holes, with conditions relating $q$, $k$ and $\deg u$.

A complementary line of work classifies all deep holes in ranges
of dimensions. Zhuang, Cheng and Li
\cite[Theorem~9]{ZCL} used deep-hole trees and additive
combinatorics to classify them over an odd prime field when
$k\ge(p-1)/2$, for arbitrary evaluation sets. Their finite-geometric
argument also proves the full-field classification in odd
characteristic $p$ when $k+1\le p$, and in the range
$3\le q-p+1\le k+1\le q-2$
\cite[Theorem~10]{ZCL}; in particular, the prime-field
case is settled. Kaipa's syndrome correspondence
\cite[Proposition~1]{Kaipa} identifies deep-hole
cosets modulo the code with one-coordinate MDS extensions of the dual code.
Applying the extension theorem of Seroussi and Roth~\cite{SR},
he obtains the classification for arbitrary prime powers when
$k\ge\lfloor(q-1)/2\rfloor$, including the nucleus family in
even characteristic \cite[Theorems~1--2 and Corollary~1]{Kaipa}.

The geometry underlying this approach has its own development.
Segre proved that every oval over a field of odd order is a conic
\cite{Segre55}; this is equivalent to the full-field deep-hole
statement for $k=2$ and odd $q$
\cite[Proposition~2]{Kaipa}. His higher-dimensional work
also shows that, for odd $q\ge5$, every $(q+1)$-arc in $\PG(3,q)$
is a twisted cubic \cite{SegreCurves,BLsurvey}.
For odd squares $q=p^{2h}$, Ball and Lavrauw prove that every
$(q+1)$-arc in $\PG(m-1,q)$ is an NRC when
$m\le\sqrt q-\sqrt q/p+1$ \cite[Corollary~3]{BL}.
Their planar-arc bounds, together with Voloch's bounds for nonsquare
orders \cite[Theorems~2--3]{Voloch91}, supply quantitative control
after projection.

A related extension question asks whether the entire NRC is complete.
This was studied by Storme~\cite{Storme}, and further restrictions on arcs and NRC
subsets were obtained by Ball~\cite{Ball12} and Ball and
De Beule~\cite{BDB}. D\"ur~\cite{Dur94} related this completeness
problem to the covering radius of projective Reed--Solomon codes;
Zhang, Wan and Kaipa~\cite{ZWK} developed the projective deep-hole
problem further. These results explain why planar completion and
MDS extension are effective at opposite ends of the dimension range.
A uniform treatment of the intervening dimensions is needed to
join them.

We supply that treatment for every prime power $q>19683$.
In this range, the following theorem completely resolves the Cheng--Murray
conjecture in odd characteristic for every $2\le k\le q-2$,
and gives its precise all-characteristic counterpart. Two phenomena
must be distinguished in characteristic two: the nucleus family
at $k=q-3$, and the hyperoval description at $k=2$.

\Needspace{18\baselineskip}
\begin{theorem}\label{thm:main}
Let $q$ be a prime power.
\begin{enumerate}
\item If $q>19683$ is odd and $2\le k\le q-2$, all deep holes of $\RS(\F_q,k)$ 
are precisely  of the form \eqref{eq:canonical}.  
\item If $q>19683$ is even and $3\le k\le q-2$, the deep holes are precisely
\eqref{eq:canonical}, together with the additional family
\begin{equation}\label{eq:nucleus-family}
 u(X)=\lambda X^{q-2}+f(X),\qquad\lambda\ne0,\quad\deg f<q-3,
\end{equation}
when $k=q-3$, and only then.
\item If $q\geq4$ is even and $k=2$, a word is a deep hole of
$\RS(\F_q,2)$ if and only if it has the unique form
\[
 u(X)=a h(X)+bX+c,\qquad
 a\in\F_q^*,\quad b,c\in\F_q,\quad h\in\mathcal O_q.
\]
\end{enumerate}
\end{theorem}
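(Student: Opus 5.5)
The plan is to use Kaipa's syndrome correspondence \cite[Proposition~1]{Kaipa}. It turns the classification into a statement about one-point extensions of arcs. A word $u$ is a deep hole of $\RS(\F_q,k)$ exactly when appending its syndrome column to a parity-check matrix of $C$ gives a $(q+1)$-arc in $\PG(q-k-1,q)$. The other $q$ points of this arc are the points $(1:a:\cdots:a^{q-k-1})$, $a\in\F_q$, of an NRC with its point at infinity removed. The syndrome column is determined by the coefficients of $u$ in degrees $k,\dots,q-1$, i.e.\ by the class of $u$ modulo $C$. So we must classify the points $P$ for which $\{(1:a:\cdots:a^{m-1})\}\cup\{P\}$ is an arc, where $m=q-k$. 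The canonical words \eqref{eq:canonical} correspond to $P=(0:\cdots:0:1)$. We must show this is the only completion, apart from the nucleus case.

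\textbf{Large $k$ (small $m$).} For $m=2$ ($k=q-2$) every extra point works trivially, and every word of degree $q-1$ with nonzero $X^{q-2}$ term is... here we simply check the statement directly. For $m=3$ ($k=q-3$) the point set is a conic minus one point. Segre's theorem (odd $q$) and the nucleus (even $q$) give exactly the infinity point, plus the nucleus family \eqref{eq:nucleus-family} when $q$ is even. For $k\ge\lfloor(q-1)/2\rfloor$ we cite Kaipa's classification \cite[Theorems~1--2]{Kaipa}.

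\textbf{Small $k$ (large $m$).} For $k=2$ we dualize instead. Deep holes of $\RS(\F_q,2)$ are the $u$ such that $\{(1:x:u(x))\}$ is a $q$-arc with no three collinear points, i.e.\ every $u_s$ is injective. For odd $q$ the $q$-arc extends to an oval, which is a conic by Segre; by removing $bX+c$ this forces $u=aX^2+bX+c$. For even $q$ the arc completes uniquely to a hyperoval containing $(0:0:1)$ and $(0:1:0)$. Normalizing by subtracting $u(0)+(u(1)-u(0))X$ and scaling by $a=u(1)-u(0)\ne0$ yields $h\in\mathcal O_q$, and this normalization also gives uniqueness. Here $a\neq0$ because $u$ is injective; the injectivity of $u$ itself is checked via the slope at infinity.

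\textbf{Middle range.} The remaining range is $3\le k<\lfloor(q-1)/2\rfloor$. Here I would project from subsets of NRC points to reduce to the plane. Suppose $u$ is a non-canonical deep hole with $\deg u=d>k$. Dividing by products $\prod_{a\in S}(X-a)$ over $(k-2)$-subsets $S$ gives quadratic quotients, so each such projection is a planar arc of size $q-k+3$ or so. By planar arc completion (the Segre bound for odd $q$, and the Ball--Lavrauw and Voloch bounds for $q>19683$), each such arc lies in a unique conic. These conics are compatible across overlapping $S$. Using the quotient algebra $\F_q[X]/(g)$ with $g$ quadratic, one forces $\deg u=k$. The weighted independent-set inequality then counts the ``bad'' subsets $S$ where completion fails. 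The goal is to show that the bad subsets cannot cover all configurations when $q$ is large.

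The main obstacle is this middle-range gluing step. The difficulty is making the planar completion bounds, which need an arc of size roughly $q-\sqrt q/2$, uniform in $k$ all the way up to $q/2$. The bound $19683=3^9$ is presumably where the Voloch and Ball--Lavrauw thresholds and the counting inequality meet. I would first try the inequality via Tur\'an-type counting on a graph whose vertices are the points $a\in\F_q$ and whose edges record failures of collinearity.
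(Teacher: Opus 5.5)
\textbf{The middle range has a genuine gap, and most of the paper's work lies there.} Projecting from a $(k-2)$-set $E$ gives the planar arc $A_E$ of size $q-k+3$. Every completion theorem you invoke (Segre, Ball--Lavrauw, Voloch, Segre's hyperoval bound) needs an arc of size $q-o(q)$. So this route only yields the geometric range $k\le K(q)$ of \eqref{eq:geometric-K}, which is of order $\sqrt q$ and at most about $q^{2/3}$. For $K(q)<k<\lfloor(q-1)/2\rfloor$ the projections lie far below every threshold. There are then no conics to glue, and ``counting the bad $S$ where completion fails'' has nothing to act on, because completion is available for no $S$.

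The missing idea is to abandon planar geometry in this range and argue directly with the functional $\mathcal L$ of \eqref{eq:NV}.
\begin{itemize}
\item Lower bound. A noncanonical deep hole produces at least $H_q(n-H_q)$ projectively distinct nonsingular quartic contractions $\mathcal L(p_D\,\cdot)$ surviving any prescribed set of at most $k+2$ points (Lemma~\ref{thm:fiber-budget}). This count is of order $q^2$ when $k\approx\sqrt q$. This is where the quadratic quotient algebra enters: it bounds the kernel classes of Lemma~\ref{lem:kernel-omission} by $\lfloor(q-5)/2\rfloor$. It is not used to force $\deg u=k$.
\item Upper bound. For a fixed nonsingular quartic $\mathcal F$, each usable anchor pair $C$ gives a projective involution $J_C$ with $\mathcal F(p_{C\cup\{x,J_C(x)\}})=0$. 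Hence survival of $T\cup R$ forces $R$ to be independent in the union of the orbit graphs. The Sah--Sawhney--Stoner--Zhao inequality, together with the second-moment count of anchor survivors, makes the average number of survivors of a random set of size about $\frac45\sqrt q$ smaller than the lower bound.
\item Propagation. The transport inequality of Lemma~\ref{thm:detector-transport} carries the resulting contradiction up to $k=\lfloor(q-1)/2\rfloor-1$.
\end{itemize}
The hypothesis $q>19683$ is what makes this starting dimension meet the geometric endpoint.

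\textbf{Three further steps are wrong as written.}
\begin{itemize}
\item Even characteristic, low range. A completed projection is a hyperoval, which need not be regular, so ``each such arc lies in a unique conic'' fails. The paper instead shows that each hyperoval completion puts $u$ in $\mathcal P_{k-3}+p_E\mathcal W_q$. It then intersects these spaces over the $k-1$ sets $E=B\setminus\{b\}$ to obtain $\mathcal P_k$ (Lemma~\ref{thm:common-intersection}).
\item Part~(iii). The nucleus of $\{(1:x:u(x))\}\cup\{(0:0:1)\}$ is $(0:1:b)$, where $b$ is the common missing chord slope, not $(0:1:0)$. Also $u$ need not be injective: $u=X^2+X$ is deep with $u(0)=u(1)=0$. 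Your $g=u-u(0)-(u(1)-u(0))X$ always satisfies $g(0)=g(1)=0$, so it is never a nonzero multiple of an element of $\mathcal O_q$. You must subtract $bX+u(0)$ and then scale by the value at $1$. Uniqueness additionally needs that every $h\in\mathcal O_q$ has missing slope $0$, i.e.\ $b=[X]u$.
\item The case $m=2$. The only admissible extra point in $\PG(1,q)$ is $(0:1)$, not every point. Kaipa's range covers this case anyway.
\end{itemize}
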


Thus the conjectured canonical classification is settled throughout
the stated field-order range, with the characteristic-two cases
specified explicitly. For even $q$ and $k=2$, part~(iii) gives
an exact and unique parametrization by $\mathcal O_q$, valid for
every even $q\ge4$. It does not provide an explicit list of the
$o$-polynomials in $\mathcal O_q$. Determining their possible forms
in general is tied to the classification of hyperovals, a separate
and difficult problem in finite geometry; that classification is
not supplied here. In \eqref{eq:nucleus-family}, the coefficient
of $X^{q-3}$ is zero;
the exceptional family does not consist of all polynomials of
degree $q-2$.

The restriction $q>19683$ is a sufficient bound for joining the
geometric and analytic ranges in parts~(i)--(ii). The proofs in this
paper do not establish the all-dimensional classification for the
remaining field orders. Removing this restriction would require
additional arguments or complete, independently verifiable
computational certificates. No such computational extension is
asserted here. Part~(iii), however, holds for every even $q\ge4$,
as proved directly in Section~\ref{sec:dimension-two}.

The proof combines the two approaches described above. A
noncanonical deep hole yields many distinct nonsingular quartic
functionals. Their quadratic contractions define projective
involutions, whose orbits form graphs on the field. The weighted
independent-set inequality of Sah, Sawhney, Stoner and Zhao
\cite[Theorem~1.5]{SSSZ} bounds the number of these
functionals that survive a small set. Multiplicity estimates and
averaging then exclude the intermediate dimensions. In the lower
dimensions, prescribed planar projections and a common annihilator
identity transfer conic or hyperoval completion back to the word.
Explicit inequalities make the two ranges overlap for $q>19683$.

We record four consequences to make the geometric and polynomial
content of the theorem explicit. Their proofs are collected in
Appendix~\ref{app:corollaries}. The first uses the syndrome
correspondence of \cite[Proposition~1]{Kaipa} and gives the NRC
rigidity statement in its applicable dimensions.
\begin{corollary}\label{cor:nrc-rigidity}
Let $q>19683$ and $2\le k\le q-2$, and let $\mathcal A$ be a
$(q+1)$-arc in $\PG(k-1,q)$ containing $q$ points of an NRC
$\mathcal N$. If $q$ is odd, or if $q$ is even and
$k\notin\{3,q-2\}$, then $\mathcal A=\mathcal N$.
If $q$ is even and $k=3$, the extra point is either the missing
point of the conic or its nucleus, the common point of its tangents.
\end{corollary}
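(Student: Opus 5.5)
The plan is to reduce the geometric statement to Theorem~\ref{thm:main} through Kaipa's syndrome correspondence \cite[Proposition~1]{Kaipa}, and then read the answer off the list of deep holes.

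First I normalize. By projective equivalence we may take $\mathcal N$ to be the NRC \eqref{eq:nrc-definition} in $\PG(k-1,q)$. Let $\mathcal A=S\cup\{P\}$, where $S\subset\mathcal N$ has $q$ points. The group $\mathrm{PGL}(2,q)$ acts $3$-transitively on $\mathcal N$ and extends to collineations of $\PG(k-1,q)$ preserving $\mathcal N$. We may therefore assume that the missing point of $\mathcal N$ is $(0:\cdots:0:1)$, so that $S=\{(1:a:\cdots:a^{k-1}):a\in\F_q\}$.

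These $q$ columns form a generator matrix of $\RS(\F_q,k)$. Adjoining a column $P$ so that the $q+1$ columns form an arc is the same as a one-coordinate MDS extension of $\RS(\F_q,k)$. Kaipa's correspondence identifies such extensions with deep-hole cosets of the dual code $\RS(\F_q,k)^\perp$. On the full field this dual is $\RS(\F_q,q-k)$, up to the trivial column multiplier, which is constant here since $\sum_{a\in\F_q}a^j$ vanishes for $j<q-1$. So we need the deep holes of $\RS(\F_q,q-k)$ modulo the code, where $2\le q-k\le q-2$.

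Next I apply Theorem~\ref{thm:main} with $k'=q-k$. If $q$ is odd, or $q$ is even with $k'\ge3$ and $k'\ne q-3$ (that is, $k\notin\{2,3\}$), the only deep-hole coset, up to the scalar $\lambda$, is that of $X^{k'}$. The case $k=2$ is trivial, since a $(q+1)$-arc on a projective line is the whole line. It remains to compute which point $P$ the coset of $X^{k'}$ produces. Kaipa's map sends the syndrome of $u$ to the column $P$, given by the parity-check values $\sum_a u(a)a^j$ for $0\le j\le k-1$. For $u=X^{q-k}$ we get $\sum_a a^{q-k+j}$, which is nonzero (equal to $-1$) only when $q-k+j=q-1$, that is, $j=k-1$. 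Hence $P=(0:\cdots:0:1)$, the missing point of $\mathcal N$, and $\mathcal A=\mathcal N$. The exclusion of $k=q-2$ for even $q$ is exactly the exclusion of $k'=2$, the hyperoval case of part~(iii).

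For even $q$ and $k=3$ we have $k'=q-3$. Here part~(ii) supplies one extra coset, represented by $X^{q-2}$. Its syndrome is $\sum_a a^{q-2+j}$ for $j=0,1,2$, which is nonzero only at $j=1$. This gives the point $(0:1:0)$. For the conic $\{(1:a:a^2)\}\cup\{(0:0:1)\}$, that point is the common point of all tangents $a^2X_0+X_2=0$ (in characteristic two), i.e.\ the nucleus. Both possibilities do occur, since the conic plus its nucleus is a hyperoval.

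The main obstacle is the precise bookkeeping in Kaipa's correspondence. One must check that it sends the coset of $u$ to exactly the column given by the above power sums, with the right dual and normalization, rather than to some twisted version. I would verify this directly from the parity-check formulation: $u$ is a deep hole precisely when the syndrome of $u$, adjoined as a column to the parity-check matrix of the dual, keeps every $k$ columns independent. That is a $(q+1)$-arc condition on $S\cup\{P\}$.
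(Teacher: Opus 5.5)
Your proposal is correct and follows the paper's own proof. Both normalize the missing point to $(0:\cdots:0:1)$, read the $q$ NRC columns as a parity-check matrix of $\RS(\F_q,q-k)$, apply the syndrome correspondence and Theorem~\ref{thm:main} in dimension $q-k$, and compute the syndromes $(0,\ldots,0,-\lambda)^{\mathsf T}$ (the missing point) and $(0,-\lambda,0)^{\mathsf T}$ (the nucleus). One slip: the parenthetical translation of ``$k'\ge3$ and $k'\ne q-3$'' should read $k\notin\{3,q-2\}$, not $k\notin\{2,3\}$; you state this correctly yourself later, when you identify $k=q-2$ with the hyperoval case $k'=2$.
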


The exclusion $k=q-2$ in even characteristic corresponds to
part~(iii) of the theorem and cannot be suppressed. For example,
when $q=2^{15}>19683$, the polynomial $X^4$ belongs to
$\mathcal O_q$, since the maps in \eqref{eq:o-polynomial} are
$z\mapsto z^3$ and $\gcd(3,2^{15}-1)=1$. Its noncanonical deep hole gives a completion
outside the NRC in that dimension. 

Completeness of the whole NRC
is a weaker question and still has the following uniform answer.
D\"ur's criterion, in the form stated in
\cite[Section~IV]{Kaipa}, translates it into a covering
radius statement.
\begin{corollary}\label{cor:prs-radius}
Let $q>19683$ and $2\le k\le q-2$. Then
\[
 \rho\bigl(\operatorname{PRS}(\F_q,k)\bigr)
 =\begin{cases}
 q+1-k,&q\text{ even and }k\in\{2,q-2\},\\
 q-k,&\text{otherwise}.
 \end{cases}
\]
Equivalently, for $2\le k\le q-1$, every NRC in
$\PG(k-1,q)$ is complete, except when $q$ is even and
$k\in\{3,q-1\}$.
\end{corollary}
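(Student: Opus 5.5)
The plan is to deduce Corollary~\ref{cor:prs-radius} from Theorem~\ref{thm:main} by comparing deep holes of $\RS(\F_q,k)$ with words of $\operatorname{PRS}(\F_q,k)$. First, the lower bound. $\operatorname{PRS}(\F_q,k)$ is an $[q+1,k]$ MDS code, so interpolation gives $\rho\le q+1-k$. Take a word $(u,\beta)\in\F_q^{q}\times\F_q$ with $\deg u=k$, for instance $u=X^k$. If a codeword $f$ ($\deg f<k$) agreed with $u$ at $k$ affine points, then $u-f$ would have $k$ roots and degree $k$, which is possible. So the reasoning must be organised differently: $d((u,\beta),\operatorname{PRS})=\min_f\bigl(d(u,f)+[\beta\ne [X^{k-1}]f]\bigr)$. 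Since $d(u,f)\ge q-k$ for every $f$, we get $\rho\ge q-k$ in all cases.

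For the upper bound, suppose $d((u,\beta),\operatorname{PRS})=q+1-k$. Then $d(u,f)\ge q-k$ for all $f$, so $u$ is a deep hole of $\RS(\F_q,k)$. Moreover, every $f$ with $d(u,f)=q-k$ must satisfy $[X^{k-1}]f\ne\beta$. Now run through the classification in Theorem~\ref{thm:main}.

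\emph{Canonical case.} Let $u=\lambda X^k+g$. The codewords at distance $q-k$ are exactly those $f$ with $u-f=\lambda\prod_{i=1}^{k}(X-a_i)$ for distinct $a_i\in\F_q$. Then $[X^{k-1}]f=[X^{k-1}]g+\lambda\sum a_i$. Since $1\le k\le q-2<q$, the sums of $k$ distinct field elements cover all of $\F_q$: swapping one element for an unused one shifts the sum by an arbitrary nonzero difference within the available set. So every $\beta$ is attained, and the distance is $q-k$.

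\emph{Nucleus family} ($k=q-3$, $q$ even). Here $u-f=\lambda X^{q-2}+(\cdot)$ has no $X^{q-3}$ term and vanishes on exactly $k=q-3$ points. One checks that the three non-roots are forced to sum in a way that determines $[X^{k-1}]f$. I would verify that every $\beta$ is still hit; the expected answer is $q-k$ there, consistent with the claim that the exception sits only at $k=q-2$.

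\emph{Case $k=2$, $q$ even.} Take $u=ah+bX+c$. We need every $\beta$ to be the slope of some line meeting the graph of $u$ in $2$ points. The slopes of secants of $h$ are the values $h_s(z)$ with $z\ne0$. Since each $h_s$ is a permutation with $h_s(0)=0$, these slopes are exactly the nonzero elements. Hence $\beta=b$ (slope $0$ for $h$) is never attained, and $\rho=q-1=q+1-k$.

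\emph{Case $k=q-2$, $q$ even.} This should be handled by duality. The dual of $\operatorname{PRS}(\F_q,k)$ is $\operatorname{PRS}(\F_q,q+1-k)$ up to monomial equivalence. Alternatively, use D\"ur's criterion as stated in \cite[Section~IV]{Kaipa}: $\rho(\operatorname{PRS}(\F_q,k))=q+1-k$ iff the NRC in $\PG(q-k,q)$ is incomplete. This turns the case into completeness of the conic in $\PG(2,q)$, which fails because of the nucleus.

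The same criterion gives the stated equivalence directly. Setting $m=q+1-k$, the range $2\le k\le q-1$ corresponds to NRCs in $\PG(m-1,q)$, and the exceptional pair $k\in\{2,q-2\}$ maps to dimensions $\{q-1,3\}$. I would actually use D\"ur's criterion for the whole proof. It reduces the claim to: the NRC in $\PG(m-1,q)$ is complete iff no $(q+2)$-arc contains it. For $3\le m\le q-2$, Corollary~\ref{cor:nrc-rigidity} applied to $q+1$ of its points forbids any extension, except in the even cases $m\in\{3,q-2\}$. For $m=3$ the nucleus extends the conic; for $m=q-2$ I would argue via the $k=2$ computation above. The main obstacle is the bookkeeping of these dimension shifts and the boundary values $k=q-1$ and $m=2$, which need separate trivial checks.
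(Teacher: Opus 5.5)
Your final plan (Corollary~\ref{cor:nrc-rigidity} plus D\"ur's criterion, with direct checks at the boundary) is the paper's route, but the boundary bookkeeping is wrong. With $m=q+1-k$, the NRC with $m=q-2$ and $q$ even is \emph{not} one of the exceptions. It must be shown \emph{complete}, and it corresponds to PRS dimension $3$, not $2$. The $k=2$ computation you invoke proves $\rho(\operatorname{PRS}(\F_q,2))=q-1$, which is \emph{incompleteness} at $m=q-1$, so it cannot settle $m=q-2$. What is needed there is part~(ii) of Theorem~\ref{thm:main} at $k=3$: every deep hole is $\lambda X^3+f$, since $3\ne q-3$. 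Combine this with translating a $3$-set $B$ so that $\sum_{b\in B}b$ takes any prescribed value, which works because $2\nmid 3$. Together these give $\rho(\operatorname{PRS}(\F_q,3))=q-3$. Similarly, $m=q-1$ for odd $q$ is not a trivial check. It is the statement $\rho(\operatorname{PRS}(\F_q,2))=q-2$, which needs part~(i) of Theorem~\ref{thm:main} at $k=2$ and the same translation argument for $2$-sets. Only $m=2$ is trivial.

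Your direct route has a separate flaw. The claim that sums of $k$ distinct elements cover $\F_q$ for every $1\le k\le q-2$ is false. In characteristic two $a+b\ne0$ for $a\ne b$, so $0$ is never a $2$-subset sum, nor, by complementation, a $(q-2)$-subset sum; your own $k=2$ case with $h=X^2$ contradicts your canonical case. The swapping remark does not prove coverage. Translation $B\mapsto B+t$ shifts the sum by $kt$ and so handles only $p\nmid k$, leaving every $k$ with $p\mid k$ unjustified (e.g.\ every even $k$ when $q$ is even). The nucleus family at $k=q-3$ is also left unchecked. Both can be repaired. For instance, for $u=\lambda X^{q-2}+f_0$ the reachable values of $[X^{q-4}]f$ at distance $3$ are $[X^{q-4}]f_0+\lambda(ab+bc+ca)$, where $\{a,b,c\}$ runs over all $3$-sets (the non-roots of $u-f$), and these exhaust $\F_q$. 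The paper, however, avoids both issues. It uses the sum argument only in the two dimensions $r\in\{2,3\}$ above, where $p\nmid r$. Every other non-exceptional dimension, including $k=q-3$ (dual to the twisted cubic, $m=4$), comes from rigidity.
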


The next consequence expresses deep-hole exclusion as a solvability
statement. Write $e_0=1$ and, for $1\le r\le k$, let
\[
 e_r(x_1,\ldots,x_k)
 =\sum_{1\le i_1<\cdots<i_r\le k}x_{i_1}\cdots x_{i_r}
\]
be the elementary symmetric polynomials. The restriction below
removes exactly the two dimensions in which the main theorem
allows noncanonical deep holes.
\begin{corollary}\label{cor:symmetric-solutions}
Let $q>19683$ and $2\le k\le q-3$; if $q$ is even, assume
$k\notin\{2,q-3\}$. Let $0\le\ell<k$ and
$(\delta_0,\ldots,\delta_k)\in\F_q^{k+1}$ satisfy
$\delta_0=\cdots=\delta_{\ell-1}=0$ and $\delta_\ell\ne0$.
Then
\[
 \sum_{r=0}^{k-\ell}(-1)^r\delta_{k-r}
 e_r(x_1,\ldots,x_k)=0
\]
has a solution in $\F_q^k$ with pairwise distinct coordinates.
In particular, $e_r(x_1,\ldots,x_k)=0$ has such a solution for
every $1\le r\le k$.
\end{corollary}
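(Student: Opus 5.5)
The plan is to deduce Corollary~\ref{cor:symmetric-solutions} from Theorem~\ref{thm:main} by the standard translation between deep holes and distinct-coordinate zeros of symmetric polynomials. The key identity is the following. For pairwise distinct $x_1,\ldots,x_k\in\F_q$, put $g(X)=\prod_{i=1}^k(X-x_i)=\sum_{r=0}^k(-1)^re_rX^{k-r}$. For a word $u$ we seek $f$ with $\deg f<k$ such that $u-f$ vanishes at $k+1$ points of $\F_q$; then $d(u,C)\le q-k-1$, so $u$ is not a deep hole. Since $\rho(C)=q-k$, a word $u$ fails to be a deep hole precisely when some codeword agrees with it in at least $k+1$ positions. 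This happens iff there is a set $S$ of $k+1$ distinct points on which $u$ coincides with a polynomial of degree $<k$. Equivalently, the degree-$k$ coefficient of the Lagrange interpolant of $u$ on $S$ vanishes, i.e. $\sum_{s\in S}u(s)/\prod_{t\in S\setminus\{s\}}(s-t)=0$.

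To get equations in exactly $k$ variables, choose the word
\[
u(X)=\sum_{j=\ell}^{k}\delta_j X^{q-1-k+j}\quad\text{(suitably reindexed)},
\]
or more cleanly, work with $u(X)$ of degree $q-1-\ell$ whose top coefficients are $\delta_k,\ldots,\delta_\ell$ (up to the obvious normalization). Then use the divided-difference identity over all of $\F_q$. Concretely, take $S'$ to be the complement in $\F_q$ of $\{x_1,\ldots,x_k\}$, which has size $q-k$. Agreement of $u$ with a degree-$<k$ polynomial on $\{x_1,\dots,x_k\}\cup\{y\}$ can be rephrased using $\prod_{a\in\F_q}(X-a)=X^q-X$. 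The remainder of $u$ modulo $g$, i.e. its value at $y$ corrected by interpolation, is governed by the coefficient extraction $[X^{q-1}]\bigl(u\cdot (X^q-X)/g\bigr)$-type sums. The plan is to verify that, for $u$ with top coefficients $\delta_j$ in positions $q-1-(k-j)$, the condition that $u$ meets a codeword in $k+1$ points, one of which ranges freely, reduces to
\[
\sum_{r=0}^{k-\ell}(-1)^r\delta_{k-r}e_r(x_1,\dots,x_k)=0.
\]
Expanding $(X^q-X)/g=\sum_{m}h_m(x)X^{q-k-m}+\cdots$ with complete symmetric polynomials, the relation $\sum_m(-1)^m e_m h_{n-m}=0$ converts the $h$'s to $e$'s. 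I expect this bookkeeping of indices, in particular the choice of $u$ so that exactly this linear combination of $e_r$ appears with $\delta_\ell\neq0$ as leading term, to be the main technical step. A cleaner route is to apply Kaipa's syndrome correspondence in dual form: deep holes of $\RS(\F_q,k)$ correspond to deep holes of the dual code $\RS(\F_q,q-k)$. There, the check "$u$ agrees with a codeword on $\{x_1,\ldots,x_k\}\cup$ more" is exactly vanishing of the stated symmetric expression.

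Once the identity is in place, the conclusion is immediate. The word $u$ constructed from $(\delta_j)$ has degree strictly greater than $k$ because $\delta_\ell\ne0$ with $\ell<k$ pushes the degree to $q-1-\ell\ge k+1$ (using $k\le q-3$ to ensure this exceeds $k$ and is not $q-2$ when $\ell=0$ in the relevant normalization). Hence $u$ is not of the form \eqref{eq:canonical}. The hypotheses exclude $k=2$ and $k=q-3$ for even $q$, so Theorem~\ref{thm:main}(i)--(ii) apply and show $u$ is not a deep hole. Note that the nucleus family \eqref{eq:nucleus-family} is exactly the excluded degree-$(q-2)$ case. Therefore a codeword agrees with $u$ in $k+1$ places, producing distinct $x_1,\dots,x_k$ solving the equation. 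The final claim follows by taking $\delta_j=0$ except $\delta_{k-r}=1$, with $\ell=k-r<k$ for $1\le r\le k$.
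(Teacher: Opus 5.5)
Your proposal has a genuine gap: you apply Theorem~\ref{thm:main} to the wrong code. For $C=\RS(\F_q,k)$, non-deepness of $u$ means that $u$ agrees with a polynomial of degree $<k$ on some $(k+1)$-set. By Lemma~\ref{lem:mds} this is equivalent to $\mathcal L(p_A)=0$ for some $A$ of size $t=q-k-1$. That is, in general, an equation in $q-k-1$ unknowns involving the coefficients $[X^{q-1-j}]u$ for $0\le j\le q-k-1$. It is not the displayed equation in $k$ unknowns.

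Your reformulation, agreement on $\{x_1,\ldots,x_k\}\cup\{y\}$ with $y$ free, does not reduce to $\sum_r(-1)^r\delta_{k-r}e_r(x)=0$ either; it involves divided differences of $u$ over $k+1$ points. The set $\{x_1,\ldots,x_k\}$ has to be the \emph{complement} of the agreement set. You do notice the complement $S'$ of size $q-k$. However, agreement of $u$ with a polynomial of degree $<q-k-1$ on a $(q-k)$-set is non-deepness for $\RS(\F_q,q-k-1)$, not for $\RS(\F_q,k)$.

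So the central identity, which you defer as ``bookkeeping'', is never established, and in the framework you set up it does not hold. Your ``cleaner route'' also misstates Kaipa's correspondence. That correspondence relates deep holes of a code to one-point MDS extensions of its dual, not to deep holes of $\RS(\F_q,q-k)$. In any case, the dimension actually needed is $q-k-1$.

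The same mismatch breaks your degree check. The inequality $q-1-\ell\ge k+1$ requires $q\ge 2k+1$, and it fails for example when $k=q-3$ and $\ell=k-1$, where $\deg u=3$.

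The repair is as follows. Put $d=q-k-1$, so $2\le d\le q-3$, and take $u=-\sum_{j=0}^k\delta_jX^{q-1-j}$. This word has degree $q-1-\ell>d$ because $\ell<k$. Apply Theorem~\ref{thm:main} with code dimension $d$. The even-characteristic exclusions $k\notin\{2,q-3\}$ are exactly $d\notin\{q-3,2\}$, because $k\mapsto q-1-k$ swaps $2$ and $q-3$; this is why your exclusion check looked right only by coincidence.

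Then Lemma~\ref{lem:mds} with $t=k$ gives $\mathcal L(X^j)=\delta_j$ for $0\le j\le k$, together with a $k$-set $A$ satisfying $\mathcal L(p_A)=\sum_r(-1)^re_r(A)\delta_{k-r}=0$. This follows directly from $p_A=\sum_r(-1)^re_r(A)X^{k-r}$, with no conversion between complete and elementary symmetric polynomials.
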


Finally, agreement with a polynomial of degree less than $k$
can be expressed through a remainder. This gives a second
polynomial formulation, now with all exceptional words specified.
\begin{corollary}\label{cor:small-remainder}
Let $q>19683$, $2\le k\le q-2$, and
$k+1\le\deg u\le q-1$. There is a $(k+1)$-set
$B\subseteq\F_q$ such that the remainder of $u(X)$ on division
by $\prod_{b\in B}(X-b)$ has degree at most $k-1$, except
precisely in the following cases:
\begin{enumerate}
\item $q$ is even, $k=q-3$, and $u$ has the form
\eqref{eq:nucleus-family};
\item $q$ is even, $k=2$, and $u=a h+bX+c$ for some
$a\in\F_q^*$, $b,c\in\F_q$ and $h\in\mathcal O_q$.
\end{enumerate}
In particular, such a set always exists when $q$ is odd, or when
$q$ is even and $k\notin\{2,q-3\}$.
\end{corollary}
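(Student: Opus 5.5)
The plan is to reduce Corollary~\ref{cor:small-remainder} directly to Theorem~\ref{thm:main}, using the fact that agreement of $u$ with a low-degree polynomial on a $(k+1)$-set is a remainder condition. Write $M_B(X)=\prod_{b\in B}(X-b)$ for a $(k+1)$-set $B$, and let $r_B$ be the remainder of $u$ modulo $M_B$, so $\deg r_B\le k$. For $x\in B$ we have $r_B(x)=u(x)$. Hence $\deg r_B\le k-1$ exactly when the polynomial $g=r_B$, with $\deg g<k$, agrees with $u$ on all of $B$. Conversely, if some $g$ with $\deg g<k$ agrees with $u$ on $B$, then $u-g$ is divisible by $M_B$ modulo $X^q-X$. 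Since $\deg M_B=k+1\le q-1$ and $M_B\mid X^q-X$, the remainder of $u$ on division by $M_B$ equals $g$, which has degree at most $k-1$. Thus such a set $B$ exists if and only if $u$ agrees with some codeword of $\RS(\F_q,k)$ in at least $k+1$ positions, i.e.\ $d(u,C)\le q-k-1$, i.e.\ $u$ is not a deep hole (recall $\rho(C)=q-k$).

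Next I would apply the classification. Because $\deg u\ge k+1$, the word $u$ is not canonical. Parts~(i) and~(ii) of Theorem~\ref{thm:main} then say $u$ is a deep hole only in the nucleus case, with $q$ even, $k=q-3$ and $u$ of the form \eqref{eq:nucleus-family}. In that family $\deg u=q-2\ge k+1$, consistent with the hypothesis. For $q$ even and $k=2$, part~(iii) identifies the deep holes as exactly the words $ah+bX+c$ with $h\in\mathcal O_q$. One checks that these are not canonical when $\deg h\ge3$. The canonical ones, with $h=X^2$, have degree $2=k$ and are excluded by the degree hypothesis, so the exceptions listed in (ii) are precisely those with $\deg u\ge3$. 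Any remaining deep hole of degree $k$ is ruled out by the assumption $\deg u\ge k+1$. Together these give exactly the two exceptional cases, and the ``in particular'' clause follows at once.

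The only delicate point is the remainder argument when $\deg u$ is close to $q-1$: one must use the reduced interpolation polynomial and the fact $M_B\mid X^q-X$, so that working modulo $X^q-X$ does not change remainders modulo $M_B$. Beyond this, the main work is carried by Theorem~\ref{thm:main}.
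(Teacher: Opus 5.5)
Your proof is correct and follows the paper's argument: you identify the existence of $B$ with $u$ agreeing with some codeword of $\RS(\F_q,k)$ on $k+1$ points, i.e.\ with $u$ not being deep, and then read off the exceptions from Theorem~\ref{thm:main} using $\deg u\ge k+1$. Your extra bookkeeping spells out what the paper's three-line proof leaves implicit: the remainder identification, and the observation that the degree-two words $aX^2+bX+c$ fall outside the degree hypothesis. The reduction modulo $X^q-X$ is not needed, since $u-g$ vanishes on $B$ and is therefore divisible by $\prod_{b\in B}(X-b)$ directly.
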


These consequences settle the NRC extension and projective
covering-radius questions in the stated range, and turn the
classification into uniform distinct-coordinate solvability
criteria.

Section~\ref{sec:preliminaries} collects the notation and the
published inputs used below. Section~\ref{sec:contractions}
proves the dimension-two statement and the multiplicity bound
for quartic contractions. Section~\ref{sec:graphs} constructs
the involution detectors. Section~\ref{sec:geometry} joins the
planar bounds with the analytic estimates and completes the
proof of Theorem~\ref{thm:main}. Appendix~\ref{app:corollaries}
contains the proofs of the four corollaries stated above.

\section{Preliminaries}\label{sec:preliminaries}
We introduce the polynomial and geometric notation used throughout
the paper, and recall the deep-hole, planar-arc and graph estimates
needed below. All logarithms are natural.
Sets are denoted by uppercase letters and their elements by lowercase
letters, as in $b\in B$. Linear functionals are denoted by uppercase
calligraphic letters. Put $\mathcal P_d=\F_q[X]_{\le d}$; for $d<q$ we
also regard this space as its evaluation space on $\F_q$.
For $A\subseteq\F_q$, write $p_A(X)=\prod_{a\in A}(X-a)$.
Throughout the contraction argument, set
\begin{equation}\label{eq:functional}
 t=q-k-1,\qquad
 \mathcal L(X^j)=-[X^{q-1-j}]u\quad(0\le j\le t).
\end{equation}
For a functional $\mathcal M$ defined through degree $a+b$, write
$H_{a,b}(\mathcal M)=(\mathcal M(X^{i+j}))_{0\le i\le a,\,0\le j\le b}$;
for a quartic functional, abbreviate $H=H_{2,2}$. A nonzero multiple of
$[X^t]$ is canonical. We use
\begin{equation}\label{eq:characteristic-constants}
 \iota_q=\begin{cases}0,&q\text{ odd},\\1,&q\text{ even},\end{cases}
 \qquad\delta_q=2-\iota_q,\qquad
 H_q=\left\lfloor\frac{q-5}{2}\right\rfloor.
\end{equation}
Let
\begin{equation}\label{eq:fiber-budget-notation}
 Q_{\rm f}(q,k)=H_q\bigl(\lfloor q/2\rfloor-k\bigr).
\end{equation}
Homogenization identifies $\mathcal P_d$ with the binary forms
$\mathcal H_d$ of degree $d$ in $X,Y$; a factor at infinity records a
drop in ordinary degree. For a planar point set $A$, write $I(A)_j$ for the space of
homogeneous degree-$j$ forms vanishing on $A$.
B\'ezout arguments are made over an algebraic closure when needed.

The next characterization is due to Kaipa
\cite[Proposition~1 and equations~(5)--(7)]{Kaipa}. We include its
interpolation proof to fix the functional normalization.
\begin{lemma}
\label{lem:mds}
A word $u$ is deep if and only if
\begin{equation*}\tag{NV}\label{eq:NV}
 \mathcal L(p_A)\ne0\qquad\text{for every }A\in\binom{\F_q}{t}.
\end{equation*}
Equivalently, the columns with rows
$((a^j)_{a\in\F_q},0)$, $0\le j<k$, and
$((u(a))_{a\in\F_q},1)$ form a $(q+1)$-arc in $\PG(k,q)$.
\end{lemma}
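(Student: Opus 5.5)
We prove the lemma by interpolation. Since the covering radius is $q-k$, the word $u$ is \emph{not} deep exactly when some codeword $f$ with $\deg f<k$ agrees with $u$ on at least $k+1$ points of $\F_q$. So fix a $(k+1)$-set $B\subseteq\F_q$ and write $A=\F_q\setminus B$; then $|A|=q-k-1=t$. The word $u$ agrees with a polynomial of degree $<k$ on $B$ exactly when the Lagrange interpolant of $u|_B$ has degree $<k$. That interpolant has degree at most $k$, so the condition is that its coefficient of $X^k$ vanishes. This coefficient is the divided difference
\[
 \sum_{b\in B}\frac{u(b)}{\prod_{b'\in B\setminus\{b\}}(b-b')}=\sum_{b\in B}\frac{u(b)}{p_B'(b)}.
\]
The plan is to express this sum through the coefficients of $u$ and $p_A$, and to identify it with $\mathcal L(p_A)$ up to sign.

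For the key identity, differentiate $p_Ap_B=X^q-X$, which gives $p_A(b)p_B'(b)=-1$ for $b\in B$. Hence $1/p_B'(b)=-p_A(b)$. Since $p_A$ vanishes on $A$, we get
\[
 \sum_{b\in B}\frac{u(b)}{p_B'(b)}=-\sum_{x\in\F_q}u(x)p_A(x).
\]
Next use $\sum_{x\in\F_q}x^m=-1$ if $m\ge1$ and $(q-1)\mid m$, and $0$ otherwise (including $m=0$). With this, $\sum_x g(x)$ equals $-$ times the sum of the coefficients of $g$ at the exponents $q-1$ and $2(q-1)$, after reducing $g$ modulo $X^q-X$. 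Expand with $\deg u\le q-1$ and $\deg p_A=t\le q-1$. The only products $X^iX^j$ with $i+j\equiv0\pmod{q-1}$ and $i+j>0$ are those with $i+j=q-1$, plus $i=j=q-1$. The latter needs $t=q-1$, which is impossible because $k\ge2$. Therefore
\[
 \sum_x u(x)p_A(x)=-\sum_{j=0}^{t}[X^j]p_A\cdot[X^{q-1-j}]u=\mathcal L(p_A),
\]
using the definition \eqref{eq:functional}. Consequently the top interpolation coefficient on $B$ equals $-\mathcal L(p_A)$. Thus $u$ is within distance $q-k-1$ of $C$ if and only if $\mathcal L(p_A)=0$ for some $t$-set $A$, and negating gives \eqref{eq:NV}. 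The only care needed is exponent bookkeeping and the degenerate $x=0$ term, which is handled by the convention $0^0=1$. I expect this to be the main, though minor, obstacle.

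For the arc reformulation, note that the $q+1$ columns lie in $\F_q^{k+1}$. Any $k+1$ columns chosen from the first $q$ involve $k+1$ distinct points of $\F_q$, at which the rows $a^j$ ($0\le j<k$) and $u(a)$ appear. Such columns are independent if and only if no nontrivial combination $cu(X)-f(X)$ with $\deg f<k$ vanishes on those $k+1$ points. When $c=0$, such a combination is excluded by the Vandermonde determinant. When $c\ne0$, its absence is exactly the condition that the divided difference above is nonzero. A set of $k+1$ columns that includes the last column $(0,\dots,0,1)$ reduces to $k$ Vandermonde columns together with this unit vector, and is always independent. So the columns form a $(q+1)$-arc precisely when \eqref{eq:NV} holds.
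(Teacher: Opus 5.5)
Your proposal is correct and follows essentially the same route as the paper. Lagrange interpolation reduces deepness to $\sum_{b\in B}u(b)/p_B'(b)\ne0$, differentiating $p_Ap_B=X^q-X$ turns this into $-\sum_x u(x)p_A(x)$, the power sums (with $\deg(up_A)<2(q-1)$) identify that with $\mathcal L(p_A)$, and the arc version comes from Vandermonde minors for column sets containing the last column and the same interpolation condition otherwise. One cosmetic fix: your computation works with the unreduced product and the exponents $q-1$ and $2(q-1)$, so the phrase ``after reducing $g$ modulo $X^q-X$'' is inaccurate and should be deleted.
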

\begin{proof}
A word is deep exactly when it agrees with no polynomial of degree less
than $k$ on a $(k+1)$-set $B$. Lagrange interpolation expresses this as
$\sum_{b\in B}u(b)/p_B'(b)\ne0$. For $A=\F_q\setminus B$,
differentiating $p_Ap_B=X^q-X$ gives $p_A(b)=-1/p_B'(b)$.
Since $\deg(up_A)<2(q-1)$, the field power sums give
$$
 \mathcal L(p_A)=\sum_{x\in\F_q}u(x)p_A(x)
 =-\sum_{b\in B}\frac{u(b)}{p_B'(b)}.
$$
This proves the criterion. Minors using the last column are Vandermonde
minors; the other minors give the same interpolation condition. 
\end{proof}

The high-dimensional classification is given by Kaipa
\cite[Theorems~1--2 and Corollary~1]{Kaipa}. The characteristic-dependent
range is due to Zhuang, Cheng and Li
\cite[Theorem~10]{ZCL}.
\begin{lemma}\label{lem:high}\label{lem:prime}
The classifications in parts~(i)--(ii) of Theorem~\ref{thm:main}
hold for $k\ge\lfloor(q-1)/2\rfloor$. In odd characteristic $p$,
the canonical classification also holds for $k+1\le p$, and hence
over every odd prime field.
\end{lemma}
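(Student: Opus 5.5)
This lemma is a citation of published results, so the plan is to reduce each part to the cited theorems, Kaipa \cite[Theorems~1--2 and Corollary~1]{Kaipa} and Zhuang--Cheng--Li \cite[Theorem~10]{ZCL}, and check that hypotheses and normalizations match. The main work is translating between our polynomial language and the arc language of those papers through Lemma~\ref{lem:mds}.

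\textbf{Step 1: the high range.} By Lemma~\ref{lem:mds}, $u$ is deep if and only if the $q$ columns of the generator matrix of $\RS(\F_q,k)$, together with the extra column coming from $u$, form a $(q+1)$-arc in $\PG(k,q)$. Dually, this says the $[q,q-k]$ dual code, a generalized Reed--Solomon code, admits a one-coordinate MDS extension. For $k\ge\lfloor(q-1)/2\rfloor$ the dual dimension $q-k$ is at most roughly $q/2$. In this regime the Seroussi--Roth extension theorem \cite{SR} says that an MDS extension of a (generalized) RS code of this length comes only from the point at infinity, except for the nucleus phenomenon in even characteristic. Kaipa's Theorems~1--2 record the resulting deep-hole classification. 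I will check that the extra point at infinity corresponds exactly to the canonical words \eqref{eq:canonical}. Under \eqref{eq:functional}, $\mathcal L$ is then a nonzero multiple of $[X^t]$, and the root bound confirms that these words are deep.

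In even characteristic with $k=q-3$, we have $t=2$, and the nucleus of the conic gives the family \eqref{eq:nucleus-family}. Indeed, $\mathcal L(p_A)$ with $|A|=2$ and $A=\{a,b\}$ equals $\mathcal L(X^2)-(a+b)\mathcal L(X)+ab\,\mathcal L(1)$. Nonvanishing for all $a\ne b$ forces $\mathcal L(X)=0$ when $q$ is even, which is exactly the vanishing of the coefficient of $X^{q-3}$. So the relevant part of Kaipa's Corollary~1 yields the full list in both parities. I will also check the boundary value $k=\lfloor(q-1)/2\rfloor$ against the exact inequality in Kaipa's statement.

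\textbf{Step 2: the small-characteristic range.} In odd characteristic with $k+1\le p$, I will quote \cite[Theorem~10]{ZCL} directly; it gives the canonical classification over the full field $\F_q$. For a prime field $q=p$, either $k+1\le p$ holds, or $k\ge p-1>\lfloor(p-1)/2\rfloor$, which is covered by Step~1. Hence every dimension $2\le k\le q-2$ is covered.

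\textbf{Main obstacle.} The only real obstacle is bookkeeping: making sure Kaipa's normalization (syndromes, dual GRS code, column multipliers) matches the form \eqref{eq:canonical} and the nucleus family exactly, and that the boundary values of $k$ agree with the cited theorems. No new mathematics is needed.
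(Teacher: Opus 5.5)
Your overall route is the paper's. The lemma is recorded purely as an attribution, with no further argument: Kaipa's Theorems~1--2 and Corollary~1 (resting on the syndrome correspondence and the Seroussi--Roth extension theorem) cover $k\ge\lfloor(q-1)/2\rfloor$, and Zhuang--Cheng--Li's Theorem~10 covers $k+1\le p$. In Step~2 your second alternative is vacuous: over $\F_p$ one has $k\le p-2$, so $k+1\le p$ always holds.

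The one independent check you supply, however, is wrong, and as written it contradicts the statement rather than confirming it. For $k=q-3$ and $t=2$, \eqref{eq:functional} gives $\mathcal L(1)=-[X^{q-1}]u$, $\mathcal L(X)=-[X^{q-2}]u$ and $\mathcal L(X^2)=-[X^{q-3}]u$. So a vanishing $X^{q-3}$-coefficient means $\mathcal L(X^2)=0$, not $\mathcal L(X)=0$. More importantly, \eqref{eq:NV} does not force $\mathcal L(X)=0$. For $u=\lambda X^{q-2}+f$ with $\deg f<q-3$ one gets $\mathcal L(p_{\{a,b\}})=\lambda(a+b)\ne0$. Thus the nucleus family consists exactly of the deep words with $\mathcal L(X)\ne0$, and your claim would exclude it.

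The correct statement in characteristic two is this: $\mathcal L(X^2)+(a+b)\mathcal L(X)+ab\,\mathcal L(1)\ne0$ for all $a\ne b$ holds if and only if $\mathcal L(1)=0$ and exactly one of $\mathcal L(X)$, $\mathcal L(X^2)$ is nonzero. The three cases are as follows.
\begin{itemize}
\item If $\mathcal L(X)\ne0\ne\mathcal L(1)$, take any $a$ other than $\mathcal L(X)/\mathcal L(1)$ and $(\mathcal L(X^2)/\mathcal L(1))^{1/2}$. The equation can then be solved for some $b\ne a$, giving a zero.
\item If $\mathcal L(X)\ne0$ and $\mathcal L(1)=0$, the choice $b=a+\mathcal L(X^2)/\mathcal L(X)$ gives a zero unless $\mathcal L(X^2)=0$.
\item If $\mathcal L(X)=0\ne\mathcal L(1)$, a pair $a\ne b$ with $ab=\mathcal L(X^2)/\mathcal L(1)$ exists and gives a zero.
\end{itemize}
Translated back, $[X^{q-1}]u=0$ and exactly one of $[X^{q-2}]u$, $[X^{q-3}]u$ is nonzero. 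These are precisely the canonical words together with \eqref{eq:nucleus-family}.

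With this correction your check is sound, and it in fact proves the case $k=q-3$ directly. Still, the citations alone are all the lemma requires.
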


For the lower dimensions we use planar arc bounds. The following
curve-containment theorem is due to Ball and Lavrauw
\cite[Theorem~1]{BL}.
\begin{lemma}\label{lem:two-curves}
Let $q=p^m$ be odd and let $A$ be a nonconic planar arc of size
$q+2-\tau$. Set $h=p^{\lfloor\log_p\tau\rfloor}$.
If $I(A)_\tau=0$, two coprime curves of degrees at most $\tau+h$
contain $A$. If a specified degree-$\tau$ curve contains $A$ and
\begin{equation}\label{eq:technical}
 h(\tau+h/2+3/2)\le(\tau+2)(\tau+1)/2,
\end{equation}
then a curve of degree at most $\tau+h$, coprime to that curve, contains
$A$.
\end{lemma}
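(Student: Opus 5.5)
The plan is to follow the polynomial-method argument of Ball and Lavrauw. It rests on Segre's lemma of tangents, written in coordinate-free form. Since $|A|=q+2-\tau$, each point $x\in A$ lies on $|A|-1=q+1-\tau$ secants and hence on exactly $\tau$ tangent lines. Let $T_x$ be the product of the linear forms of these tangents, a degree-$\tau$ form in the dual variables. For any three points $x,y,z\in A$, the lemma of tangents gives an identity relating $T_x,T_y,T_z$ evaluated at the lines joining pairs of the three points. In odd characteristic, the scaled version of this identity lets one choose normalizations of the $T_x$ consistently on every subset of $A$ of size $\tau+2$. Interpolation in $\mathcal H_\tau$ then produces, for each such subset, a global form whose restrictions to the points of $A$ recover the $T_x$ up to the consistent scalars.

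First, I use these interpolated forms to build a homogeneous polynomial $\Phi$ in point coordinates that vanishes on $A$. I fix a point of $A$ and write the tangent data there as a form in one variable of degree $\tau$. Set $h=p^{\lfloor\log_p\tau\rfloor}$, the largest power of $p$ not exceeding $\tau$. The map $Y\mapsto Y^h$ is additive, which allows the interpolation identity to be raised to the $h$-th power along the pencil of lines through the fixed point. Multiplying by the degree-$\tau$ compatibility form then gives a form of degree at most $\tau+h$. Checking that this form vanishes on $A$ is a direct substitution into the tangent identity.

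Second, I treat the case $I(A)_\tau=0$. Varying the base point, or the choice of $(\tau+2)$-subset, gives a family of such forms of degree at most $\tau+h$, and I want two coprime members. Suppose every member shared a common component $F$. Since no degree-$\tau$ form vanishes on $A$, the common part cannot account for all of $A$. Counting points on the residual curves with B\'ezout, using $|A|=q+2-\tau$ and the fact that a line meets $A$ in at most two points, then bounds how many points of $A$ the common component can carry. This forces the residual parts to differ and so gives two coprime members.

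Third, I treat the case where a specified degree-$\tau$ curve $G$ contains $A$. Here I need a degree-$(\tau+h)$ curve vanishing on $A$ that is not divisible by any component of $G$ carrying many points of $A$. I count dimensions in $\mathcal H_{\tau+h}$ modulo the multiples of $G$. The construction above imposes roughly $h(\tau+h/2+3/2)$ linear conditions, coming from matching $h$-th power terms at each stage of the pencil. The ambient quotient has dimension at least $(\tau+2)(\tau+1)/2$. Inequality \eqref{eq:technical} is exactly what guarantees a nontrivial element outside the ideal of $G$. The nonconic hypothesis excludes the degenerate situation in which every candidate factors through a conic containing $A$; in that case $A$ would lie on a conic and would therefore be a conic.

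The main obstacle is the second step in characteristic $p$: making the $h$-th power construction produce a form that truly vanishes on $A$, with the claimed degree. I would first do the bookkeeping for a single pencil at one point and only then globalize over different base points. If the direct substitution fails, I would fall back on the original argument of \cite[Theorem~1]{BL}, which this lemma restates.
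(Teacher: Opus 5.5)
The paper does not prove this lemma. It is imported from Ball--Lavrauw \cite[Theorem~1]{BL}, so your closing fallback is the paper's entire argument. The reconstruction you put in front of it, however, would not work as written.

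\textbf{First step.} The form of degree $\tau+h$ is never specified, and the natural readings fail. Raising a degree-$\tau$ tangent relation to the $h$-th power gives degree $\tau h$, not $\tau+h$; additivity of $Y\mapsto Y^h$ does not lower degrees. Data attached to the pencil at a single base point also cannot be shown, by substitution, to vanish at the other points of $A$.

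A mechanism that does produce degree $\tau+h$ is a Hasse-derivative argument applied to the global interpolant. Write the interpolant as $F(X,Y)$, of degree $\tau$ in each set of variables, with $F(x,Y)$ proportional to the product $f_x(Y)$ of the linear forms of the $\tau$ tangents at $x\in A$.
\begin{itemize}
\item Every tangent at $x$ passes through $x$, so $f_x(Y+\lambda x)=f_x(Y)$.
\item Hence, for each $j\ge1$, the $Y$-coefficients of the coefficient of $\lambda^j$ in $F(X,Y+\lambda X)$ are forms of degree $\tau+j$ in $X$ that vanish on $A$.
\item These coefficients are Hasse derivatives $D^{(j)}$ in the direction $X$. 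The rule $D^{(a)}D^{(b)}=\binom{a+b}{a}D^{(a+b)}$ and Lucas' theorem force the least $j\ge1$ with a nonzero coefficient to be a power of $p$.
\item That $j$ is at most $\tau$, hence at most $h$.
\end{itemize}
The case in which all these coefficients vanish must be excluded separately.

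\textbf{Second step.} Your argument for two coprime curves is matched to the statement rather than derived. The hypothesis $I(A)_\tau=0$ only rules out curves of degree at most $\tau$ through $A$. It says nothing about a common component of degree between $\tau+1$ and $\tau+h$. If your family is one-dimensional up to scalars, its single member is exactly such a component containing all of $A$. B\'ezout cannot then be invoked, because it needs two curves without a common component, which is what you are trying to produce.

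\textbf{Third step.} Your dimension count does not match the inequality either.
\begin{itemize}
\item One checks $h(\tau+h/2+3/2)=\binom{\tau+h+2}{2}-\binom{\tau+2}{2}$. So \eqref{eq:technical} is the comparison $\binom{\tau+h+2}{2}\le 2\binom{\tau+2}{2}$.
\item It is not a comparison with the quotient of degree-$(\tau+h)$ ternary forms by the multiples of $G$. That quotient has dimension $\tau(\tau+2h+3)/2$.
\item The ``$h(\tau+h/2+3/2)$ conditions from matching $h$-th power terms'' are never identified.
\end{itemize}
More seriously, a nonzero class in that quotient is merely a form not divisible by $G$. No dimension count can make it vanish on the $q+2-\tau$ points of $A$; that has to come from the tangent identities. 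Also, not being divisible by $G$, or by its heavily loaded components, is weaker than the required coprimality with $G$.

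As it stands, the proposal establishes the lemma only through its final citation.
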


For square fields, Ball and Lavrauw \cite[Theorem~2]{BL} obtain
the following stronger conic-containment bound.
\begin{lemma}\label{lem:square}
For odd $q=p^{2m}$, every planar arc of size at least
$q-\sqrt q+\sqrt q/p+3$ is conic.
\end{lemma}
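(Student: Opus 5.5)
The plan is to derive this from Lemma~\ref{lem:two-curves} by a B\'ezout count, using the square order $q=p^{2m}$ only to keep the auxiliary degree $\tau+h$ below $\sqrt q$. This is Ball and Lavrauw's result, so I would follow their route. Suppose $A$ is a nonconic planar arc with $|A|\ge q-\sqrt q+\sqrt q/p+3$. Write $|A|=q+2-\tau$, so that
\[
 \tau\le \sqrt q-\sqrt q/p-1 .
\]
If $\tau\le 0$ then $|A|\ge q+2$, which is impossible for odd $q$. So $\tau\ge1$. Put $h=p^{\lfloor\log_p\tau\rfloor}$. Then $h\le\tau<\sqrt q/p\cdot(p-1)<\sqrt q$. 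Because $\sqrt q=p^m$ is itself a power of $p$, this gives $h\le p^{m-1}=\sqrt q/p$. Hence
\[
 \tau+h\le\sqrt q-1 .
\]
This is the one place where the square hypothesis enters.

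\emph{Case $I(A)_\tau=0$.} Lemma~\ref{lem:two-curves} gives two coprime curves of degrees at most $\tau+h$ that contain $A$. By B\'ezout over the algebraic closure,
\[
 |A|\le(\tau+h)^2\le(\sqrt q-1)^2=q-2\sqrt q+1 .
\]
On the other hand $|A|=q+2-\tau\ge q-\sqrt q+3$, a contradiction.

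\emph{Case $I(A)_\tau\neq0$.} Some curve $C$ of degree $\tau$ contains $A$. If the technical inequality \eqref{eq:technical} holds, the second clause of Lemma~\ref{lem:two-curves} supplies a curve of degree at most $\tau+h$, coprime to $C$, that also contains $A$. B\'ezout then gives
\[
 |A|\le\tau(\tau+h)\le\tau(\sqrt q-1).
\]
Combining this with $|A|=q+2-\tau$ yields $q+2\le\tau\sqrt q\le q-q/p-\sqrt q$, again a contradiction. Therefore $A$ is a conic.

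\emph{Main obstacle.} The remaining step is verifying \eqref{eq:technical}, namely $h(\tau+h/2+3/2)\le(\tau+2)(\tau+1)/2$. It holds comfortably whenever $h\le\tau/2$ or so. It can fail when $\tau$ is close to a power of $p$ from above, the extreme case being $h=\tau=p^j$: the left side is then about $\tfrac32\tau^2$ and the right side about $\tfrac12\tau^2$.

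To handle this regime I would first try the following. When \eqref{eq:technical} fails, $\tau<3h$, so $\tau$ is small relative to $p^{j+1}$. In that situation I would replace $\tau$ by a larger auxiliary value $\tau'$ with the same $h$. One natural route is to delete a few points of $A$ to obtain a smaller arc. Another is to directly exploit Segre's lemma of tangents, which gives the degree-$2\tau$ dual envelope, to produce a second curve of degree $\le\tau+h$ not sharing components with $C$. The slack $\sqrt q/p$ in the hypothesis, versus the bare requirement $\tau+h\le\sqrt q-1$, should absorb the loss.

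If neither works, I would fall back on Ball--Lavrauw's original argument \cite[Theorem~2]{BL}. That argument treats the case where $C$ has a component of small degree separately, by counting points of $A$ on each component via the Hasse--Weil bound.
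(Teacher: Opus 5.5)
Your two completed cases are correct. The square hypothesis gives $h\le\sqrt q/p$, hence $\tau+h\le\sqrt q-1$, and B\'ezout disposes of the case $I(A)_\tau=0$ and of the case where $A$ lies on a degree-$\tau$ curve $\phi$ and \eqref{eq:technical} holds. (The paper gives no argument of its own here; it simply cites \cite[Theorem~2]{BL}.)

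The gap is the remaining case, $A\subseteq\phi$ with \eqref{eq:technical} false, which you leave open. ``Falling back on'' \cite[Theorem~2]{BL} just cites the statement being proved. This case is not a corner case. Inequality \eqref{eq:technical} is equivalent to $g_h(\tau)\ge0$. Since $g_h$ is increasing on $[h,\infty)$ with positive root $(2h-3+\sqrt{8h^2+1})/2$, the inequality fails for every $\tau$ from $h$ up to roughly $(1+\sqrt2)h$ (when $h\ge3$). Your estimate that it holds once $h\le\tau/2$ is therefore wrong; for instance $g_9(18)=-52$. Worse, for $p=3$ and $h=\sqrt q/3$ your admissible range is $h\le\tau\le 2h-1$, and $g_h(2h-1)=h(1-h)<0$. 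So the second clause of Lemma~\ref{lem:two-curves} is unusable on the whole top scale.

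The missing step is the thinning argument of Lemma~\ref{lem:thinning-bezout}, which you mention but do not carry out.
\begin{itemize}
\item If $g_h(\tau)<0$ then $h\ge3$. Put $\sigma=(5h-1)/2$. By \eqref{eq:thinning-arithmetic}, $\tau<\sigma<3h\le ph$, so $\sigma$ has the same $h$, and $g_h(\sigma)>0$.
\item Keep five points of $A$ and a sixth point off their conic, and delete other points to obtain a nonconic sub-arc $A'$ of size $q+2-\sigma$.
\item $A'$ lies on $\phi$ times $\sigma-\tau$ lines, a curve of degree $\sigma$. The second clause of Lemma~\ref{lem:two-curves} then gives a curve through $A'$ of degree at most $\sigma+h=(7h-1)/2$, coprime to $\phi$. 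B\'ezout against $\phi$ gives $q+2-\sigma\le\tau(7h-1)/2$.
\end{itemize}
This inequality is impossible in every case:
\begin{itemize}
\item If $p=3$ and $h=\sqrt q/3$, then $\tau\le2h-1$ turns it into $9h^2-\tfrac52h+\tfrac52\le7h^2-\tfrac92h+\tfrac12$, which is false.
\item Otherwise $q\ge25h^2$, so $q+2-\sigma\ge25h^2-\tfrac52h+\tfrac52>9h^2$. Meanwhile $\tau<(1+\sqrt2)h$ gives $\tau(7h-1)/2<9h^2$, again a contradiction.
\end{itemize}
With this step added, your derivation from Lemma~\ref{lem:two-curves} becomes complete.
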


For odd nonsquare fields, Voloch's complete-arc bound
\cite[Theorem~3]{Voloch91}, followed by maximal extension,
gives the next statement.
\begin{lemma}\label{lem:voloch}
For odd $q=p^{2m+1}$, $m\ge1$, every planar arc of size greater than
$q-\sqrt{pq}/4+29p/16+1$ is contained in a conic.
\end{lemma}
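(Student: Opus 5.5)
The statement is a citation-derived lemma: Voloch's bound on complete arcs, followed by maximal extension. My plan is a short reduction to \cite[Theorem~3]{Voloch91}.

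Let $A$ be a planar arc with $|A|>q-\sqrt{pq}/4+29p/16+1$ over odd $q=p^{2m+1}$, $m\ge1$. Because $\PG(2,q)$ is finite, $A$ lies in some complete arc $\bar A\supseteq A$; one adds points one at a time while the arc property is preserved. Then $|\bar A|\ge|A|$ also exceeds the threshold. Voloch's theorem says that, for odd nonsquare $q$, a complete arc that is not a conic has size at most $q-\sqrt{pq}/4+29p/16+1$. So $\bar A$ must be a conic, that is, an oval. By Segre's theorem \cite{Segre55} in odd characteristic, the complete arcs of size $q+1$ are exactly the conics, so the two readings of Voloch's conclusion agree. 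Hence $A\subseteq\bar A$ is contained in a conic.

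The only real obstacle is matching normalizations. Voloch states his bound for complete arcs, possibly with a slightly different constant or with the hypothesis $p$ odd. I would check three things:
\begin{itemize}
\item the additive constant $29p/16+1$ matches the form in \cite[Theorem~3]{Voloch91}, possibly after rounding;
\item the condition $m\ge1$ rules out prime $q$, where the bound is vacuous or not stated;
\item the conclusion ``complete arc of that size is a conic'' covers the case where $\bar A$ is a $(q+1)$-arc. Segre's theorem handles this case.
\end{itemize}
No other step is needed. The monotonicity of the threshold under extension is immediate, since enlarging an arc only increases its size.
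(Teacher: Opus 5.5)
Your proposal is correct and matches the paper's argument: extend the arc to a complete arc, then apply Voloch's bound \cite[Theorem~3]{Voloch91} for non-conic complete arcs over odd nonsquare fields. The appeal to Segre's theorem is harmless but unnecessary, since Voloch's conclusion already forces the large completion to be a conic.
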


In even characteristic, Segre's complete-arc bound as recalled in
\cite[equation~(1)]{Voloch91}, and Voloch's improvement for
nonsquare fields \cite[Theorem~2]{Voloch91}, yield the
following completion statement. The hyperoval need not be regular.
\begin{lemma}\label{lem:hyperoval-completion}
For even $q\ge4$, every planar arc of size greater than
$q-\sqrt q+1$ extends to a hyperoval. For nonsquare $q\ge8$, the
threshold can be replaced by $q-\sqrt{2q}+2$.
\end{lemma}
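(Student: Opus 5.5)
This lemma is a repackaging of known planar completion bounds, so the plan is to quote the cited bounds and add one short geometric step. The first sentence has two inputs. One is Segre's bound on complete arcs in even characteristic, as recalled in \cite[equation~(1)]{Voloch91}: a complete arc in $\PG(2,q)$, $q$ even, that is not a hyperoval has size at most $q-\sqrt q+1$. The other is Voloch's improvement \cite[Theorem~2]{Voloch91}: for nonsquare $q$, the same bound holds with $q-\sqrt{2q}+1$ in place of $q-\sqrt q+1$.

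Fix an arc $A$ whose size exceeds the relevant threshold. Since $\PG(2,q)$ is finite, $A$ lies in some complete arc $A'\supseteq A$; to get one, adjoin points one at a time while the arc property is preserved. Then $|A'|\ge|A|$ exceeds the upper bound for complete arcs that are not hyperovals. Hence $A'$ must be a hyperoval, because the two cited theorems say every complete arc other than a hyperoval is smaller. This proves the first statement.

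For the nonsquare threshold, the hypothesis is $|A|>q-\sqrt{2q}+2$. This is stronger than what Voloch's bound needs, so it is safe whatever the exact additive constant in that bound. I will check the form of the constant against \cite[Theorem~2]{Voloch91}. I will also check that $q\ge8$ lies inside Voloch's hypotheses. If the case $q=8$ fell outside them, I would handle it directly: $q-\sqrt{16}+2=6$, so an arc of size at least $7$ in $\PG(2,8)$ is either a $7$-arc or an oval. An oval extends by its nucleus. A $7$-arc lies in an oval, since at most one point of the hyperoval is missing, but this needs the completion step for $7$-arcs. Alternatively, the Segre bound $8-\sqrt8+1\approx6.17<7$ already covers $q=8$.

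The main point to watch is the exact form of the cited bound, in particular whether it states ``size at most'' or ``size less than'', and the lower limit on $q$. The threshold in the lemma is chosen with slack, so either convention suffices. The remark that the hyperoval need not be regular needs no proof.
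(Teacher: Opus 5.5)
Your proposal is correct and follows the paper's route: extend the arc to a complete arc, then apply Segre's bound for even $q$ and Voloch's Theorem~2 for even nonsquare $q$, so that only a hyperoval remains. One correction: Voloch's bound for even nonsquare $q$ is $q-\sqrt{2q}+2$, not $q-\sqrt{2q}+1$, so the lemma's threshold coincides with it exactly and there is no slack. Your remark that the argument is safe for any additive constant is therefore unjustified, but the argument goes through verbatim with the correct constant, and, as you note, Segre's bound already covers $q=8$.
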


Recall that an \emph{independent set} in a graph is a set of vertices, no two of which are adjacent. 
The detector argument uses the following weighted independent-set
inequality of Sah, Sawhney, Stoner and Zhao
\cite[Theorem~1.5]{SSSZ}.
\begin{lemma}\label{lem:sssz}
For a graph $G$ without isolated vertices, with degrees $d_v$, and
$\lambda>0$,
$$
 P_G(\lambda)=\sum_{I\text{ independent}}\lambda^{|I|}
 \le\prod_{uv\in E(G)}
 \big((1+\lambda)^{d_u}+(1+\lambda)^{d_v}-1\big)^{1/(d_ud_v)}.
$$
An isolated vertex contributes $1+\lambda$.
\end{lemma}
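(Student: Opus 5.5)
This statement is imported verbatim from \cite[Theorem~1.5]{SSSZ}, and in the paper I would simply cite it. If a self-contained argument were wanted, I would follow the homomorphism formulation of Sah, Sawhney, Stoner and Zhao. Let $H_\lambda$ be the weighted target graph on two vertices $\{0,1\}$: vertex $0$ carries a loop and weight $1$, vertex $1$ carries no loop and weight $\lambda$, and the two vertices are adjacent. A weighted homomorphism $G\to H_\lambda$ is then exactly an independent set $I=\phi^{-1}(1)$ with weight $\lambda^{|I|}$, so $P_G(\lambda)=\hom(G,H_\lambda)$.

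The first step is to record that $\hom(K_{a,b},H_\lambda)=(1+\lambda)^a+(1+\lambda)^b-1$. An independent set in $K_{a,b}$ lies entirely on one side, and the empty set is counted twice. This identifies the right-hand side as $\prod_{uv\in E(G)}\hom(K_{d_u,d_v},H_\lambda)^{1/(d_ud_v)}$. The statement is therefore the special case $H=H_\lambda$ of the reverse Sidorenko-type inequality
\[
 \hom(G,H)\le\prod_{uv\in E(G)}\hom(K_{d_u,d_v},H)^{1/(d_ud_v)}.
\]
Isolated vertices split off as the independent factor $1+\lambda$, since $P$ is multiplicative over disjoint unions.

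For the inequality itself I would argue by induction on $|V(G)|$, following the Galvin--Tetali/Zhao approach with the loss tracked edge by edge. Fix a vertex $v$ with neighbourhood $N(v)$. Split $P_G$ according to whether $v\in I$: this gives $P_{G-v}(\lambda)+\lambda P_{G-v-N(v)}(\lambda)$. Each summand is then written as a sum over the states of $N(v)$ of products of local weights. H\"older's inequality, with exponents $1/d_u$ for $u\in N(v)$, distributes the contribution of $v$ over its $d_v$ incident edges. In the bipartite case this is exactly Zhao's bipartite swapping/H\"older step. For general $G$, the plan is to use the entropy formulation. Take $\mathbf I$ to be the $\lambda$-weighted random independent set and bound $H(\mathbf I)$ by a sum over edges $uv$ of $\frac1{d_ud_v}$ times the entropy of an auxiliary configuration on $K_{d_u,d_v}$. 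That auxiliary configuration is obtained by sampling independent copies of the neighbourhoods of $u$ and $v$ conditioned on the states at $u$ and $v$.

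The main obstacle is this non-bipartite step. The naive bipartite double cover only yields the weaker bound $\prod_{uv}\hom(K_{d_u,d_v},H)^{1/(2d_ud_v)}$-type estimates squared against $P_{G\times K_2}$, and these are lossy. The delicate part of \cite{SSSZ} is constructing the coupling so that the conditional entropies telescope with the exact weights $1/(d_ud_v)$. I would first try to reproduce their argument for $H_\lambda$ only. Here the target has just two states, so the conditional entropies reduce to explicit binary-entropy inequalities, which can be verified directly by convexity in $\lambda$.
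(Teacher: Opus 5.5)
Citing \cite[Theorem~1.5]{SSSZ} is exactly what the paper does; it gives no proof of this lemma. Your encoding $P_G(\lambda)=\hom(G,H_\lambda)$, with $\hom(K_{a,b},H_\lambda)=(1+\lambda)^a+(1+\lambda)^b-1$, is also correct. Your optional self-contained sketch, however, would not work as written. First, the inequality $\hom(G,H)\le\prod_{uv\in E(G)}\hom(K_{d_u,d_v},H)^{1/(d_ud_v)}$ is false for general $G$ and $H$. Take $H$ to be two disjoint looped vertices and $G=K_3$: the left side is $2$, while the right side is $2^{3/4}$. That inequality is known for triangle-free $G$ and arbitrary $H$. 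It holds for all $G$ only for special targets, $H_\lambda$ among them. So the lemma is not a specialization of a valid all-$G$ inequality.

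Second, you have the obstacle backwards. The nonbipartite step is not lossy. Zhao's bipartite swapping trick gives $P_G(\lambda)^2\le P_{G\times K_2}(\lambda)$. In $G\times K_2$, each vertex $(v,i)$ has degree $d_v$, and each edge $uv$ of $G$ appears twice with endpoint degrees $d_u,d_v$. Applying the bipartite case to $G\times K_2$ therefore gives exactly the lemma after taking square roots. The genuine difficulty is irregularity, already for bipartite $G$. That case is Kahn's 2001 conjecture, which remained open after the entropy proof of the regular case and was settled in \cite{SSSZ}. Your sketch never supplies it: the H\"older step with exponents $1/d_u$ is not specified, and the entropy coupling on $K_{d_u,d_v}$ with the exact weights $1/(d_ud_v)$ is asserted rather than constructed. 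So either cite the result, as the paper does, or give a complete proof of the irregular bipartite inequality and combine it with the swapping reduction.
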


\section{Dimension two and contraction multiplicities}
\label{sec:contractions}\label{sec:dimension-two}
We first prove the unique normalization in dimension two over even
fields. We then treat the higher dimensions by contracting the
nonvanishing functional from \eqref{eq:NV} to quadratic and quartic
polynomial spaces.
For the dimension-two argument let $q=2^m\ge4$. The description by
$o$-polynomials is the classical hyperoval description; see
\cite[Definition~1.1]{CSHyperovals} and
\cite[Theorem~2.1]{KKHyperovals}. The following slope calculation
gives a direct proof of the normalization.
\begin{lemma}\label{lem:common-missing-slope}
Suppose the graph of a function $u:\F_q\to\F_q$ contains no three
collinear points. There is a unique $b\in\F_q$ such that, for every
$x\in\F_q$,
\[
 \left\{\frac{u(y)+u(x)}{y+x}:y\in\F_q\setminus\{x\}\right\}
 =\F_q\setminus\{b\}.
\]
Moreover, the interpolation polynomial of $u$ satisfies $u'(X)=b$.
\end{lemma}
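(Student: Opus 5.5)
The plan is a parity count on slopes, followed by a power-sum computation for the derivative. Fix $x\in\F_q$ and consider the map $\sigma_x(y)=(u(y)+u(x))/(y+x)$ on the $q-1$ points $y\ne x$. If $\sigma_x(y)=\sigma_x(y')$ with $y\ne y'$, then $(x,u(x)),(y,u(y)),(y',u(y'))$ are collinear, which is excluded. So $\sigma_x$ is injective, and its image misses exactly one value $b_x\in\F_q$. The real content is that $b_x$ does not depend on $x$.

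\textbf{Step 1: $b_x$ is constant.} For each slope $s\in\F_q$, look at the line pencil of slope $s$: the lines $Y=sX+c$ for $c\in\F_q$. Each of the $q$ graph points lies on exactly one such line, and each line contains at most two graph points. Since $q$ is even, the number of lines containing exactly one graph point is even. Let $N_s$ be the number of graph points $x$ for which $s$ is the missing slope, i.e.\ $b_x=s$. These are exactly the points lying alone on their slope-$s$ line, so $N_s$ is even. Summing gives $\sum_s N_s=q$.

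The parity alone does not yet force constancy, so I add a second count. The points with $b_x=s$ have no partner along slope $s$. Consider the graph points $P_1,\dots,P_{N_s}$ with $b=s$, and suppose $N_s\ge2$. The line through $P_1,P_2$ has slope $\sigma_{P_1}(P_2)\ne s$. I would then argue via the standard oval fact: the hyperoval completion with points at infinity. Add the point at infinity $(0:1:0)$ on vertical lines. The graph plus this point is a $(q+1)$-arc, because vertical lines meet the graph once. A $(q+1)$-arc in even characteristic has a nucleus: all $q+1$ tangents are concurrent (Qvist's theorem). For a graph point $x$, the tangent is the unique line through it meeting the arc only there. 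Vertical lines already pass through $(0:1:0)$, so the tangent is non-vertical, and it is exactly the line through $(x,u(x))$ with slope $b_x$. The tangent at $(0:1:0)$ is the line at infinity, since any affine line through $(0:1:0)$ is vertical and hits a graph point. Hence the nucleus lies on the line at infinity, and it is a point $(0:1:b)$ in the appropriate coordinates. It follows that all affine tangents share the same slope $b$, so $b_x=b$ for all $x$, with uniqueness automatic. I expect this to be the main obstacle, and I would include the short proof of Qvist: through a point $P$ off the arc, each line meets the arc in $0$, $1$ or $2$ points and $q+1$ is odd, so $P$ lies on an odd number of tangents, hence at least one. Counting the $q+1$ tangents against the $q^2+q+1-(q+1)$ external points then forces one point on all tangents.

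\textbf{Step 2: $u'(X)=b$.} Replace $u$ by $u+bX$. The graph still has no three collinear points, and now every difference quotient is nonzero, so $u$ is injective and hence a permutation. For fixed $x$, the set $\{(u(y)+u(x))/(y+x)\}$ equals $\F_q^*$. I would then show that every coefficient of $X^{j}$ in $u$ with $j$ odd vanishes; in characteristic two this is equivalent to $u'=0$. For this I use power sums: $\sum_{y\ne x}\bigl((u(y)+u(x))/(y+x)\bigr)^m=\sum_{z\in\F_q^*}z^m$ for all $m$.

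A cleaner route uses the original slope formulation directly. The map $\phi_x(z)=(u(x+z)+u(x))/z$ on $z\ne0$ misses only $b$. Therefore, for $z\ne0$, $\sum_{z}\phi_x(z)=\sum_{w\ne b}w=b$, since $q\ge4$ makes $\sum_{w\in\F_q}w=0$. Expanding $\phi_x(z)=\sum_j c_j((x+z)^j+x^j)z^{-1}$ and using $\sum_{z\ne0}z^{i-1}=-[\,q-1\mid i-1\,]$ for $1\le i\le q-1$ extracts $\sum_j c_j\binom{j}{1}x^{j-1}$. This equals $u'(x)$ up to sign, and the contribution from $i=q$ is absent since $\deg u\le q-1$. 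So $u'(x)=b$ for all $x$, and because $\deg u'<q-1$, this is the polynomial identity $u'(X)=b$. This computation alone does not give Step 1, because $\sum\phi_x=b_x$ yields only $u'(x)=b_x$. I therefore keep Step 1 via the nucleus and then conclude $u'(X)=b$.
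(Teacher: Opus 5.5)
Your argument is correct. Step~2, where you sum $\phi_x(z)$ over $z\ne0$ and extract $u'(x)$ from the power sums, is the paper's argument; the difference lies entirely in Step~1.

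The paper proves $b_x=b_y$ by a direct product identity. For $d=d_{xy}$ and $z\notin\{x,y\}$ one has $(d+d_{xz})(x+z)=(d+d_{yz})(y+z)$. Multiplying over $z$ and using $\prod_{z\ne x,y}(x+z)=\prod_{z\ne x,y}(y+z)=1/(x+y)$ together with $\prod_{a\ne b_x,d}(d+a)=1/(d+b_x)$ yields $b_x=b_y$.

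You instead adjoin the vertical point at infinity to get an oval and apply Qvist's theorem. The tangent at $(x,u(x))$ is the line of slope $b_x$, and the tangent at the vertical point is the line at infinity. The nucleus cannot be the vertical point, because the affine tangents are non-vertical, so it is the point at infinity of a single slope $b$.

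Your route identifies $b$ conceptually as the nucleus and links the lemma to the hyperoval completion used later, but it rests on a classical theorem. The paper's computation is self-contained and in effect reproves Qvist's theorem in this normalization. Indeed, every oval can be coordinatized so that it contains the vertical point with the line at infinity as its tangent there.

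If you do include a proof of Qvist's theorem, your sketch needs one more ingredient. Counting incidences between the $q+1$ tangents and the $q^2$ points off the arc gives only $\sum_Q(t_Q-1)=q$, with each $t_Q$ odd. That is compatible with, say, $q/2$ points lying on three tangents each. Add the count of intersecting pairs of tangents, $\sum_Q\binom{t_Q}{2}=\binom{q+1}{2}$, which holds because two tangents meet off the arc. This gives $\sum_Q(t_Q-1)^2=q^2=\bigl(\sum_Q(t_Q-1)\bigr)^2$. Since every $t_Q-1\ge0$, exactly one point has $t_Q=q+1$. Alternatively, use the standard argument that each point off the arc lying on a secant lies on exactly one tangent.

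Two minor points:
\begin{itemize}
\item With the vertical point written $(0:1:0)$, the point at infinity of slope $b$ is $(1:b:0)$, not $(0:1:b)$.
\item The parity count of $N_s$ and the first version of Step~2 are unused and can be removed.
\end{itemize}
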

\begin{proof}
For distinct $x,y$, put $d_{xy}=(u(x)+u(y))/(x+y)$.
For fixed $x$, the $q-1$ values $d_{xy}$ are distinct, since an
equality would give three collinear graph points. Write $b_x$ for
the unique missing value. Fix distinct $x,y$ and put $d=d_{xy}$.
For $z\notin\{x,y\}$, direct subtraction gives
\[
 \frac{d+d_{xz}}{d+d_{yz}}=\frac{y+z}{x+z}.
\]
Both sides are defined, since there are no three collinear points.
Differentiating $\prod_{a\in\F_q}(X+a)=X^q+X$ at $x$ and $y$ gives
\[
 \prod_{z\notin\{x,y\}}(x+z)
 =\prod_{z\notin\{x,y\}}(y+z)=\frac1{x+y}.
\]
Consequently the products of $d+d_{xz}$ and $d+d_{yz}$ over
$z\notin\{x,y\}$ are equal. On the other hand,
\[
 \prod_{z\notin\{x,y\}}(d+d_{xz})
 =\prod_{a\in\F_q\setminus\{b_x,d\}}(d+a)
 =\frac1{d+b_x},
\]
because $\prod_{a\ne d}(d+a)=1$. The corresponding formula at $y$
therefore gives $b_x=b_y$. Denote this common value by $b$.

For the derivative assertion, expand the interpolation polynomial as
$u(x+Z)+u(x)=\sum_{j=1}^{q-1}c_j(x)Z^j$, where $c_1(x)=u'(x)$.
The finite-field power sums give
\[
 \sum_{z\in\F_q^*}\frac{u(x+z)+u(x)}z=u'(x),
\]
since $\sum_{z\ne0}1=1$ in characteristic two and
$\sum_{z\ne0}z^j=0$ for $1\le j\le q-2$.
The slopes on the left run through $\F_q\setminus\{b\}$.
As $\sum_{a\in\F_q}a=0$ for $q\ge4$, their sum is $b$.
Thus $u'(x)=b$ at every $x$. Since $\deg(u'-b)\le q-2$, this is
the polynomial identity $u'=b$.
\end{proof}

Subtracting the common missing slope produces a permutation. Its
values at zero and one then fix the remaining affine normalization.
\begin{lemma}
\label{thm:dimension-two}
A word $u$ is a deep hole of $\RS(\F_q,2)$ if and only if
\begin{equation}\label{eq:oval-normal-form}
 u(X)=a h(X)+bX+c,\qquad
 a\in\F_q^*,\quad b,c\in\F_q,\quad h\in\mathcal O_q.
\end{equation}
This representation is unique. Its parameters are
\begin{equation}\label{eq:oval-normalization}
 b=[X]u,\qquad c=u(0),\qquad a=u(1)+b+c,
 \qquad h(X)=a^{-1}\bigl(u(X)+bX+c\bigr).
\end{equation}
In particular, every such word has no odd-degree term other than
possibly $bX$, and the number of deep words is
$q^2(q-1)|\mathcal O_q|$.
\end{lemma}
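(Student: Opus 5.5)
Using Lemma~\ref{lem:mds} with $k=2$, a word $u$ is a deep hole of $\RS(\F_q,2)$ exactly when the $q+1$ columns $(1,a,u(a))$, $a\in\F_q$, and $(0,0,1)$ form an arc in $\PG(2,q)$. Since every line through $(0:0:1)$ is vertical and meets the graph in one point, this is the same as saying that the graph $\{(1:x:u(x))\}$ has no three collinear points. The plan is to reduce this condition to the defining properties of $\mathcal O_q$ by an affine change of coordinates, with Lemma~\ref{lem:common-missing-slope} supplying the linear coefficient.

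\emph{Necessity.} Let $u$ be deep. Lemma~\ref{lem:common-missing-slope} gives a unique $b$ such that, for every $x$, the slopes $d_{xy}$ with $y\ne x$ miss exactly $b$; it also gives $u'=b$, so $[X]u=b$ and $u$ has no odd-degree term of degree $\ge3$. Put $g=u+bX+c$ with $c=u(0)$. Then $g(0)=0$, and $g$ is injective: $g(x)=g(y)$ with $x\ne y$ would give slope $d_{xy}=b$, which is excluded. Hence $g$ is a permutation and $a:=g(1)=u(1)+b+c\ne0$. Set $h=a^{-1}g$, so $h(0)=0$ and $h(1)=1$. For fixed $s$ and $z\ne0$,
\[
 h_s(z)=a^{-1}\bigl(d_{s,s+z}+b\bigr).
\]
As $z$ ranges over $\F_q^*$, the slopes $d_{s,s+z}$ run over $\F_q\setminus\{b\}$, so these values run over $\F_q^*$; together with $h_s(0)=0$, $h_s$ is a permutation. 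Also $\deg h\le q-1$, so $h\in\mathcal O_q$. This yields the representation~\eqref{eq:oval-normal-form} with parameters~\eqref{eq:oval-normalization}.

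\emph{Sufficiency.} Suppose $u=ah+bX+c$ with $h\in\mathcal O_q$. Three collinear graph points $x,y,z$ would give equal slopes $d_{xy}=d_{xz}$. Since $d_{xy}=ah_x(y+x)+b$, this means $h_x(y+x)=h_x(z+x)$ with $y+x\ne z+x$, contradicting that $h_x$ is a permutation. So $u$ is deep.

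\emph{Uniqueness and counting.} Given any representation $u=ah+bX+c$, the conditions pin down the parameters. First, $c=u(0)$ because $h(0)=0$. Next, $[X]h=0$: the power-sum argument in Lemma~\ref{lem:common-missing-slope} gives $h'(x)=\sum_{z\ne0}h_x(z)=\sum_{w\in\F_q}w=0$, so $h'=0$ and $b=[X]u$. Finally, $a=u(1)+b+c$ from $h(1)=1$. Therefore the map $(a,b,c,h)\mapsto u$ is injective on $\F_q^*\times\F_q\times\F_q\times\mathcal O_q$. The parity statement comes from $u'=b$, and the count is $q^2(q-1)|\mathcal O_q|$.

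The main obstacle is the common missing slope $b$, which is exactly Lemma~\ref{lem:common-missing-slope}. What remains is bookkeeping: checking that the $z=0$ convention in $h_s$ matches the missing slope, and that $[X]h=0$, which is needed for uniqueness.
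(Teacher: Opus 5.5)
Your proof is correct and follows essentially the same route as the paper: the common missing slope from Lemma~\ref{lem:common-missing-slope} normalizes $u$ to a permutation $g=u+bX+c$, the slopes of $h=a^{-1}g$ are affine images of those of $u$, and uniqueness follows from $h'=0$. The only differences are cosmetic. You reach the no-three-collinear reformulation through Lemma~\ref{lem:mds} rather than directly from the covering radius $q-2$. You also obtain $h'=0$ by redoing the power-sum computation, whereas the paper applies Lemma~\ref{lem:common-missing-slope} to $h$ with missing slope zero.
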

\begin{proof}
The covering radius of $\RS(\F_q,2)$ is $q-2$.
Thus $u$ is deep exactly when no affine linear polynomial agrees
with $u$ at three distinct arguments, or equivalently when its graph
contains no three collinear points.

Suppose first that $u$ is deep, and let $b$ be the missing slope in
Lemma~\ref{lem:common-missing-slope}. Put $c=u(0)$ and
$g(X)=u(X)+bX+c$. For $x\ne y$, an equality $g(x)=g(y)$ would
give the forbidden slope $d_{xy}=b$. Hence $g$ is a permutation,
$g(0)=0$, and $a=g(1)\ne0$. Set $h=g/a$.
For each fixed $s$, the values
\[
 \frac{h(s+z)+h(s)}z
 =a^{-1}\left(\frac{u(s+z)+u(s)}z+b\right),\qquad z\ne0,
\]
run through $\F_q^*$. Together with value zero at $z=0$, they
form a permutation. Therefore $h\in\mathcal O_q$, proving
\eqref{eq:oval-normal-form}.

Conversely, suppose \eqref{eq:oval-normal-form} holds.
For each $s$, the slopes from $(s,u(s))$ to the other graph points
are $b+a h_s(z)$ for $z\ne0$, and are pairwise distinct by
\eqref{eq:o-polynomial}. Thus the graph has no three collinear
points and $u$ is deep.

For every normalized $o$-polynomial $h$, its graph has common
missing slope zero; Lemma~\ref{lem:common-missing-slope} gives
$h'=0$. Therefore every positive exponent occurring in $h$ is
even, and $\deg h\le q-2$. In any representation
\eqref{eq:oval-normal-form}, differentiation and evaluation at
$0,1$ now force precisely \eqref{eq:oval-normalization}.
This proves uniqueness and the coefficient assertion. Counting the
independent choices of $a,b,c,h$ proves the final formula.
\end{proof}

We now return to arbitrary characteristic and to the functional
$\mathcal L$ defined in \eqref{eq:functional}. The first step is
to find a nonsingular quadratic contraction outside any prescribed
set of at most $k+2$ points.
\begin{lemma}\label{lem:quadratic-seed}
Suppose $\mathcal{L}$ satisfies \eqref{eq:NV}, is noncanonical, and $t \ge 2$.
For every subset $A \subseteq \F_q$ with $|A| \le k+2$, there exists a  subset
$U \subseteq \F_q \setminus A$ of size $t-2$ such that the associated $2 \times 2$ 
Hankel matrix is nonsingular, i.e.,
\[
 \det H_{1,1}(\mathcal{L}(p_U \,\cdot\,)) = 
\det \begin{pmatrix} \mathcal{L}(p_U) & \mathcal{L}(Xp_U) \\ 
\mathcal{L}(Xp_U) & \mathcal{L}(X^2p_U) \end{pmatrix} \ne 0.
\]
\end{lemma}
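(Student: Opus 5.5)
The plan is to argue by contradiction, translating singularity of every quadratic contraction into vanishing conditions on $\mathcal L$ that force it to be canonical. The basic tool is the classification of nonzero functionals $\mathcal M$ on $\mathcal P_2$ with $\det H_{1,1}(\mathcal M)=0$. Such a functional has rank one, so by homogenization $\mathcal M=c\,\mathrm{ev}_r$ for a unique $r\in\F_q\cup\{\infty\}$ and some $c\ne0$, where $\mathrm{ev}_\infty=[X^2]$.

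\textbf{Setup.} Since $|A|\le k+2$, the complement $\F_q\setminus A$ has at least $q-k-2=t-1$ elements. Fix $S\subseteq\F_q\setminus A$ with $|S|=t-1$. The admissible $U$ include the $t-1$ sets $U_s=S\setminus\{s\}$ for $s\in S$. Suppose, for contradiction, that every $\mathcal M_s=\mathcal L(p_{U_s}\,\cdot\,)$ on $\mathcal P_2$ is singular.

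By \eqref{eq:NV}, $\mathcal M_s((X-a)(X-b))\ne0$ for all distinct $a,b\notin U_s$. In particular $\mathcal M_s\ne0$, so $\mathcal M_s=c_s\mathrm{ev}_{r_s}$ with $c_s\ne0$. If $r_s\in\F_q\setminus U_s$, choosing $a=r_s$ violates \eqref{eq:NV}. Hence $r_s\in U_s\cup\{\infty\}$.

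I also record the one-step-further information. Every $S\cup\{a\}$ with $a\notin S$ is a $t$-set, so the linear functional $g\mapsto\mathcal L(p_Sg)$ on $\mathcal P_1$ is nonzero at every $X-a$, $a\notin S$. This forces $\mathcal L(p_S\,\cdot\,)=c\,\mathrm{ev}_{r_0}$ on $\mathcal P_1$ with $r_0\in S\cup\{\infty\}$.

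\textbf{Case 1: all $r_s=\infty$.} Then $\mathcal L$ vanishes on $p_{U_s}\mathcal P_1$ for every $s\in S$. By the Chinese remainder theorem the constants times $p_{S\setminus\{s\}}$ already span $\mathcal P_{t-2}$, and adding the $X p_{S\setminus\{s\}}$ gives all of $\mathcal P_{t-1}$. Thus $\mathcal L|_{\mathcal P_{t-1}}=0$, so $\mathcal L$ is a nonzero multiple of $[X^t]$, i.e.\ canonical, a contradiction.

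\textbf{Case 2: some $r_s=r\in U_s$ (the main obstacle).} Here $\mathcal L$ kills $p_{U_s}(X-r)\mathcal P_1$, which contains $p_S(X-r)$ up to the substitution $s\leftrightarrow$ a root. I would work in the basis of $\mathcal P_t$ consisting of $p_S$, $Xp_S$ and the $p_{S\setminus\{x\}}$ for $x\in S$, in which $\mathcal L$ is encoded by $t+1$ numbers. I would then play the conditions off against each other.
\begin{itemize}
\item Evaluating $\mathcal M_s=c_s\mathrm{ev}_{r}$ on the basis $1,\,X-s,\,(X-s)(X-r)$ expresses $\mathcal L(p_{U_s})$, $\mathcal L(p_S)$ and $\mathcal L(p_S(X-r))=0$ in terms of $c_s$.
\item The identity $\mathcal L(p_S\,\cdot\,)=c\,\mathrm{ev}_{r_0}$ then forces $r_0=r$.
\item Consequently every other $s'\ne r$ with $r\in U_{s'}$ must also have $r_{s'}\in\{r,\infty\}$. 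The $\infty$ option is excluded, because it would give $\mathcal L(p_S)=\mathcal L(p_S(X-r))=0$, hence $\mathcal L(p_S\,\cdot\,)=0$ on $\mathcal P_1$, contradicting $c\ne0$.
\item So $\mathcal M_{s'}=c_{s'}\mathrm{ev}_r$ for all $s'\ne r$. The remaining set $U_r=S\setminus\{r\}$ has $r_r\in U_r\cup\{\infty\}$, and it must be compatible with $\mathcal L(p_S\,\cdot\,)=c\,\mathrm{ev}_r$ while satisfying \eqref{eq:NV} against $(X-r)(X-b)$.
\end{itemize}
I expect this bookkeeping either to contradict \eqref{eq:NV} directly, or to show that $\mathcal L$ is supported, modulo these relations, on a single evaluation-type functional at $r$. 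The latter is incompatible with \eqref{eq:NV} on $t$-sets containing $r$ together with one point outside $S$.

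The hard step is closing this mixed case cleanly. If the $t-1$ sets $U_s$ do not suffice, I would enlarge the supply of admissible $U$: when $|A|<k+2$ take $|S|=t$, or else replace one element of $U$ by points of $\F_q\setminus(A\cup S)$. The aim is to obtain a second family of vanishing conditions and rerun the CRT spanning argument of Case 1.
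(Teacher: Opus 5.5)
Your setup, the classification of singular nonzero functionals on $\mathcal P_2$ as $c\,\mathrm{ev}_r$ with $r\in\F_q\cup\{\infty\}$, the constraint $r_0\in S\cup\{\infty\}$, and Case 1 are all correct. This is the paper's route in different notation: the paper works with $c_0=\mathcal L(p_B)$, $c_1=\mathcal L(Xp_B)$, $h_b$ and the explicit determinant $h_b(c_1-bc_0)-c_0^2$. The genuine gap is Case 2. There you do not derive a contradiction; you only \emph{expect} one, and propose enlarging the family of sets $U$ as a fallback. The endgame you anticipate, that $\mathcal L$ reduces to an evaluation at $r$, is also not what happens. As written, the proposal does not prove the lemma.

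The missing step is short, and your last bullet points straight at it. For every $s\in S$ and $g\in\mathcal P_1$ you have $\mathcal L(p_Sg)=\mathcal M_s((X-s)g)$.
\begin{itemize}
\item If $\mathcal M_s=c_s\mathrm{ev}_{r_s}$ with $r_s\in U_s$, then $\mathcal L(p_S\,\cdot\,)=c_s(r_s-s)\mathrm{ev}_{r_s}$.
\item If $r_s=\infty$, then $\mathcal L(p_S\,\cdot\,)=c_s[X]$.
\end{itemize}
So singularity of every $\mathcal M_s$ forces $r_s=r_0$ for all $s$. Moreover $r_s\ne s$, because $s\notin U_s$. Taking $s=r_0$ shows $r_0\notin S$. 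Hence $r_0=\infty$, and you are back in Case 1.

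Concretely, in your Case 2 the set $U_r$ is itself the witness. You have $\mathcal M_r(X-r)=\mathcal L(p_S)\ne0$ and $\mathcal M_r((X-r)^2)=\mathcal L(p_S(X-r))=0$. In the basis $1,X-r$, which is a unimodular change from $1,X$, this gives $\det H_{1,1}(\mathcal M_r)=-\mathcal L(p_S)^2\ne0$.

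This is exactly the paper's observation. With $c_0\ne0$, the vanishing determinants force $h_b=c_0/(z-b)$ with $z=c_1/c_0\notin B$, since at $b=z$ the determinant would be $-c_0^2$. Then $\mathcal L(p_{B\cup\{z\}})=c_1-zc_0=0$ contradicts \eqref{eq:NV}; the paper phrases this as $\mathcal L$ being a multiple of $\mathrm{ev}_z$. No enlargement of $U$ is needed.
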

\begin{proof}
Choose $B\subseteq\F_q\setminus A$ with $|B|=t-1$,
and write $c_0=\mathcal{L}(p_B)$, $c_1=\mathcal{L}(Xp_B)$ and
$h_b=\mathcal{L}(p_B/(X-b))$ for $b\in B$. For $U=B\setminus\{b\}$,
we have
\[
\mathcal{L}(p_U)=h_b,\quad  \mathcal{L}(Xp_U)=bh_b+c_0,\quad  \mathcal{L}(X^2p_U)=b^2h_b+bc_0+c_1,
\]
and their quadratic determinant is
$ \det H_{1,1}(\mathcal{L}(p_U \,\cdot\,))=h_b(c_1-bc_0)-c_0^2$. Suppose all these determinants vanish.
The polynomials $p_B/(X-b)$, for $b\in B$, form a basis of
$\F_q[X]_{\le t-2}$.
If $c_0 = c_1 = 0$, choose an arbitrary element $y \notin B$. Applying the functional $\mathcal{L}$ yields
\[
\mathcal{L}(p_B \cdot (X-y)) = \mathcal{L}(X p_B) - y \mathcal{L}(p_B) = c_1 - y c_0 = 0.
\]
This produces a polynomial of degree $t$ fully splitting over $\F_q$ 
in the kernel of $\mathcal{L}$, contrary to \eqref{eq:NV}.
If $c_0 = 0$ and $c_1 \ne 0$, the vanishing determinant forces $c_1 h_b = 0$, implying all $h_b$ vanish. 
Because $\{p_B/(X-b)\}_{b\in B}$ forms a basis for $\F_q[X]_{\le t-2}$, $\mathcal{L}$ 
must vanish identically on $\F_q[X]_{\le t-2}$. 
Furthermore, $\mathcal{L}(p_B) = c_0 = 0$ forces $\mathcal{L}$ to 
vanish identically on $\F_q[X]_{\le t-1}$. Given $\mathcal{L}(X^t) = c_1 \ne 0$, 
it follows that $\mathcal{L}$ extracts exactly the leading coefficient, i.e., $\mathcal{L} = c_1 [X^t]$. 
This strictly contradicts the noncanonicity of $\mathcal{L}$.

Finally, suppose $c_0 \ne 0$ and define $z = c_1 / c_0$. The zero determinant relation $h_b(c_1 - b c_0) = c_0^2$ 
simplifies to $h_b c_0 (z - b) = c_0^2$. 
This implies $z \notin B$ and uniquely determines $h_b = c_0 / (z - b)$ for every $b \in B$. 
Define a scaled point-evaluation functional $\mathcal{L}^* = c_0 p_B(z)^{-1} \operatorname{ev}_z$,
where $\operatorname{ev}_z$ denotes the evaluation functional mapping $f \mapsto f(z)$.  
We test $\mathcal{L}^*$ on the basis polynomials of $\F_q[X]_{\le t-2}$:
\[
\mathcal{L}^*\left( \frac{p_B(X)}{X-b} \right) =
c_0 p_B(z)^{-1} \frac{p_B(z)}{z-b} = \frac{c_0}{z-b} = h_b = \mathcal{L}\left( \frac{p_B(X)}{X-b} \right).
\]
Furthermore, evaluating $\mathcal{L}^*$ on the higher-degree polynomials 
yields $\mathcal{L}^*(p_B) = c_0 = \mathcal{L}(p_B)$ 
and $\mathcal{L}^*(X p_B) = z c_0 = c_1 = \mathcal{L}(X p_B)$. The collection
\[
 \left\{ \frac{p_B(X)}{X-b} \right\}_{b \in B} \cup \{ p_B, X p_B \}
\]
forms a  basis for the vector space $\F_q[X]_{\le t}$. 
Since $\mathcal{L}^*$ and $\mathcal{L}$ agree on this entire basis, they must be 
identical on $\F_q[X]_{\le t}$. Hence, $\mathcal{L}$ is merely a finite evaluation up 
to a nonzero scalar, again contradicting \eqref{eq:NV}.
\end{proof}

Omitting one point from such a contraction gives a one-dimensional
kernel. Comparing these kernels detects when a two-point omission
produces a nonsingular quartic contraction.

\begin{lemma}\label{lem:kernel-omission}
Let $U\subseteq\F_q$ have size $n$, and let $\mathcal R$  be a linear functional defined on the polynomial space
$\F_q[X]_{\le n+2}$. Suppose $H_{1,1}(\mathcal R(p_U\,\cdot))$
is nonsingular. For $u\in U$, let
$\mathcal M_u=\mathcal R(p_U/(X-u)\,\cdot)$, and let $[Q_u]$
be the one-dimensional kernel of $H_{1,2}(\mathcal M_u)$.
For distinct $u_1,u_2\in U$,
$$
 \det H_{2,2}\!\left(\mathcal R\left(
 \frac{p_U}{(X-u_1)(X-u_2)}\,\cdot\right)\right)=0
 \quad\hbox{if and only if}\quad [Q_{u_1}]=[Q_{u_2}].
$$
Furthermore, define an equivalence relation on $U$ by $u_1 \sim u_2$
precisely when  $[Q_{u_1}] = [Q_{u_2}]$.
If $C\subseteq U$ is an equivalence class of size $m$ and
$\mathcal M=\mathcal R(p_{U\setminus C}\,\cdot)$, then
$\rank H_{2,m}(\mathcal M)\le2$.
\end{lemma}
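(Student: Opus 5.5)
The plan is to read singularity of each Hankel matrix as the existence of a nonzero kernel quadratic, and then to split the test space of multipliers into pieces on which the functional reduces to one of the $\mathcal M_u$. Throughout, write $\mathcal N=\mathcal R(p_U\,\cdot)$, a functional on $\F_q[X]_{\le2}$ with $H_{1,1}(\mathcal N)$ nonsingular. Note that $\mathcal M_u((X-u)g)=\mathcal N(g)$ for $\deg g\le 2$.

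\emph{Step 1: the kernel $[Q_u]$ is well defined.} First I check that $H_{1,2}(\mathcal M_u)$ has rank exactly $2$, so that $[Q_u]$ is indeed one-dimensional. Identify the kernel of $H_{1,2}(\mathcal M_u)$ with quadratics $Q$ satisfying $\mathcal M_u(X^iQ)=0$ for $i=0,1$. Restrict to the two-dimensional subspace $(X-u)\cdot\F_q[X]_{\le1}$ of quadratics. On it, the identity above turns the conditions into $\mathcal N(X^i\ell)=0$ for $i=0,1$, where $Q=(X-u)\ell$. Nonsingularity of $H_{1,1}(\mathcal N)$ then forces $\ell=0$. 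Hence the rank is $2$, and the kernel is a line $[Q_u]$ with $Q_u(u)\ne0$.

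\emph{Step 2: the equivalence for two points.} Put $W=p_U/((X-u_1)(X-u_2))$ and $\mathcal S=\mathcal R(W\,\cdot)$ on $\F_q[X]_{\le4}$. The matrix $H_{2,2}(\mathcal S)$ is singular if and only if there is a nonzero quadratic $Q$ with $\mathcal S(gQ)=0$ for all $g\in\F_q[X]_{\le2}$.

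The key decomposition is
\[
\F_q[X]_{\le2}=(X-u_2)\F_q[X]_{\le1}+(X-u_1)\F_q[X]_{\le1}.
\]
This holds by a dimension count: the two summands have dimension $2$ each and meet in the line spanned by $(X-u_1)(X-u_2)$, since $u_1\ne u_2$. On the first summand,
\[
\mathcal S((X-u_2)\ell Q)=\mathcal M_{u_1}(\ell Q),
\]
and symmetrically on the second summand with $\mathcal M_{u_2}$. So the vanishing condition says exactly that $Q$ lies in both kernels, that is, $Q\in[Q_{u_1}]\cap[Q_{u_2}]$. Because each kernel is a line by Step 1, this intersection is nonzero precisely when $[Q_{u_1}]=[Q_{u_2}]$.

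\emph{Step 3: the rank bound on a class.} Let $C$ be a class of size $m$, and let $Q\ne0$ span the common kernel $[Q_c]$ for $c\in C$. Since $|U\setminus C|=n-m$, the functional $\mathcal M=\mathcal R(p_{U\setminus C}\,\cdot)$ is defined through degree $m+2$, so $H_{2,m}(\mathcal M)$ makes sense. This $3\times(m+1)$ matrix has rank at most $2$ if some nonzero quadratic $Q$ satisfies $\mathcal M(gQ)=0$ for all $g\in\F_q[X]_{\le m}$.

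To verify this, use the factorization $p_U/(X-c)=p_{U\setminus C}\cdot p_C/(X-c)$. It gives
\[
\mathcal M\bigl(X^i\,\tfrac{p_C}{X-c}\,Q\bigr)=\mathcal M_c(X^iQ)=0\qquad(i=0,1,\ c\in C).
\]
The Lagrange polynomials $p_C/(X-c)$ span $\F_q[X]_{\le m-1}$, so together with their multiples by $X$ they span $\F_q[X]_{\le m}$. This also covers $m=1$, where the Lagrange polynomial is $1$. Hence $Q$ annihilates every row combination needed, and the rank bound follows.

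The only genuinely delicate point is Step 1, namely that each kernel is exactly one-dimensional. Everything else reduces to the sum decomposition in Step 2 and the Lagrange spanning in Step 3.
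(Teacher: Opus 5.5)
Your proposal is correct and follows the paper's proof step for step: the rank-two argument via the subspace $(X-u)\F_q[X]_{\le1}$, the identification of the singularity of $H_{2,2}$ with a common kernel vector of $\mathcal M_{u_1}$ and $\mathcal M_{u_2}$, and the Lagrange spanning argument for the class bound. The only difference is cosmetic: you obtain both directions of the equivalence at once from the decomposition $\F_q[X]_{\le2}=(X-u_2)\F_q[X]_{\le1}+(X-u_1)\F_q[X]_{\le1}$, whereas the paper checks the two directions separately using $\mathcal M_{u_1}(f)=\mathcal F((X-u_2)f)$ and $(u_1-u_2)\mathcal F=\mathcal M_{u_1}-\mathcal M_{u_2}$.
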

\begin{proof}
View $H_{1,2}(\mathcal M_u)$ as a linear map from $\mathbb{F}_q[X]_{\le 2}$ to $\mathbb{F}_q^2$,
mapping $f(X)$ to the column vector $(\mathcal M_u(f), \mathcal M_u(Xf))^{\mathsf T}$.
Restriction of this map to the two-dimensional subspace of polynomials divisible by $X-u$,
namely $\{ (X-u)g(X) \mid g \in \mathbb{F}_q[X]_{\le 1} \}$, yields the $i$-th output component
$$
 \mathcal M_u(X^i(X-u)g(X)) 
 = \mathcal R\left( \frac{p_U(X)}{X-u} X^i(X-u)g(X) \right)
 = \mathcal R\left( p_U(X) X^i g(X) \right) \quad (i=0,1).
$$
This algebraically recovers the action of the given nonsingular quadratic matrix 
$H_{1,1}(\mathcal R(p_U\,\cdot))$ on $g(X)$. Consequently, the image of $H_{1,2}(\mathcal M_u)$ 
has dimension at least two. Since its codomain is $\mathbb{F}_q^2$, its rank is exactly two.

Put $\mathcal F=\mathcal R(p_U/((X-u_1)(X-u_2))\,\cdot)$. 
By definition, we have the algebraic identities 
$\mathcal M_{u_1}(f) = \mathcal F(Xf) - u_2\mathcal F(f)$ 
and $\mathcal M_{u_2}(f) = \mathcal F(Xf) - u_1\mathcal F(f)$ for every 
$f\in\F_q[X]_{\le3}$.
If $\det H_{2,2}(\mathcal F)=0$, there exists a nonzero polynomial 
$Q \in \ker H_{2,2}(\mathcal F)$, 
meaning $\mathcal F(X^i Q) = 0$ for $i=0,1,2$. 
The identities then yield 
$\mathcal M_{u_1}(X^i Q) = \mathcal M_{u_2}(X^i Q) = 0$ for $i=0,1$. 
Thus, $Q$ is a non-trivial element of both one-dimensional kernels 
$[Q_{u_1}]$ and $[Q_{u_2}]$, which forces $[Q_{u_1}] = [Q_{u_2}]$.

Conversely, if $[Q_{u_1}] = [Q_{u_2}]$, a common nonzero kernel polynomial $Q$ satisfies 
$\mathcal M_{u_1}(X^i Q) = \mathcal M_{u_2}(X^i Q) = 0$ for $i=0,1$. 
The identity $(u_1-u_2)\mathcal F = \mathcal M_{u_1} - \mathcal M_{u_2}$ and $u_1 \ne u_2$ 
immediately yield $\mathcal F(Q) = \mathcal F(XQ) = 0$. 
Then $\mathcal M_{u_1}(XQ) = \mathcal F(X^2Q) - u_2\mathcal F(XQ) = 0$ 
forces $\mathcal F(X^2Q) = 0$. 
Thus $Q \in \ker H_{2,2}(\mathcal F)$, yielding a zero determinant.

Finally, let $C \subseteq U$ be an equivalence class of size $m$ 
with common kernel $Q(X) = \sum_{j=0}^2 q_j X^j$. 
For every $c \in C$ and $i \in \{0,1\}$, $Q \in [Q_c]$ implies
$$
 0 = \mathcal M_c(X^i Q) = 
\mathcal R\left( p_{U \setminus C} \frac{p_C}{X-c} X^i Q \right) 
= \mathcal M\left( Q X^i \frac{p_C}{X-c} \right).
$$
The polynomials $X^i p_C/(X-c)$ for $c \in C$ and 
$i \in \{0,1\}$ span $\F_q[X]_{\le m}$. 
Linearity therefore gives 
$\mathcal M(X^k Q) = \sum_{j=0}^2 q_j \mathcal M(X^{j+k}) = 0$ for all $0 \le k \le m$. 
This implies the nonzero coefficient vector $(q_0, q_1, q_2)$ 
is a left kernel vector of the $3 \times (m+1)$ matrix $H_{2,m}(\mathcal M)$, 
forcing $\rank H_{2,m}(\mathcal M) \le 2$.
\end{proof}

To control the sizes of the resulting kernel classes, we pass to
the two-dimensional algebra determined by a quadratic kernel form.

\begin{lemma}\label{lem:dense-quad}
Let $q\ge16$, $k\ge2+\iota_q$, and let $C,V\subseteq\F_q$ be
disjoint, with $|C|=m\ge3$ and $|V|=k+3$. Put $r=m+2$.
Suppose $\mathcal M$ acts on $\F_q[X]_{\le r}$ and satisfies
$$
 \rank H_{2,m}(\mathcal M)\le2,\qquad
 \det H_{1,1}(\mathcal M(p_C\,\cdot))
 =\mathcal M(p_C)\mathcal M(X^2p_C)-\mathcal M(Xp_C)^2\ne0,
$$
and $\mathcal M(p_S)\ne0$ for every $S\in\binom{C\cup V}{r}$.
Then $m\le H_q=\lfloor\frac{q-5}{2}\rfloor$.
\end{lemma}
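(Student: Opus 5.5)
The plan is to reduce everything to arithmetic in a two-dimensional quotient algebra, where the nonvanishing hypothesis becomes a condition on products of elements in a small abelian group. The step I have not closed is the final count: I do not yet have a choice of subsets that produces a set of size about $q/2$ avoiding all pairs of one involution.

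\textbf{Step 1: pass to the quotient algebra.} Since $\rank H_{2,m}(\mathcal M)\le2$, there is a nonzero $Q=q_0+q_1X+q_2X^2$ with $\mathcal M(X^jQ)=0$ for $0\le j\le m$. Thus $\mathcal M$ kills $Q\cdot\F_q[X]_{\le m}$. The nonsingularity of $H_{1,1}(\mathcal M(p_C\,\cdot))$ forces $\deg Q=2$: otherwise $\mathcal M$ would vanish on too large a subspace. Hence $\F_q[X]_{\le r}/Q\F_q[X]_{\le m}$ is two-dimensional and equals $A=\F_q[X]/(Q)$. So $\mathcal M(f)=\lambda(f\bmod Q)$ for a linear functional $\lambda$ on $A$.

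\textbf{Step 2: consequences of nonsingularity.} The nonsingularity of $H_{1,1}(\mathcal M(p_C\,\cdot))$ says exactly that the bilinear form $(g,h)\mapsto\lambda(p_Cgh)$ on $A$ is nondegenerate. I will deduce two facts from this.
\begin{itemize}
\item The image of $p_C$ is a unit of $A$.
\item $\ker\lambda$ is spanned by a unit $w$.
\end{itemize}
For the second fact I check the three cases separately: $A=\F_q\times\F_q$, $A=\F_{q^2}$, and $A=\F_q[\varepsilon]$. In each, a kernel line containing a zero divisor makes the trace-type form degenerate.

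\textbf{Step 3: the group reformulation.} Let $G=A^*/\F_q^*$, a group of order $q-1$, $q$ or $q+1$. The map $a\mapsto x_a=[X-a]$ is injective on the $a$ that are not roots of $Q$, since $1,X$ are independent modulo $Q$. The hypothesis $\mathcal M(p_S)\ne0$ becomes the following condition: for every $r$-subset $S\subseteq C\cup V$ avoiding the at most two roots of $Q$,
\[
\textstyle\prod_{s\in S}x_s\ne g_0,\qquad g_0=[w].
\]
Roots of $Q$ cannot lie in $C$, because $p_C$ is a unit.

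\textbf{Step 4: projective involutions.} Fix $T\subseteq C\cup V$ of size $r-2=m$ and put $y_T=g_0/\prod_{T}x_s$. The relation $x_ax_b=y_T$ is a symmetric bilinear relation in the homogeneous coordinates of $a$ and $b$, so it defines a projective involution $\sigma_T$ of $\PP^1(\F_q)$. The hypothesis then says that $(C\cup V)\setminus T$ contains no pair $\{a,\sigma_T(a)\}$ with $a\ne\sigma_T(a)$. A subset of $\F_q$ containing no such pair has size at most about $(q+1)/2$, adjusted by the number of fixed points of $\sigma_T$. That number is at most $2$, and is governed by $\iota_q$: in even characteristic there is at most one.

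\textbf{Step 5: the counting, and the main obstacle.} I expect this step to be the main difficulty. Taking $T=C$ only gives information on the small set $V$. So I will exchange elements between $C$ and $V$: take $T=(C\setminus C')\cup V'$ with $|C'|=|V'|$. The complement then contains most of $C$ plus part of $V$. The point is that the involutions $\sigma_T$ for these $T$ differ only by multiplication by the ratios $x_c/x_v$.

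The target is to arrange that a single involution sees a pair-free set of size about $m+5$. The ``$+5$'' should come from $|V|=k+3$ together with $k\ge2+\iota_q$, after removing the at most two roots of $Q$. Combined with the orbit bound from Step 4, this would give $2m+5\le q$, which is exactly $m\le H_q$.

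If no single involution works, I would try the following first. Use several exchanged subsets $T$ whose involutions share orbits, and apply Lemma~\ref{lem:sssz}, or a direct orbit count, to the union of their pair graphs. This is what I expect the paper's detector arguments to do.
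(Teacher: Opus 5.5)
Steps~2 and~3 are essentially the paper's reduction. However, there is a genuine gap exactly where you flag it, and the exchange strategy in Step~5 cannot close it. For every $T\subseteq C\cup V$ with $|T|=m$, the set $(C\cup V)\setminus T$ on which $\sigma_T$ must be pair-free has exactly $k+3$ elements, whatever $C'$ and $V'$ are. So no involution of this kind ever sees more than $k+3$ points.

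The missing idea is to put the two free points in the \emph{complement} of $S$. Let $W$ be $C\cup V$ with the nonunit points removed, and put $P_W=\prod_{w\in W}[\xi_w]$. For $S\in\binom{W}{r}$, the relation $\prod_{a\in S}[\xi_a]=[\kappa]$ is equivalent to $\prod_{a\in W\setminus S}[\xi_a]=P_W[\kappa]^{-1}$, where $|W\setminus S|=s=|W|-r$. Fix $s-2$ elements $F$ of the complement. The last two elements must then satisfy $xy=g$ for a residual target $g$, and they are sought in $D=W\setminus F$. This pool has $r+2=m+4$ elements, independently of $k$. If $m>H_q$, then $|D\cap gD^{-1}|\ge2(m+4)-(q+1)\ge3$. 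Since $x^2=g$ has at most two solutions in $G$ (apart from the case below), there are $x\ne y$ in $D$ with $xy=g$. The corresponding $S$ has $\mathcal M(p_S)=0$, a contradiction. This gives exactly the inequality you were aiming for, with no need for several involutions or an independent-set estimate. The hypothesis on $k$ does not enlarge the pool; it only guarantees $s\ge2$, and $s\ge3$ for even $q$.

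Three further points need repair.

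\emph{Step~1 is false as stated}: nonsingularity does not force $\deg Q=2$. Suppose $\mathcal M(X^i)=0$ for $1\le i\le m+1$ and $\mathcal M(1)\mathcal M(X^{m+2})\ne0$. Then $Q=X$ spans the left kernel of $H_{2,m}(\mathcal M)$, yet $\det H_{1,1}(\mathcal M(p_C\,\cdot))=p_C(0)\mathcal M(1)\mathcal M(X^{m+2})\ne0$ when $0\notin C$. A constant $Q$ can occur similarly. Homogenize, as the paper does, so that $Q$ is a binary quadratic form that may vanish at infinity and $\mathcal H_r/Q\mathcal H_{r-2}$ is always two-dimensional.

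\emph{At most one point may be discarded}, and you need this to get $s\ge k$. If both roots of a split $Q$ lie in $C\cup V$, any $r$-set containing both gives $p_S\in Q\mathcal H_{r-2}$, hence $\mathcal M(p_S)=0$ directly. Without this you only have $s\ge k-1$, which is too small when $k=2$.

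\emph{The fixed-point bound in Step~4 fails} in even characteristic when $Q$ is a square. Then $G\cong(\F_q,+)$, and the involution $x\mapsto g-x$ is the identity when $g=0$. This is where $k\ge3$ is used. With $s\ge3$, fix $s-3$ elements and then one further element $z$ different from the residual target $g'$. The final target $g'-z$ is then nonzero, so the translation has no fixed points.
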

\begin{proof}
We begin by formally homogenizing the polynomial space. Let $\mathcal{H}_d$ 
denote the vector space of binary forms in $X$ and $Y$ of degree $d$.
For $0\le d\le r$ and $F\in\mathcal H_d$, write
$\mathcal M(F)=\mathcal M(F(X,1))$; in particular,
$\mathcal M(X^aY^b)=\mathcal M(X^a)$ when $a+b\le r$.
The matrix $H_{2,m}(\mathcal M)$ has three rows with entries $\mathcal M(X^{i+j})$.
Its rank deficiency implies the existence of a nonzero vector $(q_0, q_1, q_2)$ in the left nullspace.
We use these coefficients to define a nonzero quadratic form $Q(X,Y) = q_0 Y^2 + q_1 XY + q_2 X^2 \in \mathcal H_2$,
giving
\[
 \mathcal M(Q(X,Y) \cdot X^j Y^{m-j}) = \sum_{i=0}^2 q_i 
\mathcal M(X^{i+j} Y^{m-j+2-i}) = \sum_{i=0}^2 q_i \mathcal M(X^{i+j}) = 0.
\]
Since $\{X^jY^{m-j}:0\le j\le m\}$ is a basis of $\mathcal H_m$,
linearity gives $\mathcal M(QR)=0$ for every $R\in\mathcal H_m$;
equivalently, $\mathcal M(Q\mathcal H_m)=0$.

Choose a linear form $\ell$ that does not divide $Q$, 
and a complementary linear form $\eta$ 
such that $\{\eta, \ell\}$ forms a basis for $\mathcal{H}_1$. 
Every form $F \in \mathcal{H}_j$ can be uniquely rewritten as a 
homogeneous polynomial $\widetilde{F}(\eta, \ell)$ of degree $j$. 
Dehomogenization with respect to $\ell$ defines the polynomial
$\chi(Z)=\widetilde Q(Z,1)$.
Since $\ell\nmid Q$, the coefficient of $\eta^2$ in
$\widetilde Q$ is nonzero, so $\deg\chi=2$.
Define the two-dimensional quotient algebra
\[
 \mathbb A=\F_q[Z]/(\chi(Z)),
\]
and write $z$ for the residue class of $Z$ in $\mathbb A$.
For every $j\ge0$, define
\[
 \rho_j:\mathcal H_j\longrightarrow\mathbb A,\qquad
 F\longmapsto\widetilde F(z,1).
\]
Since $r=m+2\ge5$, the images $\rho_r(\ell^r)=1$ and
$\rho_r(\eta\ell^{r-1})=z$ demonstrate that $\rho_r$ is surjective.
Its kernel contains $Q\mathcal H_{r-2}$, a subspace of dimension 
$r-1$. Because the codomain $\mathbb A$ has dimension two, 
the rank-nullity theorem forces the kernel to be exactly $Q\mathcal H_{r-2}$.

The annihilation hypothesis $\mathcal M(Q\mathcal H_m) = 0$ therefore dictates that $\ker\rho_{m+2} \subseteq \ker\mathcal M$. 
By the universal property of quotient spaces, $\mathcal M$ 
factors through $\rho_{m+2}$, meaning there exists a unique linear 
functional $\mathcal E: \mathbb A \to \F_q$ such that $\mathcal M = \mathcal E \circ \rho_{m+2}$.

For each  $a\in \F_q$, denote its projection by $\xi_a = \rho_1(X-aY)$. 
It follows that 
the degree-$m$ polynomial 
$p_C(X,Y) = \prod_{c\in C}(X-cY)$ projects to $\xi_C = \prod_{c\in C}\xi_c$.
By hypothesis, the $2 \times 2$ matrix $H_{1,1}(\mathcal M(p_C\,\cdot))$ is nonsingular. 
Homogenizing its $(i,j)$-th entry to total degree $m+2$ yields $\mathcal M(p_C \cdot X^{i+j} Y^{2-i-j})$. 
Applying the factorization $\mathcal M = \mathcal E \circ \rho_{m+2}$ and expanding the product inside $\mathbb A$ gives:
$$ \mathcal M(p_C X^{i+j} Y^{2-i-j}) = \mathcal E\Big( \rho_m(p_C) \rho_1(X)^{i+j} \rho_1(Y)^{2-i-j} \Big) = 
\mathcal E\Big( \xi_C \rho_1(X)^{i+j} \rho_1(Y)^{2-i-j} \Big). $$
Since the projections $\{\rho_1(Y), \rho_1(X)\}$ span the two-dimensional algebra $\mathbb A$, these entries exactly form 
the Gram matrix of the bilinear pairing $(x,y) \mapsto \mathcal E(\xi_C xy)$ on $\mathbb A \times \mathbb A$.
The nonsingularity ensures this pairing is nondegenerate. This forces $\xi_C$ to be a unit in $\mathbb A$ (otherwise, 
multiplication by a zero divisor $\xi_C$ would create a degenerate pairing), and it simultaneously guarantees that the 
base pairing $(x,y) \mapsto \mathcal E(xy)$ is also nondegenerate.

Because the functional $\mathcal E$ maps a 2-dimensional space onto a 1-dimensional space, its kernel is a 1-dimensional subspace. 
Every nonzero element of $\ker\mathcal E$ must be a unit: if a nonzero non-unit $x$ belonged to $\ker \mathcal E$, 
its principal ideal $\langle x \rangle$ would also be 1-dimensional, forcing 
$\langle x \rangle = \ker\mathcal E$. 
This implies $\mathcal E(xy) = 0$ for all $y \in \mathbb A$, directly contradicting the nondegeneracy of the pairing. 
Fix one such unit $\kappa \in \ker \mathcal E$.

Remove from $C\cup V$ the points for which $\xi_a$ is a nonunit, and denote the remaining set by $W$. 
At most one point is removed: two distinct nonunit factors correspond to the roots of a split $Q$, and any $r$-set 
containing both would yield $p_S \in Q\mathcal H_{r-2}$. 
This forces $\mathcal M(p_S)=0$, directly contradicting the nonvanishing hypothesis. 
Since $\rho_1$ is injective on $\mathcal H_1$, the projective images $[\xi_w]$ for $w\in W$ are pairwise distinct in $G = \mathbb A^*/\F_q^*$. 
For any $S\in\binom Wr$, the factorization $\mathcal M = \mathcal E \circ \rho_r$ yields $\mathcal M(p_S) = \mathcal E(\prod_{a\in S} \xi_a)$. 
Because $\ker \mathcal E$ is the 1-dimensional space spanned by the unit $\kappa$, 
this evaluation vanishes if and only if the product $\prod_{a\in S} \xi_a$ is a scalar multiple of $\kappa$. 
In the quotient group $G$, this is precisely the equivalence:
\begin{equation}\label{eq:unit-product-zero}
 \mathcal M(p_S)=0 \quad\Longleftrightarrow\quad \prod_{a\in S}[\xi_a]=[\kappa].
\end{equation}
Depending on whether $Q$ is split, irreducible, or repeated, the projective group $G = \mathbb{A}^*/\F_q^*$ 
is isomorphic to $\F_q^*$, $\F_{q^2}^*/\F_q^*$, or the additive group $(\F_q,+)$, respectively. In all cases, $|G| \le q+1$. 

To construct a subset $S \in \binom{W}{r}$ satisfying \eqref{eq:unit-product-zero} (which yields a forbidden zero), 
we evaluate its complement $S^c = W \setminus S$, which has size $s = |W| - r \ge k$. 
Letting $P_W = \prod_{w \in W} [\xi_w]$ denote the fixed total product of all elements in $W$, the condition $\prod_{a \in S} [\xi_a] = [\kappa]$
is algebraically equivalent to $\prod_{a \in S^c} [\xi_a] = P_W [\kappa]^{-1}$. 
This explicitly reduces the problem to finding $s$ distinct elements in $W$ whose product equals this prescribed target in $G$.

Assume $m > H_q$. We initially fix $s-2$ distinct factors from $W$. The pool of remaining available elements, $D$, has size $|W|-(s-2) = r+2 = m+4$. 
For the residual target $g \in G$, the pigeonhole principle bounds the number of pairs $(x,y) \in D \times D$ such that $xy = g$. Specifically:
$$
 |D\cap gD^{-1}|\ge 2|D|-|G| \ge 2(m+4)-(q+1) \ge 3.
$$
To ensure the final two factors are distinct, we must avoid solutions to $x^2=g$. 
Except in the additive group of characteristic two, the equation $x^2=g$ 
has at most two solutions in $G$. Since $|D\cap gD^{-1}| \ge 3$, 
there exists at least one pair of strictly distinct factors 
$x, y \in D$ yielding the target product.

In the exceptional case of $(\F_q,+)$ in characteristic two, we require $s\ge3$. 
Fix $s-3$ factors. There remain $|W|-(s-3)=m+5$ available
elements, so one can choose an additional element $z$ different
from the residual target $g'$. The new target $g=g'-z$ is nonzero,
and the remaining pool $D$ has size $m+4$.
Consequently, the translating involution $x \mapsto g - x$ 
has no fixed points. The same counting argument yields $|D \cap (g-D)| \ge 3 > 0$, 
guaranteeing a distinct pair. 
This successful construction of $S$ yields a forbidden zero, 
finalizing the contradiction and proving the bound.
\end{proof} 

For a linear functional $\mathcal{F}: \F_q[X]_{\le 4} \to \F_q$, 
we define its associated $3 \times 3$ 
Hankel matrix as $H(\mathcal{F}) = \big(\mathcal{F}(X^{i+j})\big)_{0 \le i, j \le 2}$. 
For a given subset $A \subseteq \F_q$, we say that $\mathcal{F}$ \emph{survives} $A$ if $\mathcal{F}$ does 
not annihilate any polynomial associated with a $4$-subset of $A$; that is, 
$\mathcal{F}(p_S) \neq 0$ for all $S \in \binom{A}{4}$.
Let $[\mathcal{F}]$ denote the projective equivalence class of $\mathcal{F}$ 
under nonzero scalar multiplication. 
We define the set of non-degenerate, surviving functional classes on $A$ as
$$
 \mathcal B(A) = \left\{ [\mathcal F] \in \mathbb{P}\big((\F_q[X]_{\le 4})^*\big) \;\bigg|\; 
\det H(\mathcal F) \ne 0 \text{ and } \mathcal F \text{ survives } A \right\},
$$
and we let $b(A) = |\mathcal B(A)|$ denote the finite cardinality of this set.

The next observation ensures that the nonsingular contractions
obtained from different omission pairs represent distinct projective
classes.

\begin{lemma}\label{lem:collision}
Let $d+4=t$, and let distinct $d$-sets $D,D'$ differ in $j=|D\setminus D'|\in\{1,2\}$ elements. 
If their nonzero quartic contractions on $\F_q[X]_{\le 4}$, denoted $\mathcal{L}(p_D\,\cdot)$ 
and $\mathcal{L}(p_{D'}\,\cdot)$, 
are projectively equal, then their $3 \times 3$ Hankel matrix $H(\mathcal{L}(p_D\,\cdot))$ 
has rank at most $j$.
\end{lemma}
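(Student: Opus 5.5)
Write $E=D\cap D'$, of size $d-j$, and put $\mathcal G=\mathcal L(p_E\,\cdot)$, a functional on $\F_q[X]_{\le 4+j}$ (since $\deg p_E=d-j$ and $\mathcal L$ is defined through degree $t=d+4$). Let $P=p_{D\setminus E}$ and $P'=p_{D'\setminus E}$. These are coprime monic polynomials of degree $j$, because $D\setminus E$ and $D'\setminus E$ are disjoint. Then $\mathcal F:=\mathcal L(p_D\,\cdot)=\mathcal G(P\,\cdot)$ and $\mathcal F':=\mathcal L(p_{D'}\,\cdot)=\mathcal G(P'\,\cdot)$ on $\F_q[X]_{\le4}$. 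Projective equality means $\mathcal F=\lambda\mathcal F'$ with $\lambda\ne0$. Setting $R=P-\lambda P'$, we get
\[
 \mathcal G(Rf)=0\qquad\text{for every } f\in\F_q[X]_{\le 4}.
\]
Note that $R\ne0$, since $P$ and $P'$ are distinct (coprime) monic polynomials, and $\deg R\le j$.

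The plan is to use this annihilation to produce a large kernel of $H(\mathcal F)$. The entries of $H(\mathcal F)$ are $\mathcal G(PX^{a+b})$, so a vector $g\in\F_q[X]_{\le2}$ lies in the kernel exactly when $\mathcal G(PX^ag)=0$ for $a=0,1,2$. Choose $g$ so that $Pg$ is a multiple of $R$ of low degree: we want $Pg=Rh$ with $\deg h\le 4-\deg R$, for then $X^aPg=R\,(X^ah)$ with $\deg(X^ah)\le 4$. The annihilation then gives $\mathcal G(X^aPg)=\mathcal G(R\cdot X^ah)=0$.

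To find such $g$, work modulo $R$. Since $P\equiv \lambda P' \pmod R$ and $\gcd(P,P')=1$, we have $\gcd(P,R)=\gcd(P,\lambda P')=1$, so $P$ is invertible modulo $R$. Hence $Pg\equiv0\pmod R$ if and only if $R\mid g$. Take $g=R\,s$ with $\deg s\le 2-\deg R$. Then $Pg=R\,(Ps)$ with $\deg(X^aPs)\le 2+j+2-\deg R$. This is at most $4$ only when $\deg R\ge j$, i.e.\ $\deg R=j$; that holds precisely when $\lambda\ne1$. The case $\lambda=1$ must be handled separately: there $\deg R<j$, and we instead exploit that $R$ has even smaller degree, giving more room.

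The rank count goes as follows. The space of admissible $g=Rs$ has dimension $3-\deg R$, so $\rank H(\mathcal F)\le\deg R\le j$ whenever the degree bookkeeping works. I expect this bookkeeping to be the main obstacle. If it fails for some choices, I would instead use the identity $Pg=\lambda P'g+Rg$. For $g$ in the kernel direction, this rewrites $\mathcal G(X^aPg)$ as $\lambda\mathcal G(X^aP'g)$ plus a term $\mathcal G(R\,X^ag)$ that vanishes, since $\deg(X^ag)\le4$. Iterating this swap lets the factors of $P$ be exchanged for those of $P'$ one at a time.

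In the worst case, for $j=2$ with $R$ of degree $2$ and $s$ constant, we need $\deg(X^aP)\le4$ for $a\le2$, which holds since $\deg P=2$. For $j=1$, $R$ is linear, $s$ ranges over linear polynomials, and $X^aPs$ has degree at most $1+1+2=4$, which is fine. In both cases the degrees fit when $\deg R=j$. For $\lambda=1$, $\deg R<j$ only helps, since the bound becomes $\rank\le\deg R<j$; the only caveat is that the factorization $g=Rs$ needs $\deg R\ge0$, which holds because $R\ne0$.
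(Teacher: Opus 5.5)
Your argument for $\lambda\ne1$ is correct and is the affine form of the paper's proof. In that case $\deg R=j$, and for $g=Rs$ with $\deg s\le2-j$ one has $X^aPg=R\cdot X^aPs$ with $\deg(X^aPs)\le4$. Hence a $(3-j)$-dimensional space lies in $\ker H(\mathcal F)$. The invertibility of $P$ modulo $R$ is not needed; you only use the direction that $R\mid g$ gives annihilation.

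The gap is the case $\lambda=1$. There you let $s$ range over $\deg s\le2-\deg R$ and assert $\rank H(\mathcal F)\le\deg R<j$, on the grounds that a smaller $\deg R$ ``only helps''. It does the opposite. As you computed yourself, $\deg(X^2Ps)$ can then reach $4+j-\deg R>4$, which is outside the range where $\mathcal G(R\,\cdot)$ is known to vanish. The asserted bound is in fact false.

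For a counterexample, take $\mathcal L=[X^t]$. Then $\mathcal L(p_D\,\cdot)=[X^4]$ for every $d$-set $D$, so all contractions coincide with $\lambda=1$, and $H([X^4])$ has rank exactly $1$.
\begin{itemize}
\item For $j=1$, $R=u'-u$ is a nonzero constant. Your bound claims rank $0$, which would even force $\mathcal F=0$, contrary to the hypothesis.
\item For $j=2$, the same failure occurs whenever $D\setminus D'$ and $D'\setminus D$ have equal element sums, since then $R$ is again a nonzero constant.
\end{itemize}
The sketched iteration swapping the factors of $P$ for those of $P'$ does not repair this.

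The fix is to take $\deg s\le2-j$ in every case, not $\deg s\le2-\deg R$. Then $\deg g\le\deg R+2-j\le2$ and $\deg(X^aPs)\le a+2\le4$. The space $\{Rs\}$ still has dimension $3-j$ because $R\ne0$, so $\rank H(\mathcal F)\le j$.

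This is exactly what the paper does by homogenizing $Q=R$ to a binary form of degree $j$, so that a drop in degree becomes a factor of $Y$. It then places $Q\mathcal H_{2-j}$ in the radical of the pairing $\langle F,G\rangle=\mathcal M(p_{D\setminus D'}FG)$ on $\mathcal H_2$. That treats $\lambda=1$ and $\lambda\ne1$ uniformly.
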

\begin{proof}
Let $S = D \cap D'$ be the common elements of the two sets. 
We define an intermediate linear functional $\mathcal{M}$ on $\F_q[X]_{\le 4+j}$:
$ \mathcal{M}(f) = \mathcal{L}(p_S \cdot f). $
The hypothesis that the quartic contractions are projectively equal means there exists a constant $\alpha \in \F_q^*$ 
such that $\mathcal{L}(p_D f) = \alpha \mathcal{L}(p_{D'} f)$ for all $f \in \F_q[X]_{\le 4}$. 
Substituting the factorizations of $p_D= p_S p_{D \setminus D'}$ and $p_{D'} = p_S p_{D' \setminus D}$, 
this equivalence can be reduced to
$$\mathcal{M}(Q \cdot f) = 0 \quad \text{for all } f \in \F_q[X]_{\le 4}, $$
where $Q(X) = p_{D \setminus D'}(X) - \alpha p_{D' \setminus D}(X)$. 
It follows that $Q(X)$ is a nonzero polynomial of degree at most $j$.
Homogenize $\mathcal M$ in total degree $4+j$ and homogenize
$Q$ to degree $j$, retaining the same notation. Then
$\mathcal M(Q\mathcal H_4)=0$, where $\mathcal H_4$ is the
five-dimensional space of binary forms of degree four.

Our objective is to bound the rank of the $3 \times 3$ Hankel matrix $H\big(\mathcal{L}(p_D\,\cdot)\big)$, 
which is exactly the Gram matrix of the symmetric bilinear pairing 
$\langle F, G \rangle = \mathcal{L}(p_D \cdot F \cdot G)= \mathcal{M}(p_{D \setminus D'} \cdot F \cdot G) $ 
on the space of quadratic forms $\mathcal{H}_2$. 
Consider the subspace $Q\mathcal{H}_{2-j}$ inside $\mathcal{H}_2$. For any 
element $Q \cdot R \in Q\mathcal{H}_{2-j}$ (where $R \in \mathcal{H}_{2-j}$) and any test form $G \in \mathcal{H}_2$, 
we have
$$ \mathcal{M}\big(Q \cdot (p_{D \setminus D'} R G)\big) \in \mathcal{M}(Q\mathcal{H}_4) = \{0\}. $$
This proves that the entire subspace $Q\mathcal{H}_{2-j}$ lies in the radical 
(the nullspace) of the bilinear pairing on $\mathcal{H}_2$. 
By the rank-nullity theorem applied to the 3-dimensional 
space $\mathcal{H}_2$, the rank of the associated Gram matrix is bounded by
$ \rank H\big(\mathcal{L}(p_D\,\cdot)\big) \le 3 - 
\dim(Q\mathcal{H}_{2-j}) \le 3 - (3-j) = j. $
\end{proof}

We can now combine the class-size bound with the collision argument
to obtain the lower bound used by the detector construction.

\begin{lemma}\label{thm:fiber-budget}
Let $q\ge16$ and $2+\iota_q\le k<\lfloor(q-1)/2\rfloor$.
Put $t=q-k-1$ and $n=t-2$. If a noncanonical $\mathcal L$
satisfies \eqref{eq:NV}, then, for every $A\subseteq\F_q$ with
$|A|\le k+2$,
\begin{equation}\label{eq:fiber-budget}
 b(A)\ge (n-H_q)H_q>0.
\end{equation}
In particular, there exists a subset $D \subseteq \F_q \setminus A$ of size $t-4$ 
such that the restricted functional 
$\mathcal F_D = \mathcal L(p_D\,\cdot)$ on $\F_q[X]_{\le 4}$ 
produces a nonsingular $3 \times 3$ Hankel matrix and survives $A$.
\end{lemma}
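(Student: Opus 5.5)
We combine Lemma~\ref{lem:quadratic-seed}, Lemma~\ref{lem:kernel-omission}, Lemma~\ref{lem:dense-quad} and Lemma~\ref{lem:collision}.

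\textbf{Step 1: a quadratic seed.} Apply Lemma~\ref{lem:quadratic-seed} to $A$. This gives $U\subseteq\F_q\setminus A$ with $|U|=n=t-2$ and $H_{1,1}(\mathcal L(p_U\,\cdot))$ nonsingular. We then apply Lemma~\ref{lem:kernel-omission} with $\mathcal R=\mathcal L$. The functional $\mathcal L$ is defined through degree $t=n+2$, as that lemma requires. For each $u\in U$ we get a kernel class $[Q_u]$, and these partition $U$ into equivalence classes. For distinct $u_1,u_2\in U$, set $D=U\setminus\{u_1,u_2\}$, a $(t-4)$-set. By Lemma~\ref{lem:kernel-omission}, $\det H(\mathcal F_D)\ne0$ exactly when $u_1,u_2$ lie in different classes.

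\textbf{Step 2: bounding class sizes.} Let $C$ be a class of size $m$. We show $m\le H_q$. If $m\le2$ this is immediate, since $H_q\ge5$ for $q\ge16$. Suppose $m\ge3$ and put $\mathcal M=\mathcal L(p_{U\setminus C}\,\cdot)$, which is defined through degree $m+2$. Lemma~\ref{lem:kernel-omission} gives $\rank H_{2,m}(\mathcal M)\le2$. Also $\mathcal M(p_C\,\cdot)=\mathcal L(p_U\,\cdot)$ on quadratics, so $\det H_{1,1}(\mathcal M(p_C\,\cdot))\ne0$. For Lemma~\ref{lem:dense-quad} we need $V$ disjoint from $C$ with $|V|=k+3$ and $\mathcal M(p_S)\ne0$ for all $(m+2)$-subsets $S\subseteq C\cup V$. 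Take $V$ disjoint from $U$. Then $p_{U\setminus C}\,p_S=p_{(U\setminus C)\cup S}$ is a product over a $t$-set, so \eqref{eq:NV} gives nonvanishing. Such $V$ exists because $|\F_q\setminus U|=q-t+2=k+3$; indeed $V=\F_q\setminus U$ exactly. The lemma also needs $k\ge2+\iota_q$ and $q\ge16$, which are assumed. Hence $m\le H_q$.

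\textbf{Step 3: counting survivors.} Every class has size at most $H_q$, and $n>2H_q$ because $k<\lfloor(q-1)/2\rfloor$. The number of unordered pairs in distinct classes is $\frac12\bigl(n^2-\sum m_C^2\bigr)$. Convexity with $m_C\le H_q$ gives $\sum m_C^2\le nH_q$, so the count is at least $n(n-H_q)/2$. Each such pair gives $D$ with $\det H(\mathcal F_D)\ne0$.

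Survival of $A$ is automatic. Indeed $D\cap A=\emptyset$, so $D\cup S$ is a $t$-set for each $4$-set $S\subseteq A$, and $\mathcal F_D(p_S)=\mathcal L(p_{D\cup S})\ne0$ by \eqref{eq:NV}.

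Distinct sets $D,D'$ of this form differ in one or two elements. If they gave the same projective class, Lemma~\ref{lem:collision} would force $\rank H(\mathcal F_D)\le2$, contradicting nonsingularity. So distinct $D$ give distinct classes in $\mathcal B(A)$, and $b(A)\ge n(n-H_q)/2$.

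Finally, $n(n-H_q)/2\ge (n-H_q)H_q$ because $n\ge2H_q$. Positivity holds since $n>H_q$ and $H_q>0$. The subsets $D$ just constructed give the ``in particular'' claim.

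\textbf{Main obstacle.} The delicate point is Step~2: the hypotheses of Lemma~\ref{lem:dense-quad} must be met exactly. This requires $V$ to be precisely the complement of $U$, with the right size $k+3$, and correct degree bookkeeping for $\mathcal M$. The counting inequalities in Step~3 should be routine once the class bound is in place.
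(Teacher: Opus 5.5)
Your Steps 1 and 2, and the survival and collision arguments in Step 3, match the paper's proof. The counting in Step 3, however, rests on a false inequality. You assert $n>2H_q$ (and later $n\ge 2H_q$), but the reverse holds throughout the lemma's range. Here $n=q-k-3$, while $2H_q=q-5$ for odd $q$ and $2H_q=q-6$ for even $q$. So $k\ge 2+\iota_q$ gives $n\le 2H_q$, with equality only at $k=2+\iota_q$. Combined with $k<\lfloor(q-1)/2\rfloor$, one has $H_q<n\le 2H_q$.

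Your termwise bound $\sum_C m_C^2\le nH_q$ therefore yields only $b(A)\ge n(n-H_q)/2$. This is strictly smaller than $(n-H_q)H_q$ whenever $n<2H_q$. For example, $q=17$, $k=7$ gives $n=7$ and $H_q=6$, so you obtain $7/2$ where $6$ is required. Your final inequality $n(n-H_q)/2\ge (n-H_q)H_q$ thus fails in general, and the stated bound is not proved. That bound equals $Q_{\rm f}(q,k)$, which is exactly the quantity the detector transport later compares against. The ``in particular'' existence claim does survive, since $n(n-H_q)/2>0$.

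The missing idea is to use the cap $m_C\le H_q$ more sharply. You have $\sum_C m_C=n$ with $H_q<n\le 2H_q$ and $0\le m_C\le H_q$. The vector $(H_q,\,n-H_q,\,0,\ldots)$ majorizes every admissible partition, so convexity of $x\mapsto x^2$ gives $\sum_C m_C^2\le H_q^2+(n-H_q)^2$. Hence the number of cross-class pairs is at least $\tfrac12\bigl(n^2-H_q^2-(n-H_q)^2\bigr)=H_q(n-H_q)$. This is the paper's step, and with it the rest of your argument goes through unchanged.
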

\begin{proof}
It follows from Lemma~\ref{lem:quadratic-seed} that there exists a subset 
$U \subseteq \F_q \setminus A$ of size $n = t-2$ 
such that the $2 \times 2$ Hankel matrix $H_{1,1}(\mathcal L(p_U \,\cdot\,))$ 
is nonsingular.
Taking $\mathcal R=\mathcal L$ in Lemma~\ref{lem:kernel-omission}, 
each $u \in U$ is associated with a one-dimensional kernel $[Q_u]$. 
Let $C_1, C_2, \ldots$ 
be the resulting equivalence classes in $U$, and let $m_i = |C_i|$ be their sizes. 
Lemma~\ref{lem:kernel-omission} establishes that
selecting $u_1$ and $u_2$ from different equivalence classes yields 
a subset $U \setminus \{u_1, u_2\}$
whose associated quartic contraction $\mathcal L(p_{U \setminus \{u_1, u_2\}} \,\cdot\,)$ 
produces a nonsingular $3 \times 3$ Hankel matrix.
The number of such pairs is exactly the number of edges in a complete multipartite graph 
with part sizes $m_i$, given by $\frac{1}{2}(n^2 - \sum_i m_i^2)$.

We bound the size of each equivalence class $C\subseteq U$.
Let $\mathcal M = \mathcal L(p_{U \setminus C} \,\cdot\,)$ 
and $V = \F_q \setminus U$. 
Note that $|V| = q - (q-k-3) = k+3$. Lemma~\ref{lem:kernel-omission} 
gives $\rank H_{2, |C|}(\mathcal M) \le 2$. 
The nonsingularity of $H_{1,1}(\mathcal L(p_U \,\cdot\,))$ 
ensures $\det H_{1,1}(\mathcal M(p_C \,\cdot\,)) \ne 0$. 
Furthermore, for any $S \in \binom{C \cup V}{|C|+2}$, 
the disjoint union $(U \setminus C) \cup S$ has size $(t - 2 - |C|) + (|C| + 2) = t$. 
The hypothesis \eqref{eq:NV} requires $\mathcal L$ 
to be non-vanishing on any $t$-subset of $\F_q$, 
guaranteeing $\mathcal M(p_S) = \mathcal L(p_{(U \setminus C) \cup S}) \ne 0$. 
The hypotheses of Lemma~\ref{lem:dense-quad} 
are therefore completely satisfied, which strictly bounds each part size by 
$m_i \le H_q = \lfloor(q-5)/2\rfloor$. (If $m_i \le 2$, 
this bound holds trivially since $q \ge 16$).

The dimension bounds $2+\iota_q \le k < \lfloor(q-1)/2\rfloor$ imply $H_q < n \le 2H_q$. 
The number of nonsingular omission pairs equals the edge count of the complete multipartite graph, 
which is $\frac{1}{2}(n^2 - \sum_i m_i^2)$. 
Minimizing this count requires maximizing the strictly convex function $\sum_i m_i^2$ 
subject to $\sum_i m_i = n$ and $m_i \le H_q$. 
For $n \in (H_q, 2H_q]$, this maximum is uniquely attained at the 
boundary partition $\{H_q, n-H_q\}$. 
Consequently, $\sum_i m_i^2 \le H_q^2 + (n-H_q)^2$, yielding at least
$$
 \frac{n^2 - \big(H_q^2 + (n-H_q)^2\big)}{2} = H_q(n-H_q)
$$
nonsingular pairs.

For each such pair $\{u_1, u_2\}$, define $D = U \setminus \{u_1, u_2\}$. 
These sets have size $t-4$ and pairwise differ by at most two elements. 
If any two such sets $D, D'$ produced projectively equivalent contractions, 
Lemma~\ref{lem:collision} would force the rank of 
their Hankel matrices to be at most 2, contradicting their established nonsingularity. 
Thus, these $H_q(n-H_q)$ projective classes are strictly distinct. 
Finally, for any $S \in \binom{A}{4}$, $D \cap S = \varnothing$, 
making $D \cup S$ a $t$-set. Condition \eqref{eq:NV} 
implies $\mathcal F_D(p_S) = \mathcal L(p_{D \cup S}) \ne 0$, 
verifying that every such contraction survives $A$.
\end{proof}

\section{Involution detectors}\label{sec:graphs}
The multiplicity bound gives a lower estimate for the surviving
quartic functionals associated with a noncanonical deep hole. We
construct small sets with fewer surviving functionals by combining
projective involutions with independent-set estimates, and then
propagate the resulting exclusion through the intermediate dimensions.

Let $\mathcal F: \F_q[X]_{\le 4} \to \F_q$ be a linear functional whose $3 \times 3$ 
Hankel matrix $H(\mathcal F)$ is nonsingular. 
Let $T \subseteq \F_q$ be a subset of size $|T| = s \ge 4$. For each 2-subset $C \in \binom{T}{2}$, 
define the parameters:
\[
 A_C=\mathcal F(p_C),\quad B_C=\mathcal F(Xp_C),\quad
 E_C=\mathcal F(X^2p_C),\quad \Delta_C=B_C^2-A_CE_C,\quad
 J_C=\begin{pmatrix}B_C&-E_C\\ A_C&-B_C\end{pmatrix}.
\]
We call a 2-subset $C$ \emph{usable} when its associated matrix $J_C$ 
is both invertible  and nonscalar.

\begin{lemma}\label{lem:anchors1}
Among all pairs in $T$, at least $\frac{s(s-3)}{2}-\iota_q$ of the 2-subsets are usable. 
Furthermore, if $\mathcal F$ survives $T$, 
then the singular 2-subsets, namely the pairs in
\[
 \left\{ C \in \binom{T}{2} \;\middle|\; \Delta_C = 0 \right\},
\]
are mutually disjoint. Consequently, 
this guarantees that at least $\binom{s}{2}-\lfloor s/2 \rfloor-\iota_q$ of the 2-subsets are usable.
\end{lemma}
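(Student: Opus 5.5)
The plan is to reduce usability to two algebraic conditions on $A_C,B_C,E_C$, and then to count the pairs violating each one by means of the one-dimensional kernels of Lemma~\ref{lem:kernel-omission}. Since $\det J_C=-B_C^2+A_CE_C=-\Delta_C$, the matrix $J_C$ is invertible exactly when $\Delta_C\ne0$. It is scalar exactly when $A_C=E_C=0$ and $2B_C=0$. In odd characteristic a scalar $J_C$ is therefore zero, so usability is equivalent to $\Delta_C\ne0$. In even characteristic the only extra failures are the pairs with $A_C=E_C=0$ (and then $\Delta_C=B_C^2\ne0$). So it suffices to prove three things: (a) at most $s$ pairs have $\Delta_C=0$; (b) in even characteristic at most one pair has $A_C=E_C=0$; (c) under survival, the pairs with $\Delta_C=0$ are pairwise disjoint, so there are at most $\lfloor s/2\rfloor$ of them. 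Then (a) and (b) give at least $\binom s2-s-\iota_q=\frac{s(s-3)}2-\iota_q$ usable pairs, and (c) and (b) give the second count.

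For (a), fix $c\in T$ and put $\mathcal F_c=\mathcal F((X-c)\,\cdot)$ on $\F_q[X]_{\le3}$. Restricting $H_{1,2}(\mathcal F_c)$ to the multiples of $X-c$ recovers two rows of the nonsingular matrix $H(\mathcal F)$, so $H_{1,2}(\mathcal F_c)$ has rank two. Exactly as in Lemma~\ref{lem:kernel-omission}, it then has a one-dimensional kernel $[Q_c]$ with $\deg Q_c\le2$.

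Now suppose $\Delta_{\{c,x\}}=0$. Then $H_{1,1}(\mathcal F(p_{\{c,x\}}\,\cdot))$ is singular, so some nonzero linear $g$ satisfies $\mathcal F((X-c)(X-x)X^ig)=0$ for $i=0,1$. Hence $(X-x)g$ lies in the kernel, that is, $(X-x)g\in[Q_c]$, and so $X-x$ divides $Q_c$. A nonzero polynomial of degree at most two has at most two roots, so $c$ lies in at most two singular pairs. Summing over $c$ counts each pair twice, giving at most $2s/2=s$ singular pairs.

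For (b), the form $\langle f,g\rangle=\mathcal F(fg)$ on $\F_q[X]_{\le2}$ is nondegenerate because $H(\mathcal F)$ is its Gram matrix. The conditions $A_C=E_C=0$ say that $p_C$ is orthogonal to both $1$ and $X^2$. This orthogonal complement is one-dimensional, so the monic quadratic $p_C$, and hence $C$, is determined uniquely.

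For (c), suppose $\Delta_{\{c,x\}}=\Delta_{\{c,y\}}=0$ with $x\ne y$. The argument of (a) forces $Q_c$ to be a nonzero multiple of $(X-x)(X-y)$. The kernel conditions then read $\mathcal F(p_{\{c,x,y\}})=\mathcal F(Xp_{\{c,x,y\}})=0$, so $\mathcal F((X-w)p_{\{c,x,y\}})=0$ for every $w$. Since $s\ge4$, we may choose $w\in T\setminus\{c,x,y\}$, and $S=\{c,x,y,w\}$ then violates survival. Therefore no two singular pairs share a point, and there are at most $\lfloor s/2\rfloor$ of them.

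The step needing the most care is (a). There one must check that the kernel is genuinely one-dimensional, using the rank-two restriction argument with $H(\mathcal F)$ in place of $H_{1,1}$. One must also check that a degenerate $Q_c$, of degree below two or a repeated square, still allows at most two values of $x$; this holds, since distinct $x$ give distinct linear factors.
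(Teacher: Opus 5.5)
Your proof is correct and follows the paper's route. It uses the same one-dimensional kernel $Q_c$ of $H_{1,2}(\mathcal F_c)$, the same survival contradiction from $\mathcal F(p_{\{c,x,y\}})=\mathcal F(Xp_{\{c,x,y\}})=0$, and the same uniqueness of the pair with $A_C=E_C=0$ from the nonsingularity of $H(\mathcal F)$. The only variation is that you obtain $(X-x)\mid Q_c$ from a null vector of the $2\times2$ Hankel matrix, whereas the paper uses the explicit identity $Q_c(x)=-\Delta_{\{c,x\}}$.

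One wording fix concerns the rank-two step. Restricting the domain of $H_{1,2}(\mathcal F_c)$ to multiples of $X-c$ gives $H_{1,1}(\mathcal F((X-c)^2\,\cdot))$, which need not be nonsingular. The correct reason is that the two rows of $H_{1,2}(\mathcal F_c)$ are $\mathbf r_1-c\,\mathbf r_0$ and $\mathbf r_2-c\,\mathbf r_1$, where $\mathbf r_0,\mathbf r_1,\mathbf r_2$ are the rows of the nonsingular matrix $H(\mathcal F)$. These two combinations are linearly independent, so the rank is two.
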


\begin{proof}
Fix $z \in T$. Let $\mathbf{r}_0, \mathbf{r}_1, \mathbf{r}_2$ 
be the rows of the nonsingular 
Hankel matrix $H(\mathcal F)$,
and thus the two shifted rows $\mathbf{r}_1 - z\mathbf{r}_0$ and $\mathbf{r}_2 - z\mathbf{r}_1$ 
are linearly independent. 
Let $\mathbf{v}_0, \mathbf{v}_1, \mathbf{v}_2 \in \F_q^2$ 
denote the columns of 
the $2 \times 3$ matrix formed by these two rows. 
Its one-dimensional right nullspace is spanned by 
the coefficients of the non-zero polynomial
\[
Q_z(X) = \det(\mathbf{v}_1, \mathbf{v}_2) - 
X\det(\mathbf{v}_0, \mathbf{v}_2) + X^2\det(\mathbf{v}_0, \mathbf{v}_1).
\]
For a 2-subset $C = \{z, w\}$, direct expansion of 
the functional $\mathcal F$ over $p_C(X) = (X-z)(X-w)$ 
reveals that 
$\mathbf{v}_1 - w \mathbf{v}_0 = (A_C, B_C)^{\mathsf T}$ and $\mathbf{v}_2 - 
w \mathbf{v}_1 = (B_C, E_C)^{\mathsf T}$. 
Therefore,
$
 Q_z(w) = \det(\mathbf{v}_1 - w \mathbf{v}_0, \mathbf{v}_2-
 w \mathbf{v}_1)=-\Delta_C.
$
Thus, a 2-subset $C = \{z, w\}$ is singular (i.e., $\Delta_C = 0$) 
if and only if $w$ is a root of $Q_z(X)$,
which implies that  there are at most $s$ singular 2-subsets in $T$. 

Now, suppose $\mathcal F$ survives $T$. 
Assume for contradiction that there exists an element $z \in T$ and two distinct elements 
$w_1, w_2 \in T \setminus \{z\}$ such that $\Delta_{\{z,w_1\}} = \Delta_{\{z,w_2\}} = 0$.
We must have $Q_z(X) = \lambda(X-w_1)(X-w_2)$ for some non-zero scalar $\lambda$. 
By the definition of $Q_z(X)$, we have two polynomial kernel equations:
\[
 \mathcal F\big((X-z)Q_z(X)\big) = 0 \quad \text{and} \quad \mathcal F\big(X(X-z)Q_z(X)\big) = 0.
\]
Since $|T| = s \ge 4$, we can arbitrarily choose a fourth point $w \in T \setminus \{z, w_1, w_2\}$. 
The root polynomial of the $4$-subset $S = \{z, w_1, w_2, w\}$ 
can be factored as $p_S(X) = (X-w)(X-z) \cdot \lambda^{-1} Q_z(X)$. Expanding this algebraically 
and applying the functional $\mathcal F$ yields:
\[
 \mathcal F(p_S) = \lambda^{-1} \Big[ \mathcal F\big(X(X-z)Q_z(X)\big) 
-w\mathcal F\big((X-z)Q_z(X)\big) \Big].
\]
Substituting the two kernel equations into this expression gives $\mathcal F(p_S) = 0$, 
which contradicts the assumption that $\mathcal F$ survives $T$. 
Thus, every element in $T$ can have at most one singular partner, 
proving that the singular 2-subsets are mutually disjoint.

For $\Delta_C\ne0$, the matrix 
$J_C=\left(\begin{smallmatrix}B_C&-E_C\\A_C&-B_C\end{smallmatrix}\right)$ 
acts on the projective line $\mathbb{P}^1(\F_q)$ 
via the fractional linear transformation $y = (B_Cx-E_C)/(A_Cx-B_C)$. 
Direct matrix multiplication shows $J_C^2=\Delta_C I$. 
Thus its projective order divides two. It has order exactly two
when $J_C$ is nonscalar, that is, when $C$ is usable.
A scalar matrix $J_C=\lambda I$ forces $A_C=E_C=0$ and $B_C\ne0$; 
this degeneracy only occurs in characteristic two ($\iota_q=1$). 
The coefficients of the monic quadratic $p_C$ satisfy the exact identity 
$H(\mathcal F)\operatorname{coeff}(p_C)=(A_C,B_C,E_C)^{\mathsf T}$. 
Since $H(\mathcal F)$ is nonsingular, a scalar matrix $J_C$ implies 
$\operatorname{coeff}(p_C)=B_C H(\mathcal F)^{-1}(0,1,0)^{\mathsf T}$. 
The requirement that the quadratic polynomial $p_C$ is monic uniquely fixes 
the scalar $B_C$, which completely determines the polynomial $p_C(X)$. 
Hence, at most one such scalar 2-subset $C$ can exist.

Similarly, if two distinct usable 2-subsets $C_1$ and $C_2$ define identical projective maps, 
their projective matrices must be proportional, 
meaning $(A_{C_1}, B_{C_1}, E_{C_1})^{\mathsf T} = \lambda (A_{C_2}, B_{C_2}, E_{C_2})^{\mathsf T}$ 
for some $\lambda \in \F_q^*$. 
Applying $H(\mathcal F)^{-1}$ yields 
$\operatorname{coeff}(p_{C_1}) = \lambda \operatorname{coeff}(p_{C_2})$. The monic 
requirement then forces $\lambda = 1$, 
resulting in identical polynomials $p_{C_1} = p_{C_2}$ and contradicting $C_1 \neq C_2$.

Subtracting at most $s$ singular 2-subsets and $\iota_q$ 
scalar 2-subsets from $\binom{s}{2}$ yields at least 
$\frac{s(s-3)}{2}-\iota_q$ usable 2-subsets. If $\mathcal F$ survives $T$, 
this disjointness strictly caps the number of singular 2-subsets at 
$\lfloor s/2 \rfloor$, raising the usable lower bound to 
$\binom{s}{2}-\lfloor s/2 \rfloor-\iota_q$.
\end{proof}

We next combine the two-element orbits of these involutions. For a
nonempty family $\mathcal C$ of usable pairs, let $G_{\mathcal F}$
be the graph on $\F_q\setminus T$ whose edges are the distinct sets
$\{x,J_C(x)\}$ with $C\in\mathcal C$, $x\ne J_C(x)$, and both
endpoints outside $T$.

\begin{lemma}\label{lem:anchors2}
The graph $G_{\mathcal F}$ satisfies
\begin{equation}\label{eq:graph-counts}
 \Delta(G_{\mathcal F})\le |\mathcal C|,\qquad
 e(G_{\mathcal F})\ge
 \frac{|\mathcal C|}{2}(q-2s-|\mathcal C|-\delta_q).
\end{equation}
\end{lemma}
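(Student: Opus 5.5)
The plan is to treat each usable pair $C\in\mathcal C$ separately and then sum. By Lemma~\ref{lem:anchors1}, each such $J_C$ acts on $\PP^1(\F_q)$ as a projective involution of exact order two. Its restriction to $\F_q$ therefore partitions most of the field into two-element orbits $\{x,J_C(x)\}$.

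For the degree bound, fix a vertex $x\in\F_q\setminus T$. Every edge at $x$ has the form $\{x,J_C(x)\}$ for some $C\in\mathcal C$. Distinct neighbours of $x$ must come from distinct $C$, since each involution determines its partner $J_C(x)$ uniquely. Hence $d(x)\le|\mathcal C|$, which gives $\Delta(G_{\mathcal F})\le|\mathcal C|$.

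For the edge count, fix $C$ and count the orbits of $J_C$ that give edges.
\begin{itemize}
\item A nonscalar involution of $\PP^1(\F_q)$ has at most two fixed points; in characteristic two it has exactly one, since $J_C^2=\Delta_C I$ and the trace of $J_C$ is $0$. So at most $\delta_q$ points of $\PP^1(\F_q)$ are fixed. Including the point $\infty$ and its partner, at most $\delta_q+1$ elements of $\F_q$ fail to lie in a two-element orbit inside $\F_q$. More carefully, the orbits meeting $\infty$ or a fixed point remove at most $\delta_q$ field elements, and I will check this count against the $\delta_q$ term in \eqref{eq:graph-counts}.
\item Discarding the orbits that meet $T$ removes at most $2s$ further field elements.
\end{itemize}
This leaves at least $(q-2s-\delta_q)/2$ two-element orbits in $\F_q\setminus T$ for the given $C$, each an edge $\{x,J_C(x)\}$.

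Next we handle edges counted more than once. An edge $\{x,y\}$ may arise from several $C$. A projective involution is determined by two of its orbits (a $2$-transposition of four points determines the Möbius map up to compatibility), so I will instead bound overlaps directly. For fixed $C$, an edge $\{x,J_C(x)\}$ that coincides with an edge of some $C'\ne C$ has $J_C(x)=J_{C'}(x)$. By the proportionality argument in Lemma~\ref{lem:anchors1}, $J_C$ and $J_{C'}$ are distinct projective maps. Two distinct involutions agree on at most two points, and such points come in a single shared orbit, so they share at most one orbit.

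Processing the pairs $C$ in order, the $i$-th pair therefore contributes at least $(q-2s-\delta_q)/2-(i-1)$ new edges. Summing over $i=1,\dots,|\mathcal C|$ gives at least $\frac{|\mathcal C|}{2}(q-2s-\delta_q)-\binom{|\mathcal C|}{2}$ edges. This is at least $\frac{|\mathcal C|}{2}(q-2s-|\mathcal C|-\delta_q)$, as required.

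The main obstacle is the bookkeeping of $\infty$ and the fixed points needed to get exactly $\delta_q$. In odd characteristic there are two or zero fixed points, and $\infty$ may pair with a field element. The fix is to treat $\infty$ as a point removed alongside $T$, or to absorb the loss into the slack between $\binom{|\mathcal C|}{2}$ and $|\mathcal C|^2/2$, which is $|\mathcal C|/2$ and covers one extra lost element per pair.
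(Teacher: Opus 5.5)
Your argument is the paper's: each $J_C$ is a matching, which gives the degree bound; each has at least $(q+1-\delta_q)/2$ two-element orbits on $\PP^1(\F_q)$; and two distinct involutions share at most one orbit, which gives the $\binom{|\mathcal C|}{2}$ correction. The only slip is your intermediate per-pair count $(q-2s-\delta_q)/2$, which is too large by $1/2$; your ``more careful'' claim fails, since e.g.\ $x\mapsto 1/x$ in odd characteristic loses three field elements ($\pm1$ and $0$, the partner of $\infty$). Your first fix, deleting $\infty$ together with $T$ to get $(q-2s-\delta_q-1)/2$ orbits per pair (the slack $|\mathcal C|/2$ absorbs exactly this half-orbit), is precisely the paper's bookkeeping and gives the stated bound exactly.
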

\begin{proof}
Each involution supplies a matching, so the degree bound is immediate.
For a usable pair $C$, the fixed points of $J_C$ are its projective
eigenvectors. Since $J_C^2=\Delta_C I$, there are at most two in
odd characteristic and exactly one in characteristic two. Hence
$J_C$ has at least $(q+1-\delta_q)/2$ two-element orbits on
$\PP^1(\F_q)$. Deleting the $s+1$ points of $T\cup\{\infty\}$
leaves at least $(q-2s-\delta_q-1)/2$ edges.

By Lemma~\ref{lem:anchors1}, distinct usable pairs give distinct
projective transformations. Two such transformations share at most
one external edge: two common two-element orbits would give four
distinct points on which the transformations agree, whereas a
projective transformation is determined by its values at three
distinct points. Inclusion--exclusion therefore gives
\begin{align*}
 e(G_{\mathcal F})
 &\ge |\mathcal C|\frac{q-2s-\delta_q-1}{2}
             -\binom{|\mathcal C|}{2}\\
 &=\frac{|\mathcal C|}{2}(q-2s-|\mathcal C|-\delta_q).
\end{align*}
\end{proof}

The edge count becomes useful through an upper bound for the
independent-set polynomial.

\begin{lemma}\label{lem:degree-bound}
	Let $G$ be a graph with $n$ vertices, $e$ edges, and maximum degree at most $m \ge 1$. 
	For any real number $\lambda > 0$, the independent set polynomial of $G$, 
	defined as $P_G(\lambda)= \sum_{I \in \mathcal{I}(G)} \lambda^{|I|}$ 
	where $\mathcal{I}(G)$ denotes the collection of all independent sets in $G$, satisfies
	\begin{equation}\label{eq:weighted-degree}
		P_G(\lambda)\le(1+\lambda)^n
		\bigl(2(1+\lambda)^{-m}-(1+\lambda)^{-2m}\bigr)^{e/m^2}.
	\end{equation}
\end{lemma}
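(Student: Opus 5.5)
We will derive the bound from the Sah--Sawhney--Stoner--Zhao inequality of Lemma~\ref{lem:sssz}, bounding each edge factor using only $d_u,d_v\le m$. Write $y=(1+\lambda)>1$. Isolated vertices each contribute a factor $1+\lambda=y$. So if $G'$ is $G$ with its isolated vertices removed, with $n'$ vertices, then $P_G(\lambda)=y^{\,n-n'}P_{G'}(\lambda)$. Since $G'$ has the same $e$ edges and maximum degree at most $m$, it suffices to prove the inequality for $G'$. We therefore assume there are no isolated vertices (if $e=0$ the claim reads $P_G\le y^n$, which is trivial).

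Next we rewrite the target. Since $n=\sum_v 1=\sum_{uv\in E}\bigl(1/d_u+1/d_v\bigr)$, we have
\[
 y^n=\prod_{uv\in E}y^{1/d_u+1/d_v}.
\]
The claimed bound is therefore equivalent to the edgewise inequality
\[
 \bigl(y^{a}+y^{b}-1\bigr)^{1/(ab)}\le y^{1/a+1/b}\bigl(2y^{-m}-y^{-2m}\bigr)^{1/m^2}
\]
for integers $1\le a,b\le m$, after which we multiply over edges. Taking logarithms and writing $s=\log y>0$, put
\[
 \phi(a,b)=\frac{1}{ab}\log\bigl(y^{-b}+y^{-a}-y^{-a-b}\bigr)=\frac{1}{ab}\log\bigl(1-(1-y^{-a})(1-y^{-b})\bigr).
\]
We must show $\phi(a,b)\le\phi(m,m)$, i.e. that $\phi$ is nondecreasing in each variable on $[1,m]^2$.

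For fixed $b$, set $x=1-y^{-a}\in(0,1)$ and $c=1-y^{-b}$, so $a=-\log(1-x)/s$ and
\[
 \phi=\frac{s}{b}\cdot\frac{\log(1-cx)}{-\log(1-x)}.
\]
The quantity $\log(1-cx)$ is negative, so monotonicity in $a$ (equivalently in $x$) amounts to showing that
\[
 g(x)=\frac{-\log(1-cx)}{-\log(1-x)}
\]
is nonincreasing on $(0,1)$ for $0<c<1$. This is the main obstacle. We plan to handle it by expanding in power series or by a derivative computation. The sign of $g'$ is the sign of
\[
 \frac{c}{1-cx}\bigl(-\log(1-x)\bigr)-\frac{1}{1-x}\bigl(-\log(1-cx)\bigr),
\]
so it suffices to show that $h(u)=(1-u)\bigl(-\log(1-u)\bigr)/u$ is decreasing in $u\in(0,1)$ and to apply this with $u=cx<x$. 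Since $(1-u)\bigl(-\log(1-u)\bigr)=u-\sum_{j\ge2}u^j/\bigl(j(j-1)\bigr)$, we get $h(u)=1-\sum_{j\ge2}u^{j-1}/\bigl(j(j-1)\bigr)$, which is visibly decreasing. Hence $g$ is nonincreasing, $\phi$ is nondecreasing in $a$, and by symmetry also in $b$.

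Finally, we combine the pieces. Monotonicity gives $\phi(a,b)\le\phi(m,m)$ for all $1\le a,b\le m$. Exponentiating yields the edgewise inequality, and taking the product over the $e$ edges and substituting into Lemma~\ref{lem:sssz} gives \eqref{eq:weighted-degree}.
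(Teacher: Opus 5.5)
Your proof is correct and follows the paper's route. You apply Lemma~\ref{lem:sssz}, strip isolated vertices, distribute $(1+\lambda)^n$ over the edges via $n=\sum_{uv}(1/d_u+1/d_v)$, and show the normalized edge factor is nondecreasing in each degree, so it is bounded by its value at $(m,m)$. The only difference is in the monotonicity step: you use the substitution $x=1-y^{-a}$ and the power series of $(1-u)(-\log(1-u))/u$, whereas the paper uses concavity of $a\mapsto-\ln(1-c+ce^{-La})$ together with its vanishing at $a=0$.
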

\begin{proof}
By Lemma~\ref{lem:sssz}, 
$P_G(\lambda) \le (1+\lambda)^{n_{iso}} \prod_{uv \in E} 
\big((1+\lambda)^{d_u}+(1+\lambda)^{d_v}-1\big)^{1/(d_u d_v)}$, 
where $n_{iso}$ is the number of isolated vertices. 
To normalize this bound, we divide by $(1+\lambda)^n$.
Since the total number of vertices is 
$n = n_{iso} + \sum_{uv \in E} \left(\frac{1}{d_u} + \frac{1}{d_v}\right)$, 
the normalized bound simplifies to $\prod_{uv \in E} f(d_u, d_v)$, 
where we substitute $w=(1+\lambda)^{-1} \in (0,1)$ and define the edge 
contribution factor as $f(a,b) = (w^a+w^b-w^{a+b})^{1/(ab)}$.

To prove that $f(a,b)$ monotonically increases with the vertex degrees, 
we analyze its natural logarithm: $\ln f(a,b) = -\frac{g(a)}{ab}$, 
where $g(a) = -\ln(w^a+w^b-w^{a+b})$. 
For a fixed $b>0$, let $c = 1-w^b \in (0,1)$ and $L = -\ln w > 0$. 
We can rewrite $g(a)$ as $g(a) = -\ln(1-c+c e^{-La})$. 
Taking derivatives with respect to $a$ yields:
$$g'(a) = \frac{cL}{(1-c)e^{La}+c},
\quad g''(a) = -\frac{c(1-c)L^2 e^{La}}{((1-c)e^{La}+c)^2} < 0.$$
This shows that $g(a)$ is strictly concave. Since $g(0) = -\ln(1) = 0$, 
the properties of concave functions dictate that the ratio $g(a)/a$ 
is strictly decreasing for $a > 0$. 
Consequently, the term $\ln f(a,b) = -\frac{1}{b} \frac{g(a)}{a}$ is 
strictly increasing with respect to $a$ (and, by symmetry, with respect to $b$). 
We can therefore upper bound the product by replacing every endpoint degree with the 
graph's maximum degree $m \ge 1$. This yields $f(a,b) \le f(m,m) = (2w^m - w^{2m})^{1/m^2}$. 
Taking the product over all $e$ edges and multiplying the initial $(1+\lambda)^n$ factor 
back to the right-hand side completes the proof, with isolated vertices natively 
accounted for by the initial factorization.
\end{proof}

To propagate a detector to higher dimensions, we also count exterior
points that already violate survival together with three anchor
points. For a functional $\mathcal F$ surviving $T$, put
\[
 N_{\mathcal F}=\{x\in\F_q\setminus T:
 \mathcal F(p_{U\cup\{x\}})=0
 \text{ for some }U\in\tbinom T3\},\qquad
 \nu=|N_{\mathcal F}|.
\]

\begin{lemma}\label{thm:joint-deletion}
Let $q\ge16$ and $|T|=s\ge6$. If a nonsingular functional
$\mathcal F$ survives $T$, then
$\nu\ge\max\{0,s-5-\delta_q\}$.
\end{lemma}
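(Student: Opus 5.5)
The plan is to show that the points of $N_{\mathcal F}$ are the images of anchor points under one of the projective involutions $J_C$ of Lemma~\ref{lem:anchors1}. Since such a map is injective on $\PP^1(\F_q)$, this should give $s-3$ distinct points, comfortably more than the stated $\max\{0,s-5-\delta_q\}$.

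\textbf{Step 1: a closed formula for the exterior root.} Fix a $3$-subset $U=C\cup\{c\}$ of $T$ with $C=\{a,b\}$. Consider the linear functional $g\mapsto\mathcal F(p_U g)$ on $\F_q[X]_{\le1}$. It is nonzero, because survival on $T$ gives $\mathcal F(p_U(X-w))\ne0$ for any $w\in T\setminus U$, and such $w$ exist since $s\ge6$. Hence its kernel is one-dimensional.

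Expanding $(X-c)(X-x)=X^2-(c+x)X+cx$ and using the notation $A_C,B_C,E_C$ of Lemma~\ref{lem:anchors1}, we get
\[
 \mathcal F\bigl(p_C(X-c)(X-x)\bigr)=E_C-(c+x)B_C+cxA_C .
\]
This vanishes exactly when $x(A_Cc-B_C)=B_Cc-E_C$, that is, when $x=J_C(c)$ as a point of $\PP^1(\F_q)$. So the unique root, possibly $\infty$, of the kernel line is $x_U=J_C(c)$. Survival of $\mathcal F$ on $T$ forces $J_C(c)\notin T$. In particular $J_C$ cannot be scalar, since a scalar matrix would give $J_C(c)=c\in T$. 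Whenever $J_C(c)\ne\infty$ it is a genuine point of $N_{\mathcal F}$.

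\textbf{Step 2: choose an invertible pair.} By Lemma~\ref{lem:anchors1}, at least $\binom s2-\lfloor s/2\rfloor-\iota_q>0$ pairs are usable when $s\ge6$. Fix one usable pair $C$. Then $J_C$ is invertible, so it acts as a bijection of $\PP^1(\F_q)$. The $s-2$ points $c\in T\setminus C$ therefore have pairwise distinct images $J_C(c)$, all lying outside $T$ by Step~1. At most one of these images is $\infty$.

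\textbf{Step 3: count.} The remaining images are at least $s-3$ distinct elements of $N_{\mathcal F}$, so
\[
 \nu\ge s-3\ge\max\{0,\,s-5-\delta_q\}.
\]

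The only step needing care is Step~1. There one must check that the vanishing condition $\mathcal F(p_{U\cup\{x\}})=0$ is exactly the Möbius relation $x=J_C(c)$ with the sign and entry conventions of $J_C$, and one must handle the degenerate denominators $A_Cc=B_C$ (the image is $\infty$) and the scalar case (excluded by survival). Everything else follows directly from invertibility of $J_C$ and Lemma~\ref{lem:anchors1}. If a tighter accounting is wanted later, the slack of two or three points can absorb excluding any further bad anchor $c$.
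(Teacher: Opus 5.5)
\textbf{There is a genuine gap, though it is easy to repair.} Your strategy matches the paper's: fix one usable pair $C$, send $T\setminus C$ through $J_C$, and use $\mathcal F\bigl(p_C(X-c)(X-x)\bigr)=E_C-(c+x)B_C+cxA_C$ to identify the root with $J_C(c)$. The problem is your claim in Step~1 that survival forces $J_C(c)\notin T$. It is false, and your bound $\nu\ge s-3$ rests on it.

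Survival only constrains $p_S$ for $S\in\binom T4$, that is, for four \emph{distinct} points. If $J_C(c)=c$ or $J_C(c)\in C$, then $p_C(X-c)(X-J_C(c))$ has a repeated root, so survival says nothing about it. These cases really occur. On the hyperplane $\mathcal F\bigl(p_C(X-c)^2\bigr)=0$ one has $\Delta_C=(B_C-cA_C)^2$. None of $\Delta_C$, $A_C$, $\det H(\mathcal F)$ or any $\mathcal F(p_S)$ vanishes identically there. So for large $q$ there are nonsingular functionals surviving $T$ for which your chosen usable $C$ has a fixed point $c\in T\setminus C$.

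Your side remark that a scalar $J_C$ contradicts survival is also wrong. In characteristic two, $A_C=E_C=0$ gives $\mathcal F\bigl(p_C(X-c)(X-w)\bigr)=(c+w)B_C\ne0$ for all distinct $c,w$. This is why Lemma~\ref{lem:anchors1} still subtracts $\iota_q$ for surviving $T$. It is harmless here, since you choose $C$ usable.

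The repair is exactly the paper's count. Survival excludes only $J_C(c)\in T\setminus(C\cup\{c\})$. $J_C$ is a bijection of $\PP^1(\F_q)$ with at most $\delta_q$ fixed points. So discard at most $\delta_q$ fixed anchors and at most three anchors whose image lies in $C\cup\{\infty\}$. The remaining at least $s-5-\delta_q$ anchors have pairwise distinct images in $\F_q\setminus T$, and each image lies in $N_{\mathcal F}$. The ``slack'' you mention is therefore not spare: it is exactly what these excluded anchors use up. The bound $s-5-\delta_q$ is what this argument actually gives.
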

\begin{proof}
Lemma~\ref{lem:anchors1} supplies a usable pair $C\subseteq T$.
Consider the images under $J_C$ of the $s-2$ points of
$T\setminus C$. At most $\delta_q$ points are fixed, and at most
three points have their image in $C\cup\{\infty\}$. No remaining
point can map to another point of $T\setminus C$, since such a
two-element orbit $\{u,v\}$ would give
$\mathcal F(p_{C\cup\{u,v\}})=0$, contradicting survival on $T$.
Thus at least $s-5-\delta_q$ distinct points $u\in T\setminus C$
map to distinct exterior points $x=J_C(u)$. Each satisfies
$\mathcal F(p_{C\cup\{u,x\}})=0$, so belongs to
$N_{\mathcal F}$. Nonnegativity completes the bound.
\end{proof}

The preceding estimate allows a small detector to be enlarged one
point at a time, with a decrease matching the multiplicity bound.

\begin{lemma}\label{thm:detector-transport}
	Let $q\ge31$ and $8\le k_0<\lfloor(q-1)/2\rfloor$. If there exists a subset $A_0 \subseteq \F_q$ with $|A_0| \le k_0+2$ satisfying
	\begin{equation}\label{eq:seed-detector}
		b(A_0) < \left\lfloor\frac{q-5}{2}\right\rfloor \left(\lfloor q/2 \rfloor - k_0\right),
	\end{equation}
	then every deep hole is canonical in every dimension $k$ in the range
	\begin{equation}\label{eq:detector-tail}
		k_0\le k<\lfloor(q-1)/2\rfloor.
	\end{equation}
For the existence of such $A_0$, it is sufficient to show that for some integer $c \le k_0+2$, the uniform expectation over $c$-subsets satisfies
	\[
	\mathbb{E}_{A \in \binom{\F_q}{c}}[b(A)] < \left\lfloor\frac{q-5}{2}\right\rfloor \left(\lfloor q/2 \rfloor - k_0\right).
	\]
\end{lemma}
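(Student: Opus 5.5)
The plan is to argue by contradiction against Lemma~\ref{thm:fiber-budget}, using Lemma~\ref{thm:joint-deletion} to move the small set $A_0$ up through the dimensions. Suppose some $k$ with $k_0\le k<\lfloor(q-1)/2\rfloor$ admits a noncanonical deep hole. Then the associated $\mathcal L$ is noncanonical and satisfies \eqref{eq:NV}. The hypotheses $q\ge16$ and $k\ge 8\ge2+\iota_q$ hold, so Lemma~\ref{thm:fiber-budget} gives
\[
b(A)\ge (n-H_q)H_q
\]
for every $A$ with $|A|\le k+2$, where $n=q-k-3$. Note that $n-H_q\ge\lfloor q/2\rfloor-k$: this is a floor computation, since $q-3-\lfloor(q-5)/2\rfloor\ge\lfloor q/2\rfloor$. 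Hence $b(A)\ge Q_{\rm f}(q,k)$. For $k=k_0$ the set $A_0$ itself contradicts this, by \eqref{eq:seed-detector}. The remaining task is to produce, for each larger $k$, a set $A_k$ with $|A_k|\le k+2$ and $b(A_k)<Q_{\rm f}(q,k)$.

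We build $A_k$ by induction on $k$, adding one point per step. The key inequality is that for each $A$ there is an $x\notin A$ with
\[
b(A\cup\{x\})\le b(A)-H_q.
\]
Since $Q_{\rm f}(q,k+1)=Q_{\rm f}(q,k)-H_q$, this decrease exactly preserves the strict inequality. To get it, average over $x\in\F_q\setminus A$. Each class $[\mathcal F]\in\mathcal B(A)$ is lost when $x$ is added precisely when $x\in N_{\mathcal F}$, with $T=A$. Hence
\[
\sum_x b(A\cup\{x\})=\sum_{[\mathcal F]\in\mathcal B(A)}(q-|A|-\nu_{\mathcal F}).
\]
Lemma~\ref{thm:joint-deletion} gives $\nu_{\mathcal F}\ge |A|-5-\delta_q$, provided $|A|\ge6$ (we may first pad $A_0$, since $b$ is nonincreasing under enlargement). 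Some $x$ therefore kills at least $b(A)\,\nu_{\min}/(q-|A|)$ classes.

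This is the step I expect to be the main obstacle. The averaging only gives a \emph{proportional} loss, not the additive loss $H_q$. I would try two routes. The first is to show that the proportional loss dominates whenever $b(A)\ge Q_{\rm f}(q,k+1)$. Then if $b(A\cup\{x\})$ does not fall below $b(A)-H_q$, we still get $b(A\cup\{x\})<Q_{\rm f}(q,k+1)$ directly. The needed inequality is roughly
\[
Q_{\rm f}\cdot\frac{|A|-5-\delta_q}{q-|A|}\ge H_q,
\quad\text{that is,}\quad
(\lfloor q/2\rfloor-k)(|A|-7)\gtrsim q-|A|.
\]
This holds once $|A|\approx k+2$ is moderately large and $k$ is not too close to $q/2$. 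Near the top of the range, where $\lfloor q/2\rfloor-k$ is small, this route fails, and I would fall back on a second, sharper count. There each surviving $\mathcal F$ yields many disjoint exterior edges $\{x,J_C(x)\}$ from usable pairs $C\subseteq A$ (Lemma~\ref{lem:anchors1}), so $\nu_{\mathcal F}$ is of order $q/2$, not $|A|$. Adding one point then removes a constant fraction of the classes.

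Finally, the sufficiency clause is immediate. If $\mathbb E_{A\in\binom{\F_q}{c}}[b(A)]<Q_{\rm f}(q,k_0)$ for some $c\le k_0+2$, then some $c$-set $A_0$ satisfies \eqref{eq:seed-detector}.
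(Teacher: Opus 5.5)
Your framework matches the paper's: pad $A_0$ to size $k_0+2$, contradict Lemma~\ref{thm:fiber-budget}, and climb one point at a time by averaging over $x$ with Lemma~\ref{thm:joint-deletion}. Your check that $n-H_q=\lfloor q/2\rfloor-k$ is correct. The gap is that you never close the inductive step, and your diagnosis of where it would break is wrong.

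An additive loss of $H_q$ is not what is needed; a comparison of ratios suffices. With $|A_j|=j+2$, the averaging gives $b(A_{j+1})\le b(A_j)\bigl(1-\frac{j-3-\delta_q}{q-j-2}\bigr)$. Since $b(A_j)<Q_{\rm f}(q,j)$, it is enough that
\[
 1-\frac{j-3-\delta_q}{q-j-2}\le\frac{Q_{\rm f}(q,j+1)}{Q_{\rm f}(q,j)}=\frac{\lfloor q/2\rfloor-j-1}{\lfloor q/2\rfloor-j},
 \quad\text{i.e.}\quad
 (j-3-\delta_q)\bigl(\lfloor q/2\rfloor-j\bigr)\ge q-j-2 .
\]
This is your own displayed inequality read with $Q_{\rm f}=Q_{\rm f}(q,j)$. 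No case split on $b(A)\ge Q_{\rm f}(q,j+1)$ is needed. That split uses only $Q_{\rm f}(q,j+1)$ as a lower bound for $b(A)$, and it is what genuinely fails at the top, where $\lfloor q/2\rfloor-j-1$ can equal $1$.

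The ratio inequality itself does not fail near the top of the range. When $\lfloor q/2\rfloor-j$ is small, the other factor $j-3-\delta_q$ is close to $q/2$, so the product is of order $q$. The paper verifies it on all of $8\le j\le\lfloor(q-1)/2\rfloor-2$. For odd $q$ it uses the identity
\[
 (j-5)\Bigl(\tfrac{q-1}{2}-j\Bigr)-(q-j-2)=(j-8)\Bigl(\tfrac{q-5}{2}-j\Bigr)+\tfrac{q-31}{2},
\]
and for even $q$ the identity $(j-4)(\frac q2-j)-(q-j-2)=(j-8)(\frac q2-3-j)+q-22$. The binding cases are the two ends of the range. This is exactly where the hypotheses $k_0\ge8$ and $q\ge31$ enter.

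Your fallback also does not work. An exterior edge $\{x,J_C(x)\}$ with both endpoints outside $A$ encodes $\mathcal F(p_{C\cup\{x,J_C(x)\}})=0$. It therefore destroys $\mathcal F$ only after both endpoints have been added. It does not place $x$ in $N_{\mathcal F}$, whose definition requires three anchors already in $A$ and a single new point. Such edges are what Lemma~\ref{thm:detector} exploits, through independent sets, for a random multi-point set $R$. They give no bound of the form $\nu_{\mathcal F}\gtrsim q/2$.

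The linear bound $\nu_{\mathcal F}\ge|A|-5-\delta_q$ from Lemma~\ref{thm:joint-deletion} is all that is available. Together with the ratio inequality above, it is also all that is needed.
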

\begin{proof}
Enlarge $A_0$ to size $k_0+2$. For $A \supseteq A_0$ such that $6 \le |A| = a < q$, 
and for each $[\mathcal{F}] \in \mathcal{B}(A)$, 
let $N_{\mathcal{F}} = \{x \in \mathbb{F}_q \setminus A : \text{there is }
U \in \binom{A}{3} \text{ such that } \mathcal{F}(p_{U \cup \{x\}}) = 0\}$ denote the nullifying set. 
By Lemma~\ref{thm:joint-deletion}, $\nu = |N_{\mathcal{F}}| \ge a - 5 - \delta_q$.
Since the class $[\mathcal{F}]$ belongs to $\mathcal{B}(A \cup \{x\})$ if and only if 
$[\mathcal{F}] \in \mathcal{B}(A)$ and $x \notin N_{\mathcal{F}}$, we can use the indicator function 
$\mathbf{1}_{\{x \notin N_{\mathcal{F}}\}}$ to exchange the order of summation:
\begin{align*}
	\sum_{x \in \mathbb{F}_q \setminus A} b(A \cup \{x\}) 
	&= \sum_{x \in \mathbb{F}_q \setminus A} 
	\sum_{[\mathcal{F}] \in \mathcal{B}(A)} \mathbf{1}_{\{x \notin N_{\mathcal{F}}\}} \\
	&= \sum_{[\mathcal{F}] \in \mathcal{B}(A)} (q - a - |N_{\mathcal{F}}|) \\
	&\le \sum_{[\mathcal{F}] \in \mathcal{B}(A)} \big( q - a - (a - 5 - \delta_q) \big) \\
	&= b(A) \big( q - a - (a - 5 - \delta_q) \big).
\end{align*}

Dividing both sides by $q-a$ yields:

\begin{equation}\label{eq:survivor-growth}
	\frac{1}{q-a}\sum_{x \notin A}b(A \cup \{x\}) \le b(A)\left(1-\frac{a-5-\delta_q}{q-a}\right).
\end{equation}

We prove by induction on $j$ that if there exists an initial set $A_0$ of size $k_0+2$ 
with $b(A_0) < Q_{\rm f}(q, k_0)=\lfloor\frac{q-5}{2}\rfloor \left(\lfloor q/2 \rfloor - k_0\right)$, 
then for all $j$ in the range $k_0 \le j < \lfloor(q-1)/2\rfloor$, there exists a set $A_j$ of size $j+2$ 
such that $b(A_j) < Q_{\rm f}(q, j)$. This strict upper bound contradicts the necessary condition 
$b(A) \ge Q_{\rm f}(q, j)$ for noncanonical deep holes (Lemma~\ref{thm:fiber-budget}), 
thereby proving that every deep hole in this range is canonical.

For the base case, by hypothesis, there exists a set $A_0$ of size $k_0+2$ satisfying $b(A_0) < Q_{\rm f}(q, k_0)$.

For the inductive step, assume that for some integer $j \ge k_0$, there exists a set $A_j$ of size $j+2$ such that $b(A_j) < Q_{\rm f}(q, j)$.

For odd $q \ge 31$  and $8 \le j \le \frac{q-5}{2}$:
$$
(j - 5)\left(\frac{q-1}{2} - j\right) - (q - j - 2) = (j-8)\left(\frac{q-5}{2} - j\right) + \frac{q-31}{2} \ge 0.
$$

For even $q \ge 32$  and $8 \le j \le \frac{q}{2} - 3$:
$$
(j - 4)\left(\frac{q}{2} - j\right) - (q - j - 2) = (j-8)\left(\frac{q}{2} - 3 - j\right) + q - 22 > 0.
$$

Thus, $(j - 3 - \delta_q)(\lfloor q/2 \rfloor - j) - (q - j - 2) \ge 0$ holds universally. Rearranging this inequality yields:
\begin{equation}\label{eq:transport-ratio}
	1-\frac{j-3-\delta_q}{q-j-2} \le \frac{\lfloor q/2 \rfloor - j - 1}{\lfloor q/2 \rfloor - j} = \frac{Q_{\rm f}(q, j+1)}{Q_{\rm f}(q, j)}.
\end{equation}

By Equation \ref{eq:survivor-growth}, the average number of surviving functional classes upon 
adding a single exterior point $x \notin A_j$ is bounded by:
\begin{equation*}
	\frac{1}{q-j-2}\sum_{x\notin A_j}b(A_j\cup\{x\}) \le b(A_j)\left(1-\frac{j-3-\delta_q}{q-j-2}\right).
\end{equation*}

Since the minimum value over a finite set is always less than or equal to its average, there exists at least one element $x^* \notin A_j$ such that the new set $A_{j+1} = A_j \cup \{x^*\}$ satisfies:
\begin{equation*}
	b(A_{j+1}) \le b(A_j)\left(1-\frac{j-3-\delta_q}{q-j-2}\right).
\end{equation*}

Applying the inductive hypothesis $b(A_j) < Q_{\rm f}(q, j)$ and the transport ratio bound \eqref{eq:transport-ratio}, we obtain:
\begin{equation*}
	b(A_{j+1}) < Q_{\rm f}(q, j) \left( \frac{Q_{\rm f}(q, j+1)}{Q_{\rm f}(q, j)} \right) = Q_{\rm f}(q, j+1).
\end{equation*}

This completes the induction.
Since $\mathbb{E}[b(A)] < Q_{\rm f}(q, k_0)$ for $|A|=c \le k_0+2$, 
the probabilistic method guarantees a $c$-subset $A^*$ exists with $b(A^*) < Q_{\rm f}(q, k_0)$. 
Because adding points to the evaluation set imposes additional non-vanishing constraints, 
$b(\cdot)$ is monotonically non-increasing under supersets. 
Arbitrarily padding $A^*$ to size $k_0+2$ to form $A_0$ ensures $b(A_0) \le b(A^*) < Q_{\rm f}(q, k_0)$.
\end{proof}

We can improve the expected number of survivors by counting only
the functionals that survive the anchor set itself. The following
second-moment estimate provides that count.

\begin{lemma}\label{lem:anchor-survivors}
	Let $T \subseteq \F_q$ with $|T| = s \ge 4$. Up to projective equivalence, 
	the number of linear functionals $\mathcal{F}: \F_q[X]_{\le 4} \to \F_q$ satisfying $\mathcal{F}(p_S) \ne 0$ 
	for all $S \in \binom{T}{4}$ is at most
	\begin{equation}\label{eq:survivors}
		N_s(q) = \frac{q^5}{q-1+\binom s4}.
	\end{equation}
	This bound holds whether $\det H(\mathcal{F}) \ne 0$ is imposed or not.
\end{lemma}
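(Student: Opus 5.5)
The approach is a second-moment (Chebyshev) count over the whole dual space. Identify $\mathcal F$ with a vector of $V^*$, where $V=\F_q[X]_{\le4}$ and $\dim V^*=5$. For each $S\in\binom T4$, the polynomial $p_S$ is a nonzero vector of $V$, so $Z_S=\{\mathcal F:\mathcal F(p_S)=0\}$ is a hyperplane of $V^*$. Let $\mathcal S$ be the set of nonzero $\mathcal F$ avoiding every $Z_S$, and put
\[
 X(\mathcal F)=\#\{S:\mathcal F(p_S)=0\}.
\]
Every $\mathcal F\in\mathcal S$ has $X(\mathcal F)=0$. Using the uniform measure on all $q^5$ vectors of $V^*$, I will bound $\Pr[X=0]$ by
\[
 \Pr[X=0]\le\frac{\operatorname{Var}X}{(\mathbb E X)^2}
 \quad\text{or, more sharply,}\quad
 \Pr[X=0]\le 1-\frac{(\mathbb E X)^2}{\mathbb E X^2}.
\]
The second is the form I expect to produce exactly $q^5/(q-1+N)$ with $N=\binom s4$.

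The moments are computed directly. Since $\mathcal F(p_S)$ is uniform on $\F_q$, we get $\mathbb E X=N/q$. For $S\ne S'$, the polynomials $p_S$ and $p_{S'}$ are distinct monic quartics and hence linearly independent. The two hyperplanes therefore meet in codimension two, giving $\Pr[Z_S\cap Z_{S'}]=q^{-2}$. Thus
\[
 \mathbb E X^2=\frac Nq+\frac{N(N-1)}{q^2}.
\]
From the sharper form,
\[
 \Pr[X=0]\le 1-\frac{N^2/q^2}{N/q+N(N-1)/q^2}
 =1-\frac{N}{q+N-1}
 =\frac{q-1}{q-1+N}.
\]
The count of nonzero vectors in $\mathcal S$ is then at most $q^5(q-1)/(q-1+N)$. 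Note that the zero vector lies in every $Z_S$, so it is automatically excluded. The set $\mathcal S$ is closed under $\F_q^*$-scaling, so dividing by $q-1$ bounds the number of projective classes by
\[
 \frac{q^5}{q-1+N}=N_s(q).
\]

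The step needing care is justifying $\Pr[X=0]\le 1-(\mathbb E X)^2/\mathbb E X^2$. I plan to use the Paley--Zygmund/Cauchy--Schwarz inequality
\[
 (\mathbb E X)^2=\bigl(\mathbb E[X\mathbf 1_{X>0}]\bigr)^2\le\mathbb E X^2\cdot\Pr[X>0],
\]
which gives exactly this bound. The remaining facts are the pairwise independence of the indicators, which follows from the linear independence of distinct monic quartics, and the scaling invariance. The extra condition $\det H(\mathcal F)\ne0$ only shrinks $\mathcal S$, so the same bound holds whether or not it is imposed.
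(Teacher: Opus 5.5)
Your proposal is correct and follows the paper's proof essentially step for step. Both take a uniform random functional on the five-dimensional dual, get pairwise independence from the linear independence of distinct monic quartics, apply the Cauchy--Schwarz bound $\Pr[X>0]\ge(\mathbb E X)^2/\mathbb E X^2$ to obtain $(q-1)/(q-1+N)$, and then exclude the zero functional (using $s\ge4$) and divide by $q-1$.
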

\begin{proof}
	Let $\mathcal{F}$ be drawn uniformly at random from the 5-dimensional dual space $(\F_q[X]_{\le 4})^*$. Define the random variable $Z = \sum_{S \in \binom{T}{4}} \mathbf{1}_{\{\mathcal{F}(p_S) = 0\}}$ to count the number of 4-subsets zeroed by $\mathcal{F}$.
	
	Because distinct 4-subsets produce distinct monic polynomials of degree 4, any pair of such polynomials is linearly independent. Consequently, their zero indicators are pairwise independent, and each evaluates to zero with probability $1/q$. 
	
	By linearity of expectation, the first moment is:
	\[
	\mathbb{E}[Z] = \binom{s}{4} \frac{1}{q}.
	\]
	Using pairwise independence, the second moment expands as:
	\[
	\mathbb{E}[Z^2] = \binom{s}{4} \frac{1}{q} + \binom{s}{4}\left(\binom{s}{4}-1\right) \frac{1}{q^2} = \frac{\binom{s}{4} \left( q - 1 + \binom{s}{4} \right)}{q^2}.
	\]
	
	Applying the Cauchy--Schwarz inequality ($\mathbb{P}(Z>0) \ge (\mathbb{E}[Z])^2 / \mathbb{E}[Z^2]$) yields:
	\[
	\mathbb{P}(Z>0) \ge \frac{\binom{s}{4}}{q - 1 + \binom{s}{4}}.
	\]
	Therefore, the probability that a functional survives on all 4-subsets is:
	\[
	\mathbb{P}(Z=0) = 1 - \mathbb{P}(Z>0) \le \frac{q - 1}{q - 1 + \binom{s}{4}}.
	\]
	
	Multiplying by the $q^5$ total vectors in the dual space shows there are at most 
	$q^5 \frac{q - 1}{q - 1 + \binom{s}{4}}$ surviving vectors. Since $s \ge 4$, 
	the identically zero functional has $Z \ge 1$, which guarantees that every 
	surviving vector is non-zero. 
	Dividing by $q-1$ to account for projective scalar equivalence proves the bound.
\end{proof}

We first record the independent-set estimate in the notation used
for the analytic comparison. Put
\begin{equation}\label{eq:parameters}
 N(q)=q^4+q^3+q^2+q+1,\quad c=\frac{q+M+2}{2},\quad
 \beta=\frac{q-2s-M-2}{2M},
\end{equation}
and, when $q>2s+M+2$ and $0<r<c$, define
\begin{equation}\label{eq:Phi}
 \Phi(q;s,r,M)=N(q)2^\beta
 \frac{c^c}{\binom{q-s}{r}r^r(c-r)^{c-r}}.
\end{equation}
\begin{lemma}\label{thm:detector}
Let $s\ge6$, $1\le M\le\binom{s-3}{2}$, $q>2s+M+2$, and
$1\le r\le q-s$ with $r<c$. For every fixed $s$-set $T$ and a uniform
$r$-set $R\subseteq\F_q\setminus T$,
\begin{equation}\label{eq:expected-detector}
 \mathbb E b(T\cup R)\le\Phi(q;s,r,M).
\end{equation}
\end{lemma}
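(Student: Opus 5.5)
We bound $\mathbb E\, b(T\cup R)$ by summing, over projective classes $[\mathcal F]$ with $\det H(\mathcal F)\ne0$, the probability that $\mathcal F$ survives $T\cup R$. There are at most $N(q)=q^4+q^3+q^2+q+1$ such classes. It therefore suffices to show that, for each fixed nonsingular $\mathcal F$, this survival probability is at most
\[
2^\beta c^c\big/\bigl(\tbinom{q-s}{r}r^r(c-r)^{c-r}\bigr).
\]
If $\mathcal F$ does not survive $T$, the probability is zero. So assume $\mathcal F$ survives $T$. Lemma~\ref{lem:anchors1} then gives at least $\binom s2-\lfloor s/2\rfloor-\iota_q$ usable pairs in $T$. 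Since $M\le\binom{s-3}{2}$ is below this count (a routine check for $s\ge6$), we may fix a family $\mathcal C$ of exactly $M$ usable pairs. Let $G=G_{\mathcal F}$ be the orbit graph on $\F_q\setminus T$, which has $q-s$ vertices.

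The key observation is that survival forces $R$ to be independent in $G$. Suppose $\{x,J_C(x)\}$ is an edge with both endpoints in $R$ and $C\in\mathcal C$. The pair $\{x,J_C(x)\}$ is an orbit of the involution $J_C$. Expanding $\mathcal F(p_C\,(X-x)(X-y))$ in terms of $A_C,B_C,E_C$ shows that $\mathcal F(p_{C\cup\{x,y\}})=0$ exactly when $y=J_C(x)$, which is the defining property of $J_C$. Hence $\mathcal F$ fails to survive $T\cup R$. So the survival probability is at most $i_r(G)/\binom{q-s}{r}$, where $i_r(G)$ is the number of independent $r$-sets of $G$.

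Next we bound $i_r(G)$. For every $\lambda>0$ we have $i_r(G)\le \lambda^{-r}P_G(\lambda)$. Apply Lemma~\ref{lem:degree-bound} with $n=q-s$ and maximum degree $m=M$ (from Lemma~\ref{lem:anchors2}, or any upper bound $m\le M$, using the monotonicity shown in its proof). Lemma~\ref{lem:anchors2} also gives $e\ge \frac M2(q-2s-M-\delta_q)$. Writing $w=(1+\lambda)^{-1}$, the factor becomes $(2w^M-w^{2M})^{e/M^2}\le (2w^M)^{e/M^2}$, so
\[
P_G(\lambda)\le(1+\lambda)^{\,n-e/M}\,2^{e/M^2}.
\]
Using $\delta_q\le2$, we get $n-e/M\le q-s-(q-2s-M-2)/2=c$. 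Also $e/M^2\ge\beta$, but since $2^{e/M^2}$ grows with $e$ this goes the wrong way. So the two appearances of $e$ should be kept together: the combined factor $\bigl(2w^M\bigr)^{e/M^2}$ is at most $1$, and hence decreasing in $e$, whenever $2w^M\le1$. This gives
\[
P_G(\lambda)\le(1+\lambda)^{n}\bigl(2w^M\bigr)^{\beta}=(1+\lambda)^{c}\,2^\beta,
\]
after substituting the lower bound $e/M^2\ge\beta$, where $e_{\min}=M^2\beta$.

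Finally we optimize $\lambda^{-r}(1+\lambda)^c$ by taking $\lambda=r/(c-r)$, which requires $r<c$. This yields exactly $c^c/(r^r(c-r)^{c-r})$.

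The main obstacle is the admissibility condition $2w^M\le1$ at this choice of $\lambda$, that is $(c/(c-r))^M\ge2$; it may fail for small $r$. If it fails, I would replace $2w^M-w^{2M}$ by the sharper quantity itself, noting that $2w^M-w^{2M}\le1$ always holds. Then the exponent in $e$ is monotone without any extra condition, and substituting $e_{\min}$ is valid. Expanding,
\[
(1+\lambda)^n(2w^M-w^{2M})^{\beta}=(1+\lambda)^{n-M\beta}\,(2-w^M)^\beta\le(1+\lambda)^{c}\,2^\beta,
\]
so this route is clean and I would take it from the outset. Summing over the $N(q)$ classes then gives \eqref{eq:expected-detector}.
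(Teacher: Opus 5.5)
Your proposal is correct and follows the paper's proof. You sum survival probabilities over the at most $N(q)$ nonsingular classes, note that survival forces $R$ to be independent in $G_{\mathcal F}$, and bound independent $r$-sets by $\lambda^{-r}P_G(\lambda)$. You then use $\theta=2w^M-w^{2M}\le1$ to pass from $e$ to $e_0=M^2\beta$ before applying $\theta\le 2w^M$, and choose $\lambda=r/(c-r)$. The only difference is minor: you discard classes not surviving $T$ and use the stronger count $\binom s2-\lfloor s/2\rfloor-\iota_q$ of usable pairs, whereas the paper uses the unconditional count $s(s-3)/2-\iota_q\ge\binom{s-3}{2}$ for every nonsingular class; either suffices.
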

\begin{proof}
Lemmas~\ref{lem:anchors1} and~\ref{lem:anchors2} supply,
by choosing $M$ usable pairs for each nonsingular $\mathcal F$,
a graph with $n=q-s$ vertices, maximum degree at most $M$, and at least
$e_0=M(q-2s-M-2)/2$ edges. Put $w=(1+\lambda)^{-1}$ and
$\theta=2w^M-w^{2M}$. Since $0<\theta<1$,
Lemma~\ref{lem:degree-bound} gives
$$
 P_{G_{\mathcal F}}(\lambda)
 \le(1+\lambda)^n\theta^{e_0/M^2}
 \le 2^{e_0/M^2}(1+\lambda)^{n-e_0/M}
 =2^\beta(1+\lambda)^c.
$$
Survival requires $R$ to be independent. Bound its probability by
$2^\beta(1+\lambda)^c/(\binom nr\lambda^r)$, choose
$\lambda=r/(c-r)$, and sum over at most $N(q)$ projective classes.
\end{proof}

For a surviving anchor set, Lemma~\ref{lem:anchors1} allows more
pairs than the preceding estimate needs. Write
\[
 M_q^+(s)=\binom s2-\lfloor s/2\rfloor-\iota_q.
\]
Combining this observation with the anchor count gives the estimate
used in the proof of the main theorem.

\begin{lemma}\label{lem:anchored-Phi-v30}
Let $q\ge31$, $s\ge6$, $1\le M\le M_q^+(s)$,
$q>2s+M+2$, and $1\le r\le q-s$ with $r<(q+M+2)/2$.
Put
\[
 N_s(q)=\frac{q^5}{q-1+\binom s4},\qquad
 \Psi(q;s,r,M)=\frac{N_s(q)}{N(q)}\Phi(q;s,r,M).
\]
For a fixed $s$-set $T$ and a uniformly chosen $r$-set
$R\subseteq\F_q\setminus T$,
$\mathbb E b(T\cup R)\le\Psi(q;s,r,M)$.
Consequently, if $8\le k_0<\lfloor(q-1)/2\rfloor$,
$s+r\le k_0+2$, and $\Psi<Q_{\rm f}(q,k_0)$, then every deep
word is canonical for $k_0\le k<\lfloor(q-1)/2\rfloor$.
\end{lemma}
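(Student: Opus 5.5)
The plan is to rerun the argument of Lemma~\ref{thm:detector}, changing only the final union bound: instead of summing over all $N(q)$ projective classes, we sum over the classes that already survive the anchor set $T$. First observe that $[\mathcal F]\in\mathcal B(T\cup R)$ forces $[\mathcal F]$ to be nonsingular and to survive $T$. Write $\mathcal S_T$ for the set of nonsingular classes surviving $T$. Lemma~\ref{lem:anchor-survivors} gives $|\mathcal S_T|\le N_s(q)$, and this bound holds with the determinant condition imposed. Linearity of expectation then gives
\[
 \mathbb E\, b(T\cup R)=\sum_{[\mathcal F]\in\mathcal S_T}\Pr\bigl([\mathcal F]\text{ survives }T\cup R\bigr),
\]
so it suffices to bound each summand by $\Phi(q;s,r,M)/N(q)$. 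That quantity is exactly the per-class bound appearing inside the proof of Lemma~\ref{thm:detector}.

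Next we bound the probability for a single class $[\mathcal F]\in\mathcal S_T$. Because $\mathcal F$ survives $T$, the second part of Lemma~\ref{lem:anchors1} yields at least $M_q^+(s)\ge M$ usable pairs in $T$. We fix a family $\mathcal C$ of exactly $M$ of them. This is the only point where the hypothesis changes from $M\le\binom{s-3}{2}$ to $M\le M_q^+(s)$, and it is legitimate precisely because we restricted to surviving functionals. Lemma~\ref{lem:anchors2} then gives a graph $G_{\mathcal F}$ on the $n=q-s$ points outside $T$ with maximum degree at most $M$ and $e\ge e_0=M(q-2s-M-2)/2$ edges, using $\delta_q\le 2$.

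The key structural fact is this: if $\{x,J_C(x)\}$ is an edge with $x,J_C(x)\in R$, then $\mathcal F(p_{C\cup\{x,J_C(x)\}})=0$. So a surviving $R$ must be independent in $G_{\mathcal F}$. Before relying on this, I would check it from $J_C$, since the orbit relation encodes $\mathcal F((X-x)(X-y)p_C)=0$. Counting independent $r$-sets, the probability is at most $P_G(\lambda)/(\binom nr\lambda^r)$. Lemma~\ref{lem:degree-bound} together with $\theta\le 2w^M$ bounds this by $2^\beta(1+\lambda)^c/(\binom nr\lambda^r)$. Choosing $\lambda=r/(c-r)$, which is positive since $0<r<c$, gives
\[
 \frac{(1+\lambda)^c}{\lambda^r}=\frac{c^c}{r^r(c-r)^{c-r}},
\]
so the per-class bound is $\Phi/N(q)$. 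Summing over $|\mathcal S_T|\le N_s(q)$ classes gives $\Psi$.

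For the consequence, extend $T\cup R$, which has size $s+r\le k_0+2$, to size $k_0+2$; this only uses monotonicity of $b$. The expectation bound yields some $A_0$ with $b(A_0)\le\Psi<Q_{\rm f}(q,k_0)$. Lemma~\ref{thm:detector-transport} with $q\ge31$ then gives canonicity throughout $k_0\le k<\lfloor(q-1)/2\rfloor$.

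The main obstacle I expect is bookkeeping rather than ideas. Two points need care: the edge lower bound must use $\delta_q\le2$ uniformly, and the exponent $n-e_0/M$ must equal $c=(q+M+2)/2$. A quick check confirms the latter: $n-e_0/M=q-s-(q-2s-M-2)/2=(q+M+2)/2$.
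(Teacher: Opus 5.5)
Your proposal is correct and follows the paper's proof essentially step for step. Like the paper, you restrict to the at most $N_s(q)$ nonsingular classes surviving $T$, use the surviving-anchor count of Lemma~\ref{lem:anchors1} to choose $M$ usable pairs, rerun the independent-set estimate of Lemma~\ref{thm:detector} for each class with $N_s(q)$ replacing $N(q)$, and conclude with Lemma~\ref{thm:detector-transport}. The structural fact you flagged checks out: $y=J_C(x)$ is equivalent to $A_Cxy-B_C(x+y)+E_C=0$, which is exactly $\mathcal F\bigl((X-x)(X-y)p_C\bigr)=0$.
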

\begin{proof}
Only classes surviving $T$ can contribute to $b(T\cup R)$.
There are at most $N_s(q)$ such classes by
Lemma~\ref{lem:anchor-survivors}. For each nonsingular survivor,
Lemma~\ref{lem:anchors1} supplies at least $M_q^+(s)$ usable pairs.
Choose $M$ of them. Lemma~\ref{lem:anchors2} gives a graph of
maximum degree at most $M$ and at least
$M(q-2s-M-2)/2$ edges. The proof of
Lemma~\ref{thm:detector} now applies with $N_s(q)$ in place of
$N(q)$. The last assertion follows from
Lemma~\ref{thm:detector-transport}.
\end{proof}

The last estimate reduces the construction of a detector to a
single inequality. We shall use the following logarithmic upper
bound when choosing its parameters.

\begin{lemma}\label{lem:crude}
If $q>2s+M+2$, $1\le r\le q-s$, and $r<c$, then
\begin{equation}\label{eq:crude}
 \log\Phi\le\log N(q)+\beta\log2+1+\tfrac12\log r
                +r\log\frac{c}{q-s-r+1}.
\end{equation}
\end{lemma}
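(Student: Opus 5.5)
We take logarithms in the definition \eqref{eq:Phi}:
\[
 \log\Phi=\log N(q)+\beta\log2+c\log c-\log\binom{q-s}{r}-r\log r-(c-r)\log(c-r).
\]
It then suffices to bound the three entropy-type terms together. We estimate $\binom{q-s}{r}$ from below and the term $c^c/(r^r(c-r)^{c-r})$ from above.

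For the binomial coefficient we use the elementary lower bound
\[
 \binom{n}{r}\ge\frac{(n-r+1)^r}{r!},\qquad n=q-s.
\]
It holds because each factor $(n-j)/(r-j)$ with $0\le j<r$ is at least $(n-r+1)/(r-j)$. This gives
\[
 -\log\binom{q-s}{r}\le\log r!-r\log(q-s-r+1).
\]
For $r!$ we apply a Stirling-type upper bound, $r!\le e\,r^{r+1/2}e^{-r}$, valid for all $r\ge1$ (it follows from $\log r!\le(r+\tfrac12)\log r-r+1$). Hence $\log r!-r\log r\le \tfrac12\log r-r+1$. This produces the terms $1+\tfrac12\log r$ in \eqref{eq:crude}, together with a spare $-r$.

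Next we write
\[
 c\log c-(c-r)\log(c-r)=r\log c-(c-r)\log\Bigl(1-\frac rc\Bigr).
\]
Since $0<r<c$, the inequality $-\log(1-x)\le x/(1-x)$ gives
\[
 -(c-r)\log\Bigl(1-\frac rc\Bigr)\le (c-r)\cdot\frac{r/c}{1-r/c}=r.
\]
This $+r$ cancels exactly the spare $-r$ from Stirling. Summing everything gives
\[
 \log\Phi\le\log N(q)+\beta\log2+1+\tfrac12\log r+r\log c-r\log(q-s-r+1),
\]
which is \eqref{eq:crude}.

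The hypotheses $q>2s+M+2$ and $r\le q-s$ guarantee that $\beta>0$ is defined and that $q-s-r+1\ge1$, so every logarithm is finite. The only step needing care is the Stirling constant. We use the standard bound $\log r!\le r\log r-r+\tfrac12\log r+1$, checked at $r=1$ (equality: $0\le 0$) and propagated by the increasing-ratio/Robbins argument. Everything else is routine.
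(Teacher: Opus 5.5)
Your proposal is correct and follows the paper's proof essentially step for step. You use the same lower bound $\binom{n}{r}\ge(n-r+1)^r/r!$ and the same factorial bound $r!\le e\sqrt r\,(r/e)^r$, and your estimate $-(c-r)\log(1-r/c)\le r$ is exactly the paper's $(1+r/(c-r))^{c-r}\le e^r$. The one difference is how the factorial bound is justified: the paper gets it from concavity of $\log$ (the trapezoid estimate $\log r!\le\int_1^r\log x\,dx+\tfrac12\log r$), which is a more self-contained justification than your brief appeal to a monotonicity/Robbins-type argument.
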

\begin{proof}
Put $n=q-s$. Concavity of $\log$ gives
$\log(r!)\le\int_1^r\log x\,dx+\tfrac12\log r$.
Substitute $r!\le e\sqrt r(r/e)^r$,
$\binom nr\ge(n-r+1)^r/r!$, and
$(1+r/(c-r))^{c-r}\le e^r$ in \eqref{eq:Phi}.
\end{proof}

\section{Planar projections and completion of the proof}\label{sec:geometry}
We first obtain the lower-dimensional classification from prescribed
planar projections. We then compare it with the detector estimates
and complete the proof of Theorem~\ref{thm:main}.
We identify evaluation vectors with their representatives
of degree at most $q-1$ and multiply them coordinatewise. Put
\begin{equation}\label{eq:slope-space}
\mathcal W_q=\{v\in\F_q[X]_{\le q-1}:
2(v(X)-v(Y))=(X-Y)(v'(X)+v'(Y))\},
\end{equation}
where the equality is a polynomial identity.

The polynomial identity has the following simple solution space.

\begin{lemma}\label{lem:slope-kernel}
For $q\ge3$,
\begin{equation}\label{eq:binary-space}
\mathcal W_q=
\begin{cases}
\mathcal P_2,&q\text{ odd},\\
\operatorname{span}\bigl(\{1,X\}\cup\{X^{2j}:1\le j\le q/2-1\}\bigr),
 &q\text{ even}.
\end{cases}
\end{equation}
\end{lemma}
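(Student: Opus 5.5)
The identity defining $\mathcal W_q$ is linear in $v$, and it is compatible with the monomial structure. I will therefore expand $v=\sum_{n=0}^{q-1}v_nX^n$ and test each monomial separately. The left side $2(X^n-Y^n)$ and the right side $(X-Y)(nX^{n-1}+nY^{n-1})$ are both homogeneous of degree $n$. Different $n$ give different total degrees, so the identity holds for $v$ exactly when it holds for every monomial $X^n$ with $v_n\ne0$. Since $\deg v\le q-1$, these are genuine polynomial coefficients with no reduction modulo $X^q-X$.

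For a single monomial, the right side expands to
\[
(X-Y)\,n\,(X^{n-1}+Y^{n-1})=n\bigl(X^n-Y^n\bigr)+n\bigl(XY^{n-1}-X^{n-1}Y\bigr).
\]
The identity thus becomes
\[
(2-n)(X^n-Y^n)=n\bigl(XY^{n-1}-X^{n-1}Y\bigr).
\]
I will compare coefficients. For $n=0,1$ both sides vanish: for $n=0$ everything is zero, and for $n=1$ the right side is $XY^0-X^0Y$, which cancels the left side $X-Y$. For $n=2$ both sides are zero as well. For $n\ge3$ the monomials $X^n,Y^n$ are distinct from $XY^{n-1},X^{n-1}Y$. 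Hence the monomial $X^n$ satisfies the identity if and only if $2-n=0$ and $n=0$ in $\F_q$.

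In odd characteristic, $n=0$ and $n=2$ cannot both hold in $\F_q$, since that would force $2=0$. So no $n\ge3$ survives and $\mathcal W_q=\mathcal P_2$. In characteristic two, $2-n\equiv n$, and the condition reduces to $n$ being even. The surviving monomials are $1,X$ together with $X^{2j}$ for $2\le 2j\le q-1$, that is, $1\le j\le q/2-1$. This gives \eqref{eq:binary-space}.

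There is no real obstacle in this argument. The one point needing care is the separation into homogeneous components, together with the observation that for $n\ge3$ the four monomials are distinct, which justifies comparing coefficients. The hypothesis $q\ge3$ only ensures that the index ranges are sensible.
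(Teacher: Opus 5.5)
Your proof is correct and is essentially the paper's argument: comparing the coefficients of $X^n$ and $X^{n-1}Y$ for $n\ge3$ gives $(2-n)v_n=nv_n=0$, which rules out $n\ge3$ in odd characteristic and leaves exactly the even exponents in characteristic two. The only difference is that the paper phrases the characteristic-two case as ``the identity says $v'$ is constant,'' whereas you apply the same monomial-by-monomial comparison uniformly in both characteristics.
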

\begin{proof}
For $n\ge3$, the coefficients of $X^{n-1}Y$ and $X^n$ give
$nv_n=(2-n)v_n=0$. This excludes $n\ge3$ in odd characteristic.
In characteristic two, \eqref{eq:slope-space} says that $v'$ is constant.
Every quadratic satisfies the identity.
\end{proof}

Prescribing the projection center keeps track of the interpolation polynomial that must be recovered after conic completion.

\begin{lemma}\label{lem:single-conic}
Let $u$ be deep, $2\le k\le q-2$, and $E\subseteq\F_q$ have
size $c=k-2$. Let $f_E$ interpolate $u$ on $E$ with $\deg f_E<c$,
using $f_E=0$ if $c=0$. Then
\begin{equation}\label{eq:prescribed-projection}
A_E=\{[0:0:1]\}\cup
\left\{\left[1:x:\frac{u(x)-f_E(x)}{p_E(x)}\right]:x\notin E\right\}
\end{equation}
is an arc of size $q-k+3$. If $A_E$ is contained in a conic, then
$u=f_E+p_Ev$ for a quadratic $v$, and $u$ is canonical.
\end{lemma}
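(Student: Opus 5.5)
The plan has three parts: reduce the arc property of $A_E$ to the deep-hole criterion, prove the conic case by an interpolation argument, and then use degree counting to conclude $u$ is canonical.

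\emph{Arc property.} Put $g(x)=(u(x)-f_E(x))/p_E(x)$ for $x\notin E$. The set $A_E$ has $(q-c)+1=q-k+3$ points, and its affine points are pairwise distinct since their second coordinates differ.

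Consider three affine points $x_1,x_2,x_3\notin E$. They are collinear exactly when there is an affine polynomial $\ell(X)=\alpha+\beta X$ with $g(x_i)=\ell(x_i)$ for all $i$. In that case $u$ agrees with $f_E+p_E\ell$ on $E\cup\{x_1,x_2,x_3\}$. This set has $c+3=k+1$ elements, and $\deg(f_E+p_E\ell)\le c+1=k-1<k$. That contradicts deepness, because a deep word agrees with no polynomial of degree less than $k$ on any $(k+1)$-set (Lemma~\ref{lem:mds} and its proof).

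Next consider $[0:0:1]$ together with two affine points $x_1,x_2$. These are collinear exactly when $x_1=x_2$, since the line through $[0:0:1]$ and $[1:x:y]$ fixes the ratio of the first two coordinates. So this case cannot occur. Hence $A_E$ is an arc. For $c=0$ the same argument works with $p_E=1$ and $f_E=0$.

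\emph{Conic case.} Suppose $A_E$ lies in a conic $\Gamma: F(x_0,x_1,x_2)=0$. Since $[0:0:1]\in\Gamma$, the coefficient of $x_2^2$ vanishes. Thus
\[
F=x_2\,L(x_0,x_1)+G(x_0,x_1),
\]
with $L$ linear and $G$ quadratic. I first want $L\neq0$ as a function on the affine points. If $L(1,x)=0$ at some $x$ with $L\not\equiv0$, that is a single value; otherwise $G(1,x)=0$ for all $x\notin E$, forcing $G\equiv0$ because $q-c>2$. If $L\equiv0$, then $G$ vanishes at all $q-c\ge 4$ points, so $G\equiv 0$ and $F=0$, which is impossible.

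So on the affine points $g(x)L(1,x)=-G(1,x)$. Write $L(1,x)=\lambda(x-a)$ if $L$ is nonconstant in $x$. If $a\notin E$, the point $x=a$ forces $G(1,a)=0$, so $x-a$ divides $G(1,x)$. In every case $g$ agrees with a polynomial $v$ of degree at most $2$ at all $x\notin E$, except possibly at one point $a$ when $L(1,x)=\lambda(x-a)$ and $G(1,x)$ does not factor through $x-a$ (then $v$ is rational and not polynomial).

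This is the step needing care. I would rule out the genuinely rational case by counting. The conic $x_2L+G=0$ with $L\nmid G$ is nondegenerate and contains $[0:0:1]$. The affine points of such a conic with $x_0=1$ form the graph of $-G/L$ over $x\ne a$, plus a possible point over $a$. But $A_E$ contains exactly one point over each $x\notin E$. Since $-G/L$ is not a polynomial, any point over $a$ would have to be the single point $[1:a:\ast]$ cut out by the conic; there is none when $G(1,a)\ne0$. Hence $a\in E$, and all $x\notin E$ avoid the pole.

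Even so, the rational function $R=-G/L$ agrees with $g$ on all of $\F_q\setminus E$. Then
\[
u-f_E=p_E\,R \quad\text{on } \F_q\setminus E.
\]
Since $a\in E$, the factor $(X-a)$ divides $p_E$, so $p_E R$ is in fact a polynomial of degree at most $c+1$. Therefore $u$ agrees with a polynomial of degree at most $k-1$ on $q-c\ge k+1$ points, contradicting deepness. So $R$ is a polynomial $v$ with $\deg v\le2$.

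\emph{Conclusion.} We get $u=f_E+p_Ev$ on $\F_q\setminus E$, and also on $E$ because both sides equal $u$ there. Since $\deg(f_E+p_Ev)\le k<q$, this is an identity of interpolation polynomials. Write $v=\lambda X^2+\cdots$. If $\lambda=0$ then $\deg u\le k-1$, and $u$ would be a codeword, not deep. Hence $\lambda\ne0$ and $\deg u=k$, so $u$ has the canonical form \eqref{eq:canonical}.

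The main obstacle is the conic-case analysis, in particular excluding a degenerate conic or a rational $R=-G/L$. The argument above handles it by using deepness to reject any low-degree agreement on $k+1$ points.
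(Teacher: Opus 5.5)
Your route is essentially the paper's: write the conic through $[0:0:1]$ as $x_2L+G$, show that the pole $a$ of $-G/L$ must lie in $E$, get a low-degree agreement contradicting deepness, and be left with $L=\ell_0x_0$, where $g$ is a quadratic. The gap is in the decisive count. You say $f_E+p_ER$, of degree at most $k-1$, agrees with $u$ on $q-c\ge k+1$ points. But $q-c=q-k+2$, and $q-k+2\ge k+1$ holds only for $k\le(q+1)/2$, while the lemma is claimed for all $k\le q-2$. For example, at $k=q-2$ you have $q-c=4<k+1$, so the contradiction disappears.

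The missing observation is that, since $a\in E$, $p_ER=-p_{E\setminus\{a\}}G(1,X)/\lambda$ vanishes on $E\setminus\{a\}$, where $f_E=u$. Hence $f_E+p_ER$ agrees with $u$ on all of $\F_q\setminus\{a\}$, that is, on $q-1\ge k+1$ points. This is where the hypothesis $q-k\ge2$ is needed; it is the paper's ``agrees with $w$ except possibly at $z$.'' Your conic argument never uses this hypothesis.

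A similar slip occurs in the reducible subcase $a\notin E$, $(X-a)\mid G(1,X)$. There the relation $g(x)L(1,x)=-G(1,x)$ is vacuous at $x=a$. So $g$ agrees with the affine polynomial $-G(1,X)/L(1,X)$ only on $\F_q\setminus(E\cup\{a\})$, not ``at all $x\notin E$'' as you assert. Your later conclusion $u=f_E+p_Ev$ on $\F_q\setminus E$ is therefore unjustified in this case.

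The case is still impossible, and either of two arguments closes it:
\begin{enumerate}
\item Three of these $q-k+1\ge3$ points would be collinear in $A_E$, contradicting the arc property you already proved.
\item As in the paper, a conic containing $q-k+3\ge5$ points of an arc cannot be a line pair (or double line). It is therefore nonsingular, which excludes $L\mid G$ from the start.
\end{enumerate}
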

\begin{proof}
Three collinear affine points would make $f_E+p_Eh$, for some
$\deg h\le1$, agree with $u$ at $k+1$ points. Lines through
$[0:0:1]$ contain at most one affine point of $A_E$.
A conic containing these at least five arc points is nonsingular and has
an equation $Q(X,Y)+(dX+eY)Z=0$, with $Q$ quadratic and $(d,e)\ne(0,0)$.
Thus, for $w=u-f_E$,
$$(d+ex)w(x)+p_E(x)Q(1,x)=0\qquad(x\notin E).$$
If $e\ne0$, its root $z=-d/e$ must belong to $E$: otherwise
$Q(1,z)=0$ makes the conic reducible. But then
$-p_EQ(1,X)/(d+eX)$ has degree at most $k-1$ and agrees with $w$
except possibly at $z$, contradicting deepness since $q-k\ge2$.
Hence $e=0$, and $u=f_E-p_EQ(1,X)/d$ on all of $\F_q$.
This polynomial has degree at most $k$, and deepness excludes degree less
than $k$.
\end{proof}

In characteristic two, completion gives a hyperoval instead of a conic. Its affine part still satisfies the same polynomial identity.

\begin{lemma}\label{lem:hyperoval-slope}
For even $q\ge4$, a hyperoval through $[0:0:1]$ has the form
$$\{[0:0:1],[0:1:\alpha]\}\cup
\{[1:x:v(x)]:x\in\F_q\},\qquad v'\equiv\alpha.$$
In particular, $v\in\mathcal W_q$.
\end{lemma}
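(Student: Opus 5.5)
We work with a hyperoval $\mathcal H$ in $\PG(2,q)$, $q$ even, containing $[0:0:1]$. The plan has three steps: find the points of $\mathcal H$ on the line at infinity $X_0=0$, describe the affine part as a graph, and then apply Lemma~\ref{lem:common-missing-slope} to the graph function.

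\emph{Points at infinity.} Every line meets $\mathcal H$ in $0$ or $2$ points, since $q+2$ is even and $\mathcal H$ has no tangents. The line $X_0=0$ passes through $[0:0:1]\in\mathcal H$, so it contains exactly one further point of $\mathcal H$. That point is $[0:1:\alpha]$ for some $\alpha\in\F_q$, because $[0:0:1]$ is the only point of the line with vanishing second coordinate.

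\emph{Affine part.} The lines through $[0:0:1]$ other than $X_0=0$ are the vertical lines $X_1=xX_0$, $x\in\F_q$. Each of them contains $[0:0:1]$ and hence exactly one more point of $\mathcal H$, which is affine, say $[1:x:v(x)]$. Counting gives $2+q=q+2$, so these are all the points of $\mathcal H$. This yields the displayed form, where we take $v$ to be the interpolation polynomial of degree at most $q-1$.

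\emph{The derivative identity.} The affine points form the graph of $v$, and no three of them are collinear. By Lemma~\ref{lem:common-missing-slope}, there is a unique slope $b$ missed from every graph point, and moreover $v'=b$. It remains to show $b=\alpha$. The line through $[1:x:v(x)]$ and $[0:1:\alpha]$ has slope $\alpha$ and already contains two points of $\mathcal H$. So no other affine point $(y,v(y))$ satisfies $(v(y)+v(x))/(y+x)=\alpha$. Hence $\alpha$ is missed from every $x$. Since the missing value is unique (the other $q-1$ slopes are distinct), $\alpha=b$ and $v'\equiv\alpha$.

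\emph{Membership in $\mathcal W_q$.} In characteristic two the defining identity \eqref{eq:slope-space} reads $0=(X-Y)(v'(X)+v'(Y))$, which holds because $v'$ is constant. Equivalently, by Lemma~\ref{lem:slope-kernel}, $v$ lies in the span of $1$, $X$ and the even powers $X^{2j}$ with $2j\le q-2$. This requires $\deg v\le q-1$, which follows from our choice of representative.

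The main point of care is matching $\alpha$ with the common missing slope; once that is done, the rest is counting.
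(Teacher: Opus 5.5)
Your proof is correct and follows essentially the paper's route. The paper likewise reads off the form from the lines through $[0:0:1]$, notes that the slopes from each affine point run through $\F_q\setminus\{\alpha\}$, and sums them via power sums to get $v'(a)=\alpha$. You instead outsource that power-sum step to Lemma~\ref{lem:common-missing-slope} (whose commonness argument is redundant here, since the geometry already shows $\alpha$ is missed at every point) and then match $b=\alpha$.

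One small wording fix: each line through a hyperoval point meets the hyperoval in exactly one further point because $q+1$ lines must account for $q+1$ remaining points, each line holding at most one. The parity of $q+2$ is not the reason.
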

\begin{proof}
Every line through $[0:0:1]$ contains exactly one other hyperoval point.
For fixed $a$, the slopes $(v(x)-v(a))/(x-a)$, $x\ne a$, therefore
run through $\F_q\setminus\{\alpha\}$ and sum to $\alpha$.
Expanding in $x-a$ and summing powers gives $v'(a)=\alpha$;
all positive powers involved have degree at most $q-2$ and sum to zero.
The degree bound on $v'$ gives the polynomial identity.
\end{proof}

The common constraint from several such projections forces the original word to have small degree.

\begin{lemma}
\label{thm:binary-intersection}\label{thm:common-intersection}
For $q\ge8$, $1\le c\le\lfloor q/2\rfloor-2$, and
$B\subseteq\F_q$ of size $c+1$, put $E_b=B\setminus\{b\}$. Then
\begin{equation}\label{eq:binary-intersection}
\bigcap_{b\in B}(\mathcal P_{c-1}+p_{E_b}\mathcal W_q)=\mathcal P_{c+2}.
\end{equation}
\end{lemma}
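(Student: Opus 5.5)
The inclusion $\mathcal P_{c+2}\subseteq\bigcap_b(\mathcal P_{c-1}+p_{E_b}\mathcal W_q)$ is immediate from division with remainder. For $g\in\mathcal P_{c+2}$ and any $b$, write $g=p_{E_b}v+f$ with $\deg f<c$ and $\deg v\le 2$. Then $v\in\mathcal P_2\subseteq\mathcal W_q$ by Lemma~\ref{lem:slope-kernel}. For the reverse inclusion in odd characteristic a single $b$ already suffices. There $\mathcal W_q=\mathcal P_2$, so $p_{E_b}v_b$ is a genuine polynomial of degree at most $c+2<q$ and no reduction modulo $X^q-X$ occurs, giving $u\in\mathcal P_{c+2}$ at once. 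So the real content is even $q$.

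For even $q$, let $u$ lie in the intersection and write $u\equiv f_b+p_{E_b}v_b$ for each $b\in B$. Here $\deg f_b<c$ and $v_b=e_b+\alpha_bX+\beta_b$, with $e_b$ in the span of $X^{2j}$ for $1\le j\le q/2-1$. Products are taken coordinatewise, that is, modulo $X^q-X$. I compare two indices $b\ne b'$ and put $D=B\setminus\{b,b'\}$, a $(c-1)$-set, so that $p_{E_b}=p_D(X-b')$ and $p_{E_{b'}}=p_D(X-b)$. Set
\[
 h=(X-b')v_b-(X-b)v_{b'} .
\]
This is a genuine polynomial of degree at most $q-1$, since the $v$'s have degree at most $q-2$. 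Subtracting the two representations gives $p_Dh\equiv g$ modulo $X^q-X$ with $\deg g<c$.

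The key folding step is as follows. The polynomial $p_Dh$ has degree at most $q+c-2$. Reducing modulo $X^q-X$ moves the coefficients of degrees $q,\dots,q+c-2$ into degrees $1,\dots,c-1$, all below $c$. Hence every coefficient of $p_Dh$ in degrees $c,\dots,q-1$ vanishes exactly, so
\[
 p_Dh=L+X^qR,\qquad \deg L<c,\quad \deg R\le c-2 .
\]
The hypothesis $c\le q/2-2$ makes the two blocks far apart. Comparing with $\deg h\le q-1$ should force $h$ to be supported on its lowest $\approx c$ and its highest $\approx c$ degrees. In particular $h$ has no coefficients in a long middle window, roughly degrees $c$ through $q-c-1$.

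Now I use parity. The odd-degree part of $h$ is $X\bigl(e_b-e_{b'}\bigr)+(\beta_b-\beta_{b'})X$ together with contributions from the terms $\alpha X\cdot(\text{constants})$. The even-degree part is $(b\,e_{b'}-b'e_b)$ plus low terms. Vanishing of the middle window in both parities gives two facts. First, $e_b-e_{b'}$ has no coefficients in that window. Second, $b\,e_{b'}-b'e_b$ has none there either, and since $b\ne b'$ these two together pin down $e_b$ in that range. I would then feed this back into $p_Dh=L+X^qR$ and use that $p_D$ has degree $c-1\ge0$ with nonzero constant term after a translation. The aim is to show that the extreme blocks are also zero, so $R=0$ and $h\in\mathcal P_{c}$. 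This yields $e_b=e_{b'}=:e$ up to a multiple of $X^2$, which may be absorbed into the quadratic part.

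Finally, with a common $e$ of the form $\sum_{j\ge2}\epsilon_jX^{2j}$, the same difference identity becomes $(b-b')p_De\equiv(\text{degree}\le c)$ modulo $X^q-X$. Running the folding argument once more on $p_De$ shows $e=0$, so each $v_b\in\mathcal P_2$. Then $u\in\mathcal P_{c-1}+p_{E_b}\mathcal P_2=\mathcal P_{c+2}$; here $q\ge8$ ensures that $c\ge1$ leaves room. The main obstacle is the bookkeeping in the folding step: checking precisely that the window vanishing, combined with $c\le\lfloor q/2\rfloor-2$, excludes every even power $X^{2j}$ with $j\ge2$ in $e_b$, including those near degree $q-2$ that fold back. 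That is where I expect to spend the effort, working coefficient-by-coefficient from the top degree downward.
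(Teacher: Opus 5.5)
\paragraph{Verdict.} There is a genuine gap. Your inclusion $\mathcal P_{c+2}\subseteq\bigcap_b(\mathcal P_{c-1}+p_{E_b}\mathcal W_q)$, the odd case, and the folding identity $p_Dh=L+X^qR$ are all correct. The step after that fails: the identity does not make $h$ vanish on a middle window.

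\paragraph{Where the window claim breaks.} Since $p_D$ divides $X^q-X$, it divides $L+XR=p_Dh-(X^q-X)R$. That polynomial has degree at most $c-1=\deg p_D$, so $L+XR=\gamma p_D$ and
\[
 h=\gamma+R\prod_{a\in\F_q\setminus D}(X-a),\qquad \deg R\le c-2 .
\]
This generally has full support. For example, with $c=2$ and $D=\{d\}$, $d\ne0$, it is $\gamma+r\sum_{i=1}^{q-1}d^{q-1-i}X^i$. The parity split adds no constraint. Because $b\ne b'$, the system $e_b-e_{b'}=E$, $be_{b'}-b'e_b=F$ is solvable for arbitrary $E,F\in\operatorname{span}\{X^{2j}:1\le j\le q/2-1\}$. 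The coefficients of degree at most two can then be matched using $\beta_b,\beta_{b'}$. In this example with $r\ne0$, one has $[X^{q-1}]h=r$, which forces $e_b-e_{b'}\notin\operatorname{span}\{X^2\}$. Setting $u=p_{E_{b'}}v_{b'}$ gives a word in both spaces. So your conclusion that $e_b=e_{b'}$ modulo $X^2$, and the later step $e=0$, are false for a single pair.

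\paragraph{Why one pair cannot suffice.} This is not a bookkeeping issue. For even $q$ the sum $\mathcal P_{c-1}+p_{E_b}\mathcal W_q$ is direct. Indeed, write $v=\phi^2+\alpha X+\beta$ with $\deg\phi\le q/2-1$. Its zeros on $\F_q$ are zeros of $\phi+\alpha^{q/2}X^{q/2}+\beta^{q/2}$, so a nonzero $v$ has at most $q/2$ of them. But $p_{E_b}v\equiv0$ would need $q-c\ge q/2+2$ zeros. Hence each space has dimension $c+q/2+1$. Any two of them therefore meet in dimension at least $2c+2>c+3=\dim\mathcal P_{c+2}$ once $c\ge2$. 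No argument that uses only one pair $b,b'$ can succeed; yours does work for $c=1$, where $D=\varnothing$ forces $h$ to be constant. All $c+1$ representations must be used jointly.

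\paragraph{The missing idea.} The paper combines all $c+1$ representations through power sums. For $0\le 2j\le q-2c-4$ one has $\sum_{x\in\F_q}u(x)p_{E_b}(x)x^{2j}=0$. To see this, write $u=f+p_{E_b}v$:
\begin{itemize}
\item the pieces $fp_{E_b}X^{2j}$ and $\alpha p_{E_b}^2X^{2j+1}$ have degree below $q-1$;
\item in characteristic two, $p_{E_b}^2X^{2j}(v-\alpha X)$ has only even exponents, all at most $2q-6$.
\end{itemize}
So every power sum that occurs vanishes. The family $\{p_{E_b}\}_{b\in B}$ is a basis of $\mathcal P_c$, and $c\ge1$. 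Hence these test polynomials span $\mathcal P_{q-c-4}$. Then $\sum_{x}u(x)x^j=-u_{q-1-j}$ gives $\deg u\le c+2$. This is exactly where the hypothesis $c\le\lfloor q/2\rfloor-2$ is used.
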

\begin{proof}
For odd $q$, each space equals $\mathcal P_{c+2}$ by division by $p_{E_b}$.
For even $q$, set $D=q-2c-4$ and
$\mathcal T=\operatorname{span}\{X^{2j}:0\le2j\le D\}$.
If $u=f+pv$, where $p=p_{E_b}$, $\deg f<c$, and $v\in\mathcal W_q$, then
\begin{equation}\label{eq:binary-orthogonality}
\sum_{x\in\F_q}u(x)p(x)h(x)=0\qquad(h\in\mathcal T).
\end{equation}
Indeed, for $h=X^{2j}$ the term $fph$ has degree at most $q-5$;
$p^2h$ times the even part of $v$ has even exponents at most $2q-6$;
and its product with the linear term has degree at most $q-3$.
None contributes to the power sum, since
$$\sum_{x\in\F_q}x^i=
\begin{cases}-1,&i>0\text{ and }q-1\mid i,\\0,&\text{otherwise}.\end{cases}$$
Evaluation on $B$ shows that the $p_{E_b}$ form a basis of $\mathcal P_c$.
Since $c\ge1$, the intervals $[2j,2j+c]$ cover $[0,q-c-4]$.
Thus a vector in the intersection is orthogonal to $\mathcal P_{q-c-4}$.
For $u=\sum_{i=0}^{q-1}u_iX^i$ this gives
$0=\sum_xu(x)x^j=-u_{q-1-j}$ for $0\le j\le q-c-4$, hence
$\deg u\le c+2$. The reverse inclusion follows by polynomial division
and $\mathcal P_2\subseteq\mathcal W_q$.
\end{proof}

Applying this intersection identity to the prescribed projections transfers completion back to the word.

\begin{lemma}
\label{thm:hyperoval-transfer}\label{thm:common-transfer}
Let $q\ge8$, $3\le k\le\lfloor q/2\rfloor$, and let $u$ be deep.
Fix a $(k-1)$-subset $B$. If every projection $A_{B\setminus\{b\}}$,
$b\in B$, is contained in a conic for odd $q$, or a hyperoval for even
$q$, then $u$ is canonical.
\end{lemma}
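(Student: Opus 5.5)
Write $c=k-2$, so that $|B|=k-1=c+1$ and each $E_b=B\setminus\{b\}$ has size $c=k-2\ge1$. We also have $c\le\lfloor q/2\rfloor-2$, so Lemma~\ref{thm:binary-intersection} applies to $B$. The plan has three steps. First, for each $b\in B$, turn the completion hypothesis on $A_{E_b}$ into the membership $u\in\mathcal P_{c-1}+p_{E_b}\mathcal W_q$. Second, intersect over $b$ to get $\deg u\le c+2=k$. Third, use deepness to rule out $\deg u<k$.

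For odd $q$ the first step is exactly Lemma~\ref{lem:single-conic}. Indeed, if $A_{E_b}$ lies in a conic, then $u=f_{E_b}+p_{E_b}v$ with $v$ quadratic and $\deg f_{E_b}<c$, so $u\in\mathcal P_{c-1}+p_{E_b}\mathcal P_2$. That lemma in fact already yields canonicity from a single $b$, so the odd case is immediate. I would still phrase it uniformly through the intersection.

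For even $q$, let $\mathcal H_b$ be a hyperoval containing $A_{E_b}$. The arc $A_{E_b}$ contains $[0:0:1]$, so Lemma~\ref{lem:hyperoval-slope} writes $\mathcal H_b$ as $\{[0:0:1],[0:1:\alpha]\}\cup\{[1:x:v_b(x)]:x\in\F_q\}$ with $v_b\in\mathcal W_q$. Each affine point $[1:x:(u(x)-f_{E_b}(x))/p_{E_b}(x)]$, for $x\notin E_b$, lies on $\mathcal H_b$. It cannot be one of the two points at infinity, and the only affine point of $\mathcal H_b$ with first coordinates $(1:x)$ is $[1:x:v_b(x)]$. Hence $u(x)=f_{E_b}(x)+p_{E_b}(x)v_b(x)$ for every $x\notin E_b$. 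On $E_b$ both sides agree as well, since $p_{E_b}$ vanishes there and $f_{E_b}$ interpolates $u$ on $E_b$. Thus $u=f_{E_b}+p_{E_b}v_b$ as functions on $\F_q$.

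The one point to check is that this is an identity in the space of reduced polynomials of degree at most $q-1$. The product $p_{E_b}v_b$ may have degree above $q-1$, but the definition of $\mathcal P_{c-1}+p_{E_b}\mathcal W_q$ in Lemma~\ref{thm:binary-intersection} is taken on evaluation vectors, with coordinatewise products as set up before \eqref{eq:slope-space}. That lemma's proof also uses only the pointwise identity, through the power sums \eqref{eq:binary-orthogonality}. So the function-level identity suffices, and $u$ lies in each space of \eqref{eq:binary-intersection}.

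Intersecting over $b\in B$ and invoking Lemma~\ref{thm:binary-intersection} gives $u\in\mathcal P_{c+2}=\mathcal P_k$, so $\deg u\le k$. If $\deg u<k$, then $u$ is itself a codeword, so $d(u,C)=0<q-k$, contradicting deepness. Therefore $\deg u=k$, which is the canonical form \eqref{eq:canonical}.

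The main obstacle is the even-characteristic step from hyperoval containment to the pointwise identity $u=f_{E_b}+p_{E_b}v_b$: one must confirm that no affine point of $A_{E_b}$ falls on the extra point $[0:1:\alpha]$, and that the reduced-polynomial conventions match those of Lemma~\ref{thm:binary-intersection}. The rest is bookkeeping on the sizes of $B$ and $E_b$.
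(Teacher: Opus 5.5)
Your proposal is correct and follows the paper's proof. For each $b\in B$ you get $u\in\mathcal P_{k-3}+p_{B\setminus\{b\}}\mathcal W_q$ from Lemma~\ref{lem:single-conic} (odd $q$) or Lemma~\ref{lem:hyperoval-slope} (even $q$), intersect via Lemma~\ref{thm:common-intersection} to get $\deg u\le k$, and use deepness for equality. Your explicit derivation of the pointwise identity $u=f_{E_b}+p_{E_b}v_b$ from hyperoval containment, with products read coordinatewise as the paper stipulates, fills in the step the paper compresses into ``the preceding completion lemmas give.''
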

\begin{proof}
The preceding completion lemmas give
$u\in\mathcal P_{k-3}+p_{B\setminus\{b\}}\mathcal W_q$ for every $b$.
Lemma~\ref{thm:common-intersection} gives $\deg u\le k$, and deepness gives equality.
\end{proof}

For odd $q$, Lemma~\ref{lem:single-conic} requires only one completed
projection, with no restriction $k\le q/2$.

In even characteristic, the completion bound and the common
intersection identity give the following range.

\begin{lemma}\label{prop:even-boundary}
For $q=2^m\ge8$, every deep word is canonical for $3\le k\le K_2(q)$,
where
\begin{equation}\label{eq:even-new-endpoint}
K_2(q)=\begin{cases}\sqrt q+1,&m\text{ even},\\
\sqrt{2q},&m\text{ odd}.\end{cases}
\end{equation}
More generally, within $3\le k\le q/2$, it suffices that
\begin{equation}\label{eq:even-new-range}
k<\sqrt q+2,\qquad\text{or}\qquad
q\text{ is nonsquare and }k<\sqrt{2q}+1.
\end{equation}
\end{lemma}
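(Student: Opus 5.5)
The plan is to feed the prescribed planar projections of Lemma~\ref{lem:single-conic} into the hyperoval completion bound and then apply Lemma~\ref{thm:common-transfer}. Let $q=2^m\ge8$, let $u$ be deep of dimension $k$ with $3\le k\le q/2$, and fix a $(k-1)$-subset $B\subseteq\F_q$. For each $b\in B$ put $E=B\setminus\{b\}$, so $|E|=k-2$. By Lemma~\ref{lem:single-conic}, $A_E$ is a planar arc of size $q-k+3$. Its arc property was proved there without any characteristic assumption: three collinear affine points give a forbidden agreement, and lines through $[0:0:1]$ meet at most one affine point.

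Next I apply Lemma~\ref{lem:hyperoval-completion}. For every even $q\ge4$, the arc $A_E$ extends to a hyperoval as soon as $q-k+3>q-\sqrt q+1$, that is, $k<\sqrt q+2$. For nonsquare $q\ge8$, the improved threshold applies as soon as $q-k+3>q-\sqrt{2q}+2$, that is, $k<\sqrt{2q}+1$. These are exactly the two alternatives in \eqref{eq:even-new-range}. Under either condition, every $A_{B\setminus\{b\}}$ with $b\in B$ is contained in a hyperoval, and this hyperoval passes through $[0:0:1]$ because $[0:0:1]\in A_E$. Since $q\ge8$ and $3\le k\le\lfloor q/2\rfloor$, Lemma~\ref{thm:common-transfer} then shows that $u$ is canonical. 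Inside that lemma, Lemma~\ref{lem:hyperoval-slope} identifies the affine part of each hyperoval with a graph $x\mapsto v(x)$ with $v\in\mathcal W_q$. Matching it with the affine points of $A_E$ gives $u=f_E+p_Ev$ on $\F_q$, and Lemma~\ref{thm:common-intersection} with $c=k-2$ forces $\deg u\le k$.

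It remains to derive the explicit endpoint $K_2(q)$ from \eqref{eq:even-new-range}. For even $m$, $\sqrt q$ is an integer and $k\le\sqrt q+1<\sqrt q+2$. For odd $m$, $q$ is nonsquare and $k\le\sqrt{2q}<\sqrt{2q}+1$; here $\sqrt{2q}=2^{(m+1)/2}$ is an integer. In both cases one must also check $K_2(q)\le q/2$ for $q\ge8$, so that the standing hypothesis $k\le\lfloor q/2\rfloor$ of Lemma~\ref{thm:common-transfer} holds. For $m\ge4$ even, $\sqrt q+1\le q/2$. For $m\ge3$ odd, $\sqrt{2q}\le q/2$ is equivalent to $q\ge8$.

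The step that needs the most care is the identification inside the transfer. The affine points of $A_E$ for $x\notin E$ must coincide with the affine points $[1:x:v(x)]$ of the hyperoval. This holds because the lines $X=xZ$ through $[0:0:1]$ each meet the hyperoval in exactly one further point, which is affine for every $x\in\F_q$. Hence $(u-f_E)/p_E=v$ on $\F_q\setminus E$. On $E$ itself, $u=f_E=f_E+p_Ev$ trivially. So the identity $u=f_E+p_Ev$ holds on all of $\F_q$, and since $\deg(f_E+p_Ev)$ may exceed $q-1$, it holds for the reduced interpolation representative modulo $X^q-X$. I expect this to cause no difficulty, because the orthogonality argument in Lemma~\ref{thm:common-intersection} is phrased through power sums over $\F_q$ and therefore applies to evaluation vectors.
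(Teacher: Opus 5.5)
Your proposal is correct and takes the same route as the paper. The paper also applies Lemma~\ref{lem:hyperoval-completion} to the projections $A_{B\setminus\{b\}}$ of size $q-k+3$, then Lemma~\ref{thm:common-transfer}, and checks that both integer endpoints are at most $q/2$; your additional unpacking of the identity $u=f_E+p_Ev$ on evaluation vectors is sound but is already contained in the transfer lemma.
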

\begin{proof}
Apply Lemma~\ref{lem:hyperoval-completion} to the projections of size
$q-k+3$ and then Lemma~\ref{thm:common-transfer}. Both stated integer
endpoints are at most $q/2$.
\end{proof}

To obtain the odd nonsquare range, we thin nonconic arcs until the
two-curve bound applies. For odd characteristic $p$, write
$$g_h(z)=z^2+(3-2h)z+2-h^2-3h.$$
\begin{lemma}\label{lem:thinning-bezout}
Suppose a nonconic arc $A$ of size $q+2-\tau$ lies on a specified curve
$\phi$ of degree $s\le\tau$. Let $\sigma\ge\tau$ satisfy
$q+2-\sigma\ge6$ and $g_h(\sigma)\ge0$, where
$h=p^{\lfloor\log_p\sigma\rfloor}$.
Then $A$ has a nonconic subset $A'$ of size $q+2-\sigma$, and every such
subset lies on a curve of degree at most $\sigma+h$ coprime to $\phi$.
Consequently $q+2-\sigma\le s(\sigma+h)$.
\end{lemma}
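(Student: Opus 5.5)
The plan has three steps: thin $A$ to a nonconic subset of the right size, apply the second alternative of Lemma~\ref{lem:two-curves} with the roles $\tau\to\sigma$, $h$ as given, and then finish with B\'ezout.

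\emph{Thinning.} $A$ is a nonconic arc of size $q+2-\tau\ge q+2-\sigma\ge6$. Since $A$ is not contained in a conic, it contains a $6$-subset $A_6$ not lying on a conic. Indeed, five points of an arc determine a unique conic. If every $6$-subset lay on a conic, then any five fixed points together with any sixth point would lie on that one conic, which would contain all of $A$. We then enlarge $A_6$ inside $A$ to any size $q+2-\sigma$; this is possible because $6\le q+2-\sigma\le q+2-\tau$. The result $A'$ is still nonconic, since it contains $A_6$. It is an arc, being a subset of an arc, and it lies on $\phi$.

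\emph{Second curve.} Apply Lemma~\ref{lem:two-curves} to $A'$ with $\tau$ replaced by $\sigma$. The specified curve is $\phi$, of degree $s\le\tau\le\sigma$; to get a degree-$\sigma$ curve containing $A'$, multiply $\phi$ by a suitable form of degree $\sigma-s$. Choose that form to be a product of lines avoiding common components with whatever is needed, e.g.\ $\phi\,\ell^{\sigma-s}$ with $\ell$ a line. The lemma then yields a curve $\psi$ of degree at most $\sigma+h$, coprime to $\phi\ell^{\sigma-s}$, hence coprime to $\phi$, and containing $A'$.

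It remains to check the technical condition \eqref{eq:technical} with $\tau=\sigma$, namely
\[
 h(\sigma+h/2+3/2)\le(\sigma+2)(\sigma+1)/2 .
\]
Multiplying by $2$ and rearranging, this reads
\[
 \sigma^2+3\sigma+2-2h\sigma-h^2-3h\ge0,
\]
which is exactly $g_h(\sigma)\ge0$. So the hypothesis is designed to match, and the only real check is this algebraic identity.

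\emph{B\'ezout.} $\phi$ and $\psi$ are coprime, so B\'ezout bounds their common points by $s(\sigma+h)$. Since $A'\subseteq\phi\cap\psi$, we get $q+2-\sigma\le s(\sigma+h)$.

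The subtle point I expect is in the second step. If the cited theorem requires the specified degree-$\tau$ curve to be exactly degree $\sigma$, one must ensure the coprime curve is coprime to $\phi$ itself and not merely to the padded curve. Padding with a line $\ell$ handles this, since coprimality to a product implies coprimality to each factor. Alternatively, if $s<\sigma$ one can note that the statement "some curve of degree $\sigma$ contains $A'$" is what the lemma actually uses. The claim that every such subset $A'$ works follows because nothing in the argument depended on the choice of $A'$ beyond nonconicity and size.
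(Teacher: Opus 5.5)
Your proposal is correct and follows the paper's own argument. Like the paper, you thin $A$ while keeping a six-point nonconic core, then apply the second alternative of Lemma~\ref{lem:two-curves} with $\tau$ replaced by $\sigma$ to $\phi$ padded by $\sigma-s$ linear factors, and finish with B\'ezout. Your explicit check that \eqref{eq:technical} at $\sigma$ is exactly $g_h(\sigma)\ge0$ is the step the paper leaves implicit.
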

\begin{proof}
Retain five points and a sixth outside their unique conic while thinning.
Apply Lemma~\ref{lem:two-curves} to $\phi$ padded by $\sigma-s$ linear
factors, and then apply B\'ezout.
\end{proof}

We repeatedly use the following observation. If
$h=p^{\lfloor\log_p\tau\rfloor}$ and $g_h(\tau)<0$, then $h\ge3$ and,
for $\sigma=(5h-1)/2$,
\begin{equation}\label{eq:thinning-arithmetic}
\tau<\sigma<3h\le ph,\qquad g_h(\sigma)=(h^2+12h+3)/4>0.
\end{equation}
Indeed, $g_1(z)=(z-1)(z+2)$, and $g_h$ increases on $[h,\infty)$.

For nonsquare odd fields, thinning gives the following range, which will meet the uniform detector estimate.

\begin{lemma}\label{prop:nonsquare}
Let $q=p^{2m+1}$ with odd $p$ and $m\ge1$, and put $n=p^m$. Every deep
word is canonical for $2\le k\le K_{\rm BL}(q)$, where
$$K_{\rm BL}(q)=\max\left\{n,
\left\lfloor\frac{\lfloor\sqrt{4q+4n+8}\rfloor-2n+1}{2}\right\rfloor\right\}.$$
\end{lemma}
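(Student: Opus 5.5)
The plan is to argue by contradiction from a noncanonical deep word, using the prescribed projection of Lemma~\ref{lem:single-conic}. Fix a deep $u$ with $2\le k\le K_{\rm BL}(q)$ and any $(k-2)$-set $E$. By Lemma~\ref{lem:single-conic}, $A_E$ is an arc of size $q-k+3=q+2-\tau$ with $\tau=k-1$. If $A_E$ lies in a conic, the same lemma makes $u$ canonical. So it suffices to show that a nonconic arc of size $q+2-\tau$ cannot exist for $\tau\le K_{\rm BL}(q)-1$. Throughout, write $q=pn^2$ and $h=p^{\lfloor\log_p\tau\rfloor}$. The one structural fact I will use repeatedly is that $\tau\lesssim\sqrt q=n\sqrt p<pn$, so $h\le n$; moreover $h\le n/p$ whenever $\tau<n$.

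I will split according to whether $I(A_E)_\tau$ vanishes.

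\emph{Case $I(A_E)_\tau=0$.} The first part of Lemma~\ref{lem:two-curves} gives two coprime curves of degree at most $\tau+h$ containing $A_E$. B\'ezout then yields $q+2-\tau\le(\tau+h)^2$.

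\emph{Case $I(A_E)_\tau\ne0$.} Choose a curve $\phi$ of minimal degree $s\le\tau$ containing $A_E$. If $g_h(\tau)\ge0$, apply Lemma~\ref{lem:thinning-bezout} with $\sigma=\tau$. Otherwise take $\sigma=(5h-1)/2$, which is admissible by \eqref{eq:thinning-arithmetic}; since $\sigma<ph$, the parameter $h$ is unchanged. In both subcases we obtain $q+2-\sigma\le s(\sigma+h)\le\tau(\sigma+h)$. The side condition $q+2-\sigma\ge6$ holds trivially, since $\sigma=O(\sqrt q)$.

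It remains to show both inequalities fail throughout the range, treating the two terms of the maximum separately.

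\emph{First term, $k\le n$.} Here $\tau\le n-1$ and $h\le n/p\le n/3$. Hence $(\tau+h)^2<(4n/3)^2<3n^2\le q$, so the first case is impossible. In the second case we have $\tau(\sigma+h)\le\max\{\tau(\tau+h),\,\tau\cdot 7h/2\}$, and each is at most a small multiple of $n^2$ strictly below $pn^2$ once $p\ge3$. I would check this constant carefully, especially for $p=3$.

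\emph{Second term.} Use only $h\le n$. The bound $q+2-\tau\le\tau(\tau+n)$, which dominates both cases when $\sigma=\tau$, rearranges to $\tau^2+(n+1)\tau-(q+2)\ge0$. Rewriting in $k=\tau+1$ and completing the square should give exactly
\[
(2k+2n-1)^2\gtrsim 4q+4n+8.
\]
Therefore every $k\le\lfloor(\lfloor\sqrt{4q+4n+8}\rfloor-2n+1)/2\rfloor$ violates it. I expect to adjust the bookkeeping between $\tau+h$ and $\tau+n$, and between $s$ and $\tau$, to land precisely on this closed form.

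The main obstacle is the subcase $\sigma=(5h-1)/2>\tau$. There $\sigma+h$ can exceed $\tau+n$ when $h$ is close to $n$. I would handle it with $s\le\tau<\sigma$ and $h\le 2\tau/5+1$, which make $\tau(\sigma+h)\le\tau\cdot 7h/2$ comparable to $\tau(\tau+n)$. If this fails near the endpoint, I would fall back on verifying the needed inequality directly for the finitely many configurations $h\in\{n/p,n\}$.
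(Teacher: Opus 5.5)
Your case split on $I(A_E)_\tau$, the Bézout counts, and the choice $\sigma=(5h-1)/2$ when $g_h(\tau)<0$ all match the paper. Your handling of $k\le n$ also works, using $h\le n/p$ and checking the constants. The paper instead avoids the bad subcase there by thinning to $\tau_0=n-1$, where $g_{n/p}(n-1)>0$.

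There is a genuine gap at the step you yourself flag. Take $k>n$, so $h=n$, with $I(A_E)_\tau\ne0$ and $g_n(\tau)<0$. You must prove $q+2-\sigma>\tau(\sigma+n)$ with $\sigma=(5n-1)/2$, and none of your suggested routes does this:
\begin{itemize}
\item The inequality $h\le 2\tau/5+1$ is reversed. From $\tau<\sigma=(5h-1)/2$ you get $h>(2\tau+1)/5$, and here $h=n\le\tau$.
\item Comparing with $\tau(\tau+n)$ cannot succeed, since $\sigma>\tau$ gives $\tau(\sigma+n)>\tau(\tau+n)$.
\item Checking $h\in\{n/p,n\}$ is not a finite verification, because $p$ and $n$ are unbounded. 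The only a priori bound in this subcase, $\tau<\sigma<5n/2$, yields $\tau(\sigma+n)<\frac{35}{4}n^2$. That exceeds $q=pn^2$ for $p\in\{5,7\}$.
\end{itemize}

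The missing idea is to feed the range hypothesis back into the estimate. The closed form $K_{\rm BL}$ is equivalent to $(\tau+n)^2+\tau-2<q$, the negation of the two-curve Bézout count $(\tau+n)^2\ge q+2-\tau$. It is not the negation of $\tau(\tau+n)\ge q+2-\tau$, whose completed square is $(2k+n-1)^2\ge 4q+8+(n+1)^2$.

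Since $\tau(\tau+h)\le(\tau+n)^2$, this single inequality disposes of both the case $I(A_E)_\tau=0$ and the subcase $\sigma=\tau$. It also gives $\tau<\sqrt q-n$, which forces $p\ge5$ once $k>n$.

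For the remaining subcase you have two ways to finish:
\begin{itemize}
\item As in the paper, $q+2-\sigma-\tau(\sigma+n)>(p-\frac72\sqrt p+\frac72)n^2+(\frac{\sqrt p}{2}-3)n+\frac52\ge\frac{7}{16}n^2-2n+\frac52$, which is positive for $n\ge5$.
\item Directly, substitute $q\ge(\tau+n)^2+\tau-1$ to get $q+2-\sigma-\tau(\sigma+n)\ge\tau^2-\frac32\tau n+n^2+\frac32\tau-\frac52 n+\frac32$. This is positive for $\tau\ge n$: at $\tau=n$ it equals $\frac12((n-1)^2+2)$, and it increases in $\tau$.
\end{itemize}
Without one of these estimates, your argument does not reach the second term of the maximum.
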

\begin{proof}
Suppose a prescribed projection is nonconic. For $k\le n$, thin it to
size $q-n+3$, retaining six nonconic points. With $\tau_0=n-1$ and $h=n/p$,
$$g_h(\tau_0)=(p^2-2p-1)h^2+(p-1)h>0,$$
whereas
$$q-n+3-(n-1+n/p)^2
=n^2\bigl(p-(1+1/p)^2\bigr)+n(1+2/p)+2>0.$$
This contradicts Lemma~\ref{lem:two-curves} and B\'ezout.
For $k>n$, the stated bound is equivalent to
$$(\tau+n)^2+\tau-2<q,\qquad\tau=k-1.$$
It forces $p\ge5$, $n\le\tau<pn$, and $\tau<\sqrt q-n$.
Lemma~\ref{lem:two-curves} gives a contradiction unless a degree-$\tau$
containing curve exists and $g_n(\tau)<0$. In that case use
$\sigma=(5n-1)/2$ in Lemma~\ref{lem:thinning-bezout}. Its hypotheses
follow from \eqref{eq:thinning-arithmetic} and $q+2-\sigma\ge6$, but
\begin{align*}
q-\tau(\sigma+n)-\sigma+2
&>\left(p-\frac72\sqrt p+\frac72\right)n^2
 +\left(\frac{\sqrt p}{2}-3\right)n+\frac52\\
&\ge\frac7{16}n^2-2n+\frac52>0\qquad(n\ge5),
\end{align*}
again a contradiction. Apply Lemma~\ref{lem:single-conic}.
\end{proof}

Voloch's bound supplies a second range that is useful for the two cubic field orders near the final threshold.

\begin{lemma}\label{prop:voloch-range}
For $q=p^{2m+1}$ with odd $p$ and $m\ge1$, every deep word is canonical
if $2\le k\le q-2$ and
\begin{equation}\label{eq:voloch-range}
k<\frac14\sqrt{pq}-\frac{29}{16}p+2.
\end{equation}
\end{lemma}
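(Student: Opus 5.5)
The plan is to reduce the statement to Lemma~\ref{lem:single-conic} through Voloch's conic-containment bound, Lemma~\ref{lem:voloch}. Fix a deep word $u$ with $2\le k\le q-2$ satisfying \eqref{eq:voloch-range}. Choose any subset $E\subseteq\F_q$ of size $k-2$; for $k=2$ take $E=\varnothing$ and $f_E=0$. By Lemma~\ref{lem:single-conic}, the prescribed projection $A_E$ of \eqref{eq:prescribed-projection} is a planar arc of size $q-k+3$, and if $A_E$ lies on a conic then $u$ is canonical. It therefore suffices to show that $A_E$ is contained in a conic.

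Since $q=p^{2m+1}$ with $p$ odd and $m\ge1$, Lemma~\ref{lem:voloch} says that every planar arc of size greater than $q-\sqrt{pq}/4+29p/16+1$ is contained in a conic. So the key step is the inequality
\[
 q-k+3>q-\tfrac14\sqrt{pq}+\tfrac{29}{16}p+1,
\]
which rearranges to $k<\tfrac14\sqrt{pq}-\tfrac{29}{16}p+2$. This is exactly hypothesis \eqref{eq:voloch-range}. Lemma~\ref{lem:voloch} then places $A_E$ in a conic, and Lemma~\ref{lem:single-conic} gives $u=f_E+p_Ev$ with $v$ quadratic, so $\deg u\le k$. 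Deepness rules out $\deg u<k$, so $u$ has the form \eqref{eq:canonical}.

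The one point needing care is that Lemma~\ref{lem:single-conic} requires a conic through at least five points of the arc, so that the conic is nonsingular. The arc size $q-k+3$ is at least $5$ because $k\le q-2$, so the argument of Lemma~\ref{lem:single-conic} applies. The converse, that every word of the form \eqref{eq:canonical} is deep, was already shown in the introduction by the root bound. The only real content is therefore the arithmetic matching of thresholds above, and I expect no serious obstacle. If the statement were meant to cover the case where \eqref{eq:voloch-range} is vacuous, namely $\tfrac14\sqrt{pq}$ small, the claim holds trivially because no admissible $k$ exists.
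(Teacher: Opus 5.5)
Your proposal is correct and follows the paper's own proof: form the prescribed projection $A_E$ of size $q-k+3$, note that \eqref{eq:voloch-range} is exactly the condition for $A_E$ to exceed Voloch's threshold in Lemma~\ref{lem:voloch}, and conclude with Lemma~\ref{lem:single-conic}. Your threshold arithmetic and the check that $|A_E|\ge5$ are both right.
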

\begin{proof}
Apply Lemmas~\ref{lem:voloch} and \ref{lem:single-conic} to a projection
of size $q-k+3$.
\end{proof}

For nonprime $q$, the preceding bounds give the canonical classification
through
\begin{equation}\label{eq:geometric-K}
\begin{aligned}
K(p^{2m})&=p^m-p^{m-1}&&(p\text{ odd}),\\
K(p^{2m+1})&=\max\left\{p^m,
\left\lfloor\frac{\lfloor\sqrt{4q+4p^m+8}\rfloor-2p^m+1}{2}\right\rfloor,
\left\lfloor\frac{4p^{m+1}-29p+31}{16}\right\rfloor\right\}
&&(p\text{ odd}),\\
K(2^m)&=K_2(2^m).
\end{aligned}
\end{equation}
For odd squares use Lemmas~\ref{lem:square} and \ref{lem:single-conic}.
The last odd term retains the strict inequality in \eqref{eq:voloch-range}.

The geometric bounds reduce the problem to the intermediate
dimensions. We first apply the detector with an initial dimension
close to four fifths of the square root of the field order.

\begin{lemma}
\label{prop:analytic-four-fifths-v30}
For every prime power $q\ge8192$, put
$k_0=\lceil4\sqrt q/5\rceil+1$.
Every deep word is canonical for
$k_0\le k<\lfloor(q-1)/2\rfloor$.
\end{lemma}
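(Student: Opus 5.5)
The plan is to apply the last assertion of Lemma~\ref{lem:anchored-Phi-v30} with $k_0=\lceil4\sqrt q/5\rceil+1$. It then suffices to exhibit integers $s\ge6$, $r\ge1$ and $M$ satisfying four conditions: $1\le M\le M_q^+(s)$, $q>2s+M+2$, $r<(q+M+2)/2$, $s+r\le k_0+2$, together with the single inequality
\[
\log\Psi(q;s,r,M)<\log Q_{\rm f}(q,k_0)=\log\Bigl(\Bigl\lfloor\tfrac{q-5}{2}\Bigr\rfloor\bigl(\lfloor q/2\rfloor-k_0\bigr)\Bigr).
\]
We must also check $8\le k_0<\lfloor(q-1)/2\rfloor$, which is immediate for $q\ge8192$ since then $k_0\ge74$. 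The right-hand side is about $\log(q^2/4)$. The left-hand side splits as $\log N_s(q)-\log N(q)+\log\Phi$, and we bound $\log\Phi$ by Lemma~\ref{lem:crude}. The terms $\log N(q)$ cancel, leaving
\[
\log\Psi\le 5\log q-\log\Bigl(q-1+\tbinom s4\Bigr)+\beta\log2+1+\tfrac12\log r+r\log\frac{c}{q-s-r+1}.
\]

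Next we choose parameters proportional to $\sqrt q$. Take $s=\lceil\sigma\sqrt q\rceil$ and $r=k_0+2-s$, so $r\approx(4/5-\sigma)\sqrt q$, and take $M=\lfloor\mu q\rfloor$ with $\mu<1$ a fixed constant. Since $M_q^+(s)\approx s^2/2\approx\sigma^2q/2$, the admissibility condition requires $\mu\le\sigma^2/2$ up to lower-order terms. With these choices $c\approx(1+\mu)q/2$ and $\beta\approx(1-\mu)/(2\mu)$ is bounded. The dominant terms are then as follows. The anchor count contributes $-\log\binom s4\approx-2\log q-\log(\sigma^4/24)$. The detector term $r\log\bigl(c/(q-s-r+1)\bigr)\approx r\log\bigl((1+\mu)/2\bigr)$ is negative and of order $-\sqrt q$. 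Everything else is $O(\log q)$: $5\log q-2\log q$ stays only logarithmic, and $\beta\log2+1+\frac12\log r$ is $O(1)+O(\log q)$.

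The required inequality therefore takes the form
\[
3\log q+\tfrac14\log q+O(1)-\Bigl(\tfrac45-\sigma\Bigr)\sqrt q\,\log\frac{2}{1+\mu}<2\log q-\log4.
\]
This holds for large $q$ once $\sigma<4/5$ and $\mu<1$. A natural choice is $\sigma=2/5$ and $\mu=\sigma^2/2-\epsilon\approx0.07$, which gives the coefficient $(2/5)\log(2/1.07)\approx0.25$ in front of $\sqrt q$. We would tune $\sigma$ and $\mu$ numerically to maximise $(4/5-\sigma)\log\bigl(2/(1+\mu)\bigr)$ subject to $\mu\le\sigma^2/2$, and the optimum seems to lie near $\sigma\approx0.45$.

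The main obstacle is making the estimate fully explicit down to $q=8192$, where $\sqrt q\approx90.5$. There the linear-in-$\sqrt q$ gain must beat roughly $(1.25)\log q\approx11$ plus constants. We would replace every ``$\approx$'' by an explicit inequality: floors and ceilings in $s$, $r$, $M$, $k_0$; the bound $\binom s4\ge(s-3)^4/24$; and $\log\bigl(c/(q-s-r+1)\bigr)\le\log\bigl((q+M+2)/(2(q-k_0-1))\bigr)$. The result would be a function $F(q)$ that is monotone in $q$ for $q\ge8192$. We would then verify $F(8192)<0$ directly and check monotonicity via derivatives in $x=\sqrt q$. If the margin at $q=8192$ is too thin, the fallback is to let $M$ be chosen per $q$ as the optimizing integer and to check the finitely many prime powers in a short initial range separately by the same explicit formula.
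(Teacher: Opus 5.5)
Your route is the paper's: apply Lemma~\ref{lem:anchored-Phi-v30} with $s+r=k_0+2$, bound $\Phi$ by Lemma~\ref{lem:crude}, and check $\Psi<Q_{\rm f}(q,k_0)$ explicitly. The gap is in the only substantive step, that check at the bottom of the range: with the parameters you recommend it fails at $q=8192$.

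\textbf{Your tuning criterion is wrong.} With $\mu=\sigma^2/2$, the quantity $(4/5-\sigma)\log\bigl(2/(1+\sigma^2/2)\bigr)$ is strictly decreasing in $\sigma$. So the $\sqrt q$-coefficient is largest as $\sigma\to0$, not near $0.45$. What keeps $\sigma$ from being small are the terms you set aside as $O(1)$, namely $\beta\log2\approx(1-\mu)\log2/(2\mu)$ and $\log(24/\sigma^4)$. At $\sqrt q\approx90.5$ these are comparable to the gain.

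\textbf{Concrete failure.} For $q=8192$ one has $k_0=74$ and $\log Q_{\rm f}(q,k_0)=\log(4093\cdot4022)\approx16.62$. Taking $s=37$, $r=39$, $M=M_q^+(37)=647$ gives
\[
 \log\Psi\le\log N_s(q)+\beta\log2+1+\tfrac12\log r+r\log\frac{c}{q-s-r+1}
 \approx33.84+4.00+1+1.83-23.70\approx16.97 .
\]
Taking $s=41$ gives about $18.7$, and $\mu=0.07$ with $s=37$ gives about $17.2$.

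\textbf{Your fallback does not rescue this.} At these $s$ the bound is still decreasing in $M$ at $M=M_q^+(s)$, so that constraint is binding and re-optimizing $M$ gains nothing. Moreover, $q=8192$ is itself one of the prime powers you would have to ``check separately'' with the same formula, and it fails there.

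\textbf{The repair is to take $s$ much smaller.} With $M\approx s^2/2$, the $s$-dependent part of the bound is roughly $q\log2/s^2+s\log2-4\log s$. This is minimized near $s^3\approx2q$, which is exactly the paper's choice $s=\lceil(2q)^{1/3}\rceil+1$ and $M=M_q^+(s)\approx0.79\,q^{2/3}$. Then $\mu\to0$ and each detector point contributes nearly $-\log2$. The price is $\beta\log2\approx0.44\,q^{1/3}$ plus a cost of $\tfrac{23}{12}\log q$ rather than your $\tfrac54\log q$.

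At $q=8192$ this choice is $s=27$, and Lemma~\ref{lem:crude} gives $\log\Psi\approx14.3$, a margin of about $2.3$. The paper's uniform rational bounds shrink this to roughly $0.2$, which is why that proof is so carefully explicit. Within your own scaling, a fixed $\sigma\approx0.3$ also works: at $q=8192$, $s=28$ gives a similar margin, which then grows with $q$. Even so, the explicit verification you deferred is the actual content of the proof and has to be carried out with such a choice.
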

\begin{proof}
Put $x=q^{1/6}$, $y=(2q)^{1/3}$, and choose
\[
 s=\lceil y\rceil+1,\qquad
 M=M_q^+(s),\qquad a=k_0+2,\qquad r=a-s.
\]
We prove $\Psi<Q_{\rm f}(q,k_0)$ by a uniform estimate.
Since $q\ge8192$, we have $x>4489/1000>4$ and $y>25$.
The elementary bounds $1259/1000<2^{1/3}<63/50$ give
\begin{align*}
 &\frac54x^2+1\le s\le\frac{63}{50}x^2+2,\\
 &\frac{197}{250}x^4\le M
       \le\frac45x^4+\frac{63}{25}x^2+2,\\
 &\frac45x^3-\frac{63}{50}x^2+1
       \le r\le\frac45x^3-\frac54x^2+3<\frac45x^3.
\end{align*}
For the lower bound on $M$, use
$M\ge((s-1)^2-3)/2\ge(y^2-3)/2$.
These inequalities imply all the hypotheses of
Lemma~\ref{lem:anchored-Phi-v30}.
In the notation of \eqref{eq:parameters}, they also give
\[
 \beta\le\frac{127}{200}x^2-\frac12,\qquad
 Q_{\rm f}(q,k_0)\ge\frac6{25}q^2.
\]
For the latter bound, use
$H_q\ge(q-6)/2$ and
$\lfloor q/2\rfloor-k_0\ge(q-\frac85\sqrt q-5)/2$.

We retain the $q-1$ term in the denominator of $N_s(q)$.
Since $s\ge y+1$,
\begin{align*}
 q-1+\binom s4
 &\ge q-1+\frac{(y+1)y(y-1)(y-2)}{24}\\
 &\ge \frac5{48}x^8+\frac45x^6.
\end{align*}
Indeed, expansion of the first line gives
$2^{4/3}x^8/24+5x^6/6-2^{2/3}x^4/24
+2^{1/3}x^2/12-1$; use $2^{4/3}>5/2$,
$2^{2/3}<8/5$, and $x>4$.
Consequently,
\[
 \frac{N_s(q)}{Q_{\rm f}(q,k_0)}
 \le Bx^{10},\qquad
 B=\begin{cases}31,&x\le5,\\40,&x>5.\end{cases}
\]

As in Lemma~\ref{lem:crude},
\[
 \log\frac{\Psi}{Q_{\rm f}(q,k_0)}
 \le\log\frac{N_s(q)}{Q_{\rm f}(q,k_0)}
 +\beta\log2+1+\tfrac12\log r
 +r\log\frac{c}{q-s-r+1}.
\]
Since $s+r=a$, the inequality $\log(1+z)\le z$ gives
\[
 r\log\frac{c}{q-a+1}
 \le-r\log2+r\frac{M+a+1}{q-a+1}
 \le-r\log2+\frac{13}{20}x.
\]
Here the last inequality follows by multiplying the positive
denominators and using the preceding bounds on $r,M,a$.
The required numerator difference is
\[
 P(x)=\frac{x^7}{100}+\frac9{25}x^6-\frac{127}{125}x^5
      +\frac{23}{100}x^4-8x^3+\frac{119}{100}x^2
      -\frac{39}{20}x-21.
\]
Writing $z=x-4$, its coefficients in descending order are
$1/100$, $16/25$, $1373/125$, $8871/100$, $9588/25$,
$87447/100$, $88053/100$, $16892/125$, all positive.
Thus $P(x)>0$ for $x\ge4$.

It follows that
\[
 \log\frac{\Psi}{Q_{\rm f}(q,k_0)}\le-F_B(x),
\]
where
\[
 F_B(x)=\log2\left(\frac45x^3-\frac{379}{200}x^2+\frac32\right)
 -\frac{13}{20}x-1+\frac12\log\frac54
 -\frac{23}{2}\log x-\log B.
\]
The derivative satisfies $F_B'(x)>25/2$ for $x\ge4$.
Moreover,
\[
 F_{31}(4489/1000)>
 \frac{1063098440293}{5000000000000}>0,\qquad F_{40}(5)>11.
\]
These numerical constants follow from rational bounds
$\log2>693/1000$, $\log(5/4)>222/1000$,
$\log(4489/1000)<751/500$, $\log31<1717/500$,
$\log5<161/100$, and $\log40<37/10$.
For the two lower bounds use the positive series
$\log z=2\sum_{j\ge0}((z-1)/(z+1))^{2j+1}/(2j+1)$;
three terms suffice for $z=2$ and one for $z=5/4$.
For the four upper bounds, the exponential sums through degrees
$7,14,7,9$, respectively, already exceed the required arguments.
Hence $\Psi<Q_{\rm f}(q,k_0)$ in both intervals $x\le5$ and $x>5$.
Lemma~\ref{lem:anchored-Phi-v30} completes the proof.
\end{proof}
For larger fields, the same argument permits an initial dimension
below the geometric endpoints in every characteristic.

\begin{lemma}\label{prop:uniform-32768-v30}
Theorem~\ref{thm:main} holds for every prime power $q\ge2^{15}$
without finite-field searches or field-dependent scalar certificates.
\end{lemma}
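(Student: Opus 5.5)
The plan is to cover the whole range $2\le k\le q-2$ by three overlapping pieces and to check that they overlap with uniform inequalities valid for all $q\ge2^{15}$.

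\textbf{High and small dimensions.} For $k\ge\lfloor(q-1)/2\rfloor$, Lemma~\ref{lem:high} already gives parts~(i)--(ii), including the nucleus family at $k=q-3$. Part~(iii) is Lemma~\ref{thm:dimension-two} and holds for every even $q\ge4$. The case $k=2$ with $q$ odd is Segre's theorem, which also follows from Lemma~\ref{lem:single-conic} with $E=\varnothing$ together with conic completion. For low $k\ge3$ (and $k=2$ odd), I will use the geometric endpoints $K(q)$ from \eqref{eq:geometric-K}:
\begin{itemize}
\item for even $q$, Lemma~\ref{prop:even-boundary} covers $3\le k\le K_2(q)\ge\sqrt q+1$;
\item for odd squares, Lemmas~\ref{lem:square} and~\ref{lem:single-conic} give $K=p^m-p^{m-1}\ge\frac23\sqrt q$;
\item for odd nonsquares, Lemma~\ref{prop:nonsquare} gives at least roughly $\sqrt q-p^m$, and Lemma~\ref{prop:voloch-range} gives about $\frac14\sqrt{pq}$;
\item prime fields are handled entirely by Lemma~\ref{lem:prime}.
\end{itemize}

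\textbf{Intermediate dimensions.} This piece uses the detector, via Lemma~\ref{lem:anchored-Phi-v30}. The constraint is that Lemma~\ref{prop:analytic-four-fifths-v30} starts at $k_0=\lceil4\sqrt q/5\rceil+1$, which may exceed the odd-square endpoint $\frac23\sqrt q$ and the nonsquare endpoints when $p$ is small. So for $q\ge2^{15}$ I would rerun the same argument with a smaller initial dimension, say $k_0=\lceil\gamma\sqrt q\rceil+1$ with $\gamma$ around $1/2$, taking $s=\lceil(2q)^{1/3}\rceil+1$, $M=M_q^+(s)$ and $r=k_0+2-s$. The steps are:
\begin{enumerate}
\item Substitute $x=q^{1/6}$ with $x\ge 2^{5/2}$.
\item Bound $N_s(q)/Q_{\rm f}(q,k_0)\le Bx^{10}$.
\item Bound $\beta\le c_1x^2$.
\item Bound the main term $r\log(c/(q-a+1))\le -r\log2+O(x)$ using Lemma~\ref{lem:crude}.
\end{enumerate}
Since $r\approx\gamma x^3$, the negative term $\gamma\log 2\cdot x^3$ dominates $\beta\log2+\frac{23}{2}\log x+\log B$. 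Positivity for $x\ge2^{5/2}$ can then be certified by a shifted polynomial with positive coefficients and a monotone derivative bound, exactly as in the proof of Lemma~\ref{prop:analytic-four-fifths-v30}. This avoids any field-by-field computation.

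\textbf{Main obstacle.} The delicate step is choosing $\gamma$ so that one analytic inequality works uniformly from $q=2^{15}$, while $k_0\le K(q)$ holds in every characteristic. The worst cases are odd nonsquares $q=p^{2m+1}$ with $p=3,5$, where $K(q)$ comes from the thinning bound of Lemma~\ref{prop:nonsquare}. There I would check $\lfloor(\lfloor\sqrt{4q+4n+8}\rfloor-2n+1)/2\rfloor\ge\sqrt q-n-1$ with $n=\sqrt{q/p}\le\sqrt q/\sqrt3$. This gives $K\ge(1-1/\sqrt3)\sqrt q-1\approx0.42\sqrt q$, so $\gamma$ may need to be about $0.4$, which tightens the analytic margin. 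If that fails for the smallest admissible $q$, I would add the Voloch range to the nonsquare endpoints, since for $p\ge5$ it already exceeds $\frac12\sqrt q$. For $p=3$ I would fall back on $K\ge\max\{n,\dots\}$, and if necessary enlarge $s$ slightly, using $M\sim s^2/2$ to gain in $\beta$.

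Once $K(q)\ge k_0-1$ is verified in each case, the three ranges cover $2\le k\le q-2$. The exceptional families appear only through Lemma~\ref{lem:high} and Lemma~\ref{thm:dimension-two}, which proves Theorem~\ref{thm:main} for $q\ge2^{15}$.
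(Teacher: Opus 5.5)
Your decomposition is the paper's: geometry via $K(q)$ below $k_0$, then the anchored detector of Lemma~\ref{lem:anchored-Phi-v30} from $k_0$ to $\lfloor(q-1)/2\rfloor-1$, then Lemmas~\ref{lem:high} and~\ref{thm:dimension-two} for the rest. You even use the same parameters $s=\lceil(2q)^{1/3}\rceil+1$, $M=M_q^+(s)$, $r=k_0+2-s$. The gap is that the content of this lemma is the calibration of $\gamma$. You leave it open, and the values you suggest do not work.

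Your heuristic that $\gamma\log2\cdot x^3$ dominates ignores that $r=\gamma x^3-2^{1/3}x^2+O(1)$. At $x=2^{5/2}$ the subtracted $s=42$ is almost half of $\gamma\sqrt q$. Concretely, at $q=2^{15}$ (where $M=839$) the positive terms in Lemma~\ref{lem:crude} are $\log(N_s/Q_{\rm f})\approx20.7$, $\beta\log2\approx13.2$ and $1+\frac12\log r\approx3$. Against these, $r\log\bigl(c/(q-s-r+1)\bigr)\approx-0.665\,r$, so you need $r\gtrsim55$, i.e.\ $\gamma\gtrsim0.52$.

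With $\gamma=1/2$ you get $k_0=92$ and $r=52$. The resulting upper bound for $\log(\Psi/Q_{\rm f})$ is about $+2.2$, and still about $+1.3$ if you keep the exact $\theta=2w^M-w^{2M}$. Enlarging $s$ by a few units moves this by less than $0.2$, since $s$ is already essentially optimal. With $\gamma\approx0.4$ ($r\approx34$) the bound is off by more than ten. So no certificate ``for $x\ge2^{5/2}$'' exists with your choices.

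You drifted toward small $\gamma$ because you misidentified the binding case on the geometric side. For $p=3$, the first entry of the maximum in Lemma~\ref{prop:nonsquare} already gives $n=\sqrt{q/3}\approx0.577\sqrt q$; your $(1-1/\sqrt3)\sqrt q$ concerns only the second entry. The binding case is $p=5$. There the thinning entry gives $\lfloor\sqrt{q+n+2}\rfloor-n\ge\lfloor(1-1/\sqrt5)\sqrt q\rfloor$, with $1-1/\sqrt5\approx0.5528$. The Voloch entry does not help: at $q=5^7$ it gives $149$, against $155$. Odd squares give at least $\frac23\sqrt q$, even fields at least $\sqrt q$, and prime fields are covered by Lemma~\ref{lem:prime}.

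The admissible window is therefore roughly $0.52\lesssim\gamma\le0.5528$. The paper takes $k_0=\lfloor\frac{11}{20}\sqrt q\rfloor+1$. The floor, rather than your ceiling, makes $\lfloor\frac{11}{20}\sqrt q\rfloor\le\lfloor(1-1/\sqrt5)\sqrt q\rfloor$ automatic, so geometry reaches $k_0-1$. Even with this choice, the paper's rational bounds leave only the margin $F(113/20)>0.22$. Until you pin $\gamma$ into this narrow window and check both ends, your plan does not prove the lemma.
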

\begin{proof}
Put $x=q^{1/6}$, $y=(2q)^{1/3}$, and choose
\[
 k_0=\left\lfloor\frac{11}{20}\sqrt q\right\rfloor+1,\quad
 a=k_0+2,\quad s=\lceil y\rceil+1,\quad
 M=M_q^+(s),\quad r=a-s.
\]
Here $x>113/20$, and
\begin{align*}
 &\frac{791}{1000}x^4\le M
       \le\frac45x^4+\frac{63}{25}x^2+2,\\
 &\frac{11}{20}x^3-\frac{63}{50}x^2
       \le r\le\frac{11}{20}x^3-\frac54x^2+2
       <\frac{11}{20}x^3,\\
 &\beta\le\frac{633}{1000}x^2-\frac12,\qquad
 Q_{\rm f}(q,k_0)\ge\frac6{25}q^2.
\end{align*}
These follow as in the proof of
Lemma~\ref{prop:analytic-four-fifths-v30}; for the lower
bound on $M$, use $M\ge(y^2-3)/2$ and $x\ge113/20$.
In particular all hypotheses of Lemma~\ref{lem:anchored-Phi-v30}
hold. The same bound on $N_s(q)$ gives
$N_s(q)/Q_{\rm f}(q,k_0)\le40x^{10}$.
Moreover,
\[
 r\log\frac{c}{q-a+1}
 \le-r\log2+r\frac{M+a+1}{q-a+1}
 \le-r\log2+\frac9{20}x.
\]
After substituting the stated bounds on $r,M,a$, the last inequality
reduces to positivity of
\[
 \frac{x^7}{100}+\frac{279}{400}x^6-\frac{1397}{2000}x^5
 +\frac{521}{400}x^4-\frac{22}{5}x^3
 +\frac{123}{50}x^2-\frac9{10}x-12.
\]
For $x\ge5$, group the terms of degrees $6,5$, those of degrees
$4,3$, and those of degrees $2,1,0$; each group is positive.
The logarithmic estimate used in the preceding lemma therefore
gives
\[
 \log\frac{\Psi(q;s,r,M)}{Q_{\rm f}(q,k_0)}\le-F(x),
\]
where
\[
 F(x)=\log2\left(\frac{11}{20}x^3-\frac{1893}{1000}x^2+\frac12\right)
 -\frac9{20}x-1+\frac12\log\frac{20}{11}
 -\frac{23}{2}\log x-\log40.
\]
Its derivative is positive for $x\ge5$, and
\[
 F(113/20)>\frac{18057169}{80000000}>0.
\]
For this last estimate use $\log2>69/100$,
$\log(20/11)>29/50$, $\log(113/20)<433/250$, and
$\log40<37/10$. The first two follow from the positive logarithm
series; the exponential series through degrees $8$ and $9$,
respectively, prove the last two.
Thus Lemma~\ref{lem:anchored-Phi-v30} proves the canonical
classification from $k_0$ to $\lfloor(q-1)/2\rfloor-1$.

It remains to join the geometric range. Even fields have
$K_2(q)\ge\sqrt q$. Odd prime fields are already covered by
Lemma~\ref{lem:prime}. For odd squares, \eqref{eq:geometric-K}
gives $K(q)\ge2\sqrt q/3$. For an odd nonsquare
$q=p^{2m+1}$ put $n=p^m$. If $p=3$, then
$K(q)\ge n=\sqrt q/\sqrt3$. If $p\ge5$, the second term in
\eqref{eq:geometric-K} is at least
\[
 \lfloor\sqrt{q+n+2}\rfloor-n
 \ge\left\lfloor(1-1/\sqrt5)\sqrt q\right\rfloor
 \ge\left\lfloor\frac{11}{20}\sqrt q\right\rfloor.
\]
In every case geometry covers the dimensions below $k_0$.
Lemma~\ref{lem:high} covers the higher dimensions, including its
even-field exception, and Lemma~\ref{thm:dimension-two} supplies
the even-field case $k=2$.
\end{proof}

Only the interval $19683<q<32768$ remains. Its possible prime-power
orders allow the four-fifths estimate to join the geometric bounds.
We now assemble these ranges.

\begin{proof}
The root bound proves that canonical words are deep, and the additional
even-field family is supplied by Lemma~\ref{lem:high}.
Part~(iii) is Lemma~\ref{thm:dimension-two}.
Lemma~\ref{prop:uniform-32768-v30} proves parts~(i)--(ii) when
$q\ge32768$, so suppose $19683<q<32768$.
Odd prime fields are covered by Lemma~\ref{lem:prime}.
There is no even prime power strictly between $19683$ and $32768$.

Suppose first that $q=p^{2m}$ is an odd square. If $m=1$, then
$p>140$, and the geometric endpoint $K(q)=p-1$ is at least
$\lceil4p/5\rceil$. If $m=2$, the only possibility is
$p=13$, giving $q=28561$ and
$K(q)=169-13=156\ge\lceil4\sqrt q/5\rceil=136$.
There is no case $m=3$, since $5^6<19683<32768<7^6$;
there is no case $m=4$, since $3^8<19683<32768<5^8$;
and $m\ge5$ is impossible because $3^{10}>32768$.

Now suppose $q=p^e$ is an odd nonsquare nonprime. If $e=3$,
then $p\in\{29,31\}$, and the Voloch term in
\eqref{eq:geometric-K} gives the respective endpoints
\[
 \left\lfloor\frac{4p^2-29p+31}{16}\right\rfloor=159, 186.
\]
These exceed $\lceil4\sqrt{29^3}/5\rceil=125$ and
$\lceil4\sqrt{31^3}/5\rceil=139$, respectively.
The exponent $e=5$ is impossible since
$7^5<19683<32768<11^5$; $e=7$ is impossible since
$3^7<19683<32768<5^7$; $e=9$ is impossible since
$3^9=19683<32768<5^9$; and $e\ge11$ is impossible since
$3^{11}>32768$.

Thus in every remaining case the geometric range reaches
$\lceil4\sqrt q/5\rceil$. Lemma~\ref{prop:analytic-four-fifths-v30}
covers the dimensions from $\lceil4\sqrt q/5\rceil+1$ through
$\lfloor(q-1)/2\rfloor-1$, and Lemma~\ref{lem:high} covers the
upper range, including its prescribed even-characteristic exception.
This proves the theorem.
\end{proof}

\appendix
\section{Proofs of the corollaries}\label{app:corollaries}
We derive the four consequences stated in the introduction from
Theorem~\ref{thm:main}. We begin with NRC rigidity, then obtain
completeness and projective covering radii, and finally turn to
the two polynomial formulations.

\begin{proof}[Proof of Corollary~\ref{cor:nrc-rigidity}]
Put $m=k$ and $d=q-m$. Choose coordinates on $\mathcal N$ so that
its missing point is $(0:\cdots:0:1)$, and put
$H=(a^j)_{0\le j<m,\,a\in\F_q}$. The field power sums show that
$H$ is a parity-check matrix of $\RS(\F_q,d)$. It has rank $m$,
so every possible extra column is $H\mathbf u^{\mathsf T}$ for
some word $\mathbf u$. By the syndrome correspondence,
$[H\mid H\mathbf u^{\mathsf T}]$ is MDS exactly when $\mathbf u$
is deep. Apply Theorem~\ref{thm:main} with code dimension $d$.
For a canonical interpolation polynomial
$u(X)=\lambda X^d+f(X)$, $\deg f<d$, the syndrome is
$(0,\ldots,0,-\lambda)^{\mathsf T}$, so its projective point is
the missing NRC point. The nonexceptional hypotheses exclude
$d=2$ and $d=q-3$ when $q$ is even.
For $m=3$ in even characteristic, $d=q-3$, and the additional
family \eqref{eq:nucleus-family} has syndrome
$(0,-\lambda,0)^{\mathsf T}$. This is precisely the nucleus
$(0:1:0)$.
\end{proof}

We next use rigidity to obtain completeness, treating the boundary
dimensions directly through the projective code.

\begin{proof}[Proof of Corollary~\ref{cor:prs-radius}]
An NRC in a dimension covered by the first assertion of
Corollary~\ref{cor:nrc-rigidity} is complete: an extra point
would, after deleting one NRC point, have to be that deleted
point. The case $k=2$ is also immediate, since the NRC is all
of $\PG(1,q)$. The dual of $\operatorname{PRS}(\F_q,k)$ is
$\operatorname{PRS}(\F_q,q+1-k)$. D\"ur's criterion therefore
gives the claimed radius from completeness, leaving $k=2$ for
odd $q$ and $k=3$ for even $q$, besides the two even exceptions.

For these remaining nonexceptional cases put $r=2$ when $q$ is
odd and $r=3$ when $q$ is even, and consider a word $(u,z)$ of
length $q+1$. If $u$ is not deep for $\RS(\F_q,r)$, its distance
from the projective code is at most $q-r$. Otherwise
Theorem~\ref{thm:main} gives $u=\lambda X^r+f$ with $\deg f<r$.
Choose an $r$-set $B\subseteq\F_q$ with
\[
 \sum_{b\in B}b=\frac{z-[X^{r-1}]f}{\lambda}.
\]
Such a set is obtained by translating any $r$-set, since the
characteristic does not divide $r$. Then
$g=u-\lambda\prod_{b\in B}(X-b)$ has degree less than $r$,
has $[X^{r-1}]g=z$, and agrees with $u$ on $B$. Thus $(u,z)$
has distance at most $q-r$ from the projective code. Applying
the root bound to a word with finite part $X^r$ gives the reverse
inequality. Hence equality holds and, again by D\"ur's
criterion, the remaining completeness statements. For even $q$,
adjoining the nucleus to a conic gives an MDS extension in
dimension $3$; the dual construction gives one in dimension
$q-1$. These are exactly the exceptional cases, as in
\cite[Proposition~4(2) and equation~(10)]{Kaipa}.
\end{proof}

The functional criterion now translates deep-hole exclusion into
the existence of a solution with distinct coordinates.

\begin{proof}[Proof of Corollary~\ref{cor:symmetric-solutions}]
Put $d=q-k-1$ and consider $C=\RS(\F_q,d)$. Then
$2\le d\le q-3$. In even characteristic, the exclusions
$k\notin\{2,q-3\}$ are exactly $d\notin\{q-3,2\}$.
Define
\[
 u(X)=-\sum_{j=0}^k\delta_jX^{q-1-j}.
\]
Its degree is $q-1-\ell>d$, so Theorem~\ref{thm:main}, applied
with code dimension $d$, shows that $u$ is not deep for $C$.
The functional in \eqref{eq:functional}, now with
$t=q-d-1=k$, satisfies $\mathcal L(X^j)=\delta_j$ for
$0\le j\le k$. By Lemma~\ref{lem:mds}, some $k$-set
$A\subseteq\F_q$ satisfies $\mathcal L(p_A)=0$. Expanding
\[
 p_A(X)=\sum_{r=0}^k(-1)^r e_r(A)X^{k-r}
\]
and using $\delta_j=0$ for $j<\ell$ gives the required equation.
For the last assertion, take $\delta_{k-r}=1$ and all other
coefficients zero, for each $1\le r\le k$.
\end{proof}

The final consequence follows by expressing agreement on a set
of evaluation points in terms of polynomial division.

\begin{proof}[Proof of Corollary~\ref{cor:small-remainder}]
The remainder has degree less than $k$ exactly when a polynomial
of degree less than $k$ agrees with $u$ at all $k+1$ points of
$B$. Such a set exists exactly when $u$ is not a deep hole.
Since $\deg u>k$, Theorem~\ref{thm:main} leaves precisely the
two displayed exceptions.
\end{proof}

\end{document}